\let\bigopsize\bigoplus
\def\bigominus{{\scalerel*{\boldsymbol\ominus}{\bigopsize}}}
\newcolumntype{L}{>{$}r<{$}} 
\newcolumntype{C}{>{$}c<{$}} 
\newcommand{\Z}{\mathbb Z}
\newcommand{\T}{\mathcal T}
\newcommand{\R}{\mathbb R}
\newcommand{\Q}{\mathbb Q}
\newcommand{\N}{\mathbb N}
\newcommand{\q}{\mathbf q}
\newcommand{\G}{\mathcal G}
\newcommand{\im}{\textup{im}\hspace{1pt}}
\newcommand{\cX}{\mathcal X}
\newcommand{\cZ}{\mathcal Z}
\newtheorem{theorem}{Theorem}[section]
\newtheorem{proposition}[theorem]{Proposition}
\newtheorem{lemma}[theorem]{Lemma}
\newtheorem{corollary}[theorem]{Corollary}
\theoremstyle{definition}
\newtheorem{defn}[theorem]{Definition}
\newcommand{\Span}{\operatorname{span}}
\newcommand{\fe}{{\mathfrak e}}
\newcommand{\fm}{{\mathfrak m}}
\newsavebox{\@brx}
\newcommand{\llangle}[1][]{\savebox{\@brx}{\(\m@th{#1\langle}\)}%
  \mathopen{\copy\@brx\kern-0.5\wd\@brx\usebox{\@brx}}}
\newcommand{\rrangle}[1][]{\savebox{\@brx}{\(\m@th{#1\rangle}\)}%
  \mathclose{\copy\@brx\kern-0.5\wd\@brx\usebox{\@brx}}}
\newcommand{\spn}{\text{span}\hspace{1pt}}
\newcommand{\overbar}[1]{\mkern 1.5mu\overline{\mkern-1.5mu#1\mkern-1.5mu}\mkern 1.5mu}
\definecolor{arpit}{RGB}{127,0,0}
\begin{document}

\title{Planon-modular fracton orders}

\author{Evan Wickenden}
\affiliation{Department of Physics, University of Colorado, Boulder, CO 80309, USA}
\affiliation{Center for Theory of Quantum Matter, University of Colorado, Boulder, CO 80309, USA}
\author{Marvin Qi}
\affiliation{Kadanoff Center for Theoretical Physics and Enrico Fermi Institute, University of Chicago, Chicago, IL 60637, USA}
\affiliation{Department of Physics, University of Colorado, Boulder, CO 80309, USA}
\affiliation{Center for Theory of Quantum Matter, University of Colorado, Boulder, CO 80309, USA}
\author{Arpit Dua}
\affiliation{Department of Physics, Virginia Tech, Blacksburg, Virginia 24060, USA}
\affiliation{Department of Physics and Institute for Quantum Information and Matter, Caltech, Pasadena, CA, USA}
\author{Michael Hermele}
\affiliation{Department of Physics, University of Colorado, Boulder, CO 80309, USA}
\affiliation{Center for Theory of Quantum Matter, University of Colorado, Boulder, CO 80309, USA}
\date{\today}

\begin{abstract}
There are now many examples of gapped fracton models, which are defined by the presence of restricted-mobility excitations above the quantum ground state.  However, the theory of fracton orders remains in its early stages, and the complex landscape of examples is far from being mapped out. Here we introduce the class of planon-modular ($p$-modular) fracton orders, a relatively simple yet still rich class of quantum orders that encompasses several well-known examples of type I fracton order. The defining property is that any non-trivial point-like excitation can be detected by braiding with planons. From this definition, we uncover a significant amount of general structure, including the assignment of a natural number (dubbed the weight) to each excitation of a $p$-modular fracton order. We identify simple new phase invariants, some of which are based on weight, which can easily be used to compare and distinguish different fracton orders. We also study entanglement renormalization group (RG) flows of $p$-modular fracton orders, establishing a close connection with foliated RG. We illustrate our general results with an analysis of several exactly solvable fracton models that we show to realize $p$-modular fracton orders, including $\mathbb{Z}_n$ versions of the X-cube, anisotropic, checkerboard, 4-planar X-cube and four color cube (FCC) models. We show that each of these models is $p$-modular and compute its phase invariants. We also show that each example admits a foliated RG at the level of its non-trivial excitations, which is a new result for the 4-planar X-cube and FCC models. We show that the $\mathbb{Z}_2$ FCC model is not a stack of other better-studied models, but predict that the $\mathbb{Z}_n$ FCC model with $n$ odd is a stack of 10 4-planar X-cubes, possibly plus decoupled layers of 2d toric code. We also show that the $\Z_n$ checkerboard model for $n$ odd is a stack of three anisotropic models. 
\end{abstract}

\maketitle

\newpage

\tableofcontents

\section {Introduction}
\label{section:intro}

The existence of fracton order \cite{Chamon_2005,Haah_2011,Nandkishore_2019,Pretko_2020_review} adds significant complexity to the problem of characterization and classification of gapped phases of matter in three dimensions. Fracton models support excitations of restricted mobility, and typically have ground state degeneracy on the three-torus that grows with system size. Unlike topologically ordered systems, fracton models are not fixed points under conventional entanglement renormalization flows \cite{Haah_2014,Shirley_2018,Dua_2020}, and their low-energy properties cannot be captured by topological quantum field theories. There is now a complex landscape of fracton models, but a bigger picture is lacking. The characterization of fracton orders in terms of universal properties is not well understood, and it is not at all clear that fracton orders can be usefully classified. It is thus an important problem to identify simpler yet non-trivial classes of fracton orders whose characterization is tractable.

In this paper, we introduce such a class of abelian fracton orders, dubbed planon-modular ($p$-modular) fracton orders. The defining property is that any non-trivial point-like excitation can be detected by braiding with some planon. Here, a planon is an excitation constrained to move within a plane, and a non-trivial excitation is one that cannot be created by a local operator. This definition encompasses some well-known models, such as the X-cube model \cite{Vijay_2016} and the Chamon model \cite{Chamon_2005}, but excludes others, notably Haah's cubic code \cite{Haah_2011}. Starting from the definition, we uncover a significant amount of structure, including many simple phase invariants, some of which generalize the notion of quotient superselection sectors \cite{Shirley_2019}. We also study entanglement renormalization group (RG) flows of $p$-modular fracton orders, and find a close connection with foliated RG and foliated fracton orders \cite{Shirley_2018}.
In general, $p$-modular fracton orders offer an appealing balance between simplicity and complexity, and their classification may be within the reach of future work.

Some previous works have also been concerned with identifying simpler classes of fracton orders. 
Ref.~\onlinecite{Vijay_2016} defined type I fracton orders to be those with some non-trivial mobile excitations, including fully mobile excitations, planons (mobile within planes) and lineons (mobile along lines). This contrasts with type II orders, where all the non-trivial excitations are fractons, which are immobile in isolation. Among type II fracton orders, exemplified by Haah's cubic code, little is understood about the characterization of phases in terms of universal properties (but see Refs.~\cite{Williamson_2016,Vijay_2016,Song_2024}). On the other hand, more is known about the characterization of certain type I fracton orders, especially, but not only, the X-cube model \cite{Vijay_2016,Shirley_2019_entanglement,Song_2019,Shirley_2019,Shirley_2019_gauging,Pai_2019,Qi_2020,Seiberg_2021,Gorantla_2021,Song_2024}.  However, as a class, type I orders are not simpler than type II orders. A rather trivial observation is that if we stack the X-cube model and Haah's code, the resulting model is type I, but still has all the complexity of Haah's code. Less trivially, there are type I fracton models that have both fractal structure and lineon excitations \cite{Castelnovo_2012, Yoshida_2013, Dua_2019}.

One important class of simpler fracton orders are the foliated fracton orders \cite{Shirley_2018}. We define a gapped system to be foliated if it is a fixed point of the foliated renormalization group (RG) \cite{Shirley_2018}, and call a gapped quantum order foliated if it can be realized by a foliated system. A foliated fracton order is a foliated quantum order with restricted mobility excitations. In standard entanglement RG \cite{Vidal_2007}, one treats decoupled qubits as a free resource that can be integrated out in the RG sense. Foliated RG is a generalization where decoupled two-dimensional layers are treated as a free resource and can be integrated out \cite{Shirley_2018}. Foliated RG was further generalized from a quantum circuit perspective in Ref.~\onlinecite{Wang_2023}; however, in this paper, we reserve the term foliated fracton order for systems that are fixed points under the original foliated RG of Ref.~\onlinecite{Shirley_2018}. Based on known examples that again include the X-cube \cite{Shirley_2018} and Chamon \cite{Shirley_2023} models, foliated fracton orders indeed appear to be simpler than their more general cousins. However, not much is understood about the general structure of foliated fracton orders, and it can be challenging to check whether a model is foliated, which requires exhibiting an RG circuit. We emphasize that the notions of foliated fracton orders and foliated fracton phases are distinct; this is explained in Section~\ref{sec:phase-invariants}, where we discuss different definitions of phases relevant to the results of our paper.

\begin{figure}
    \centering
    \includegraphics[width=0.7\textwidth]{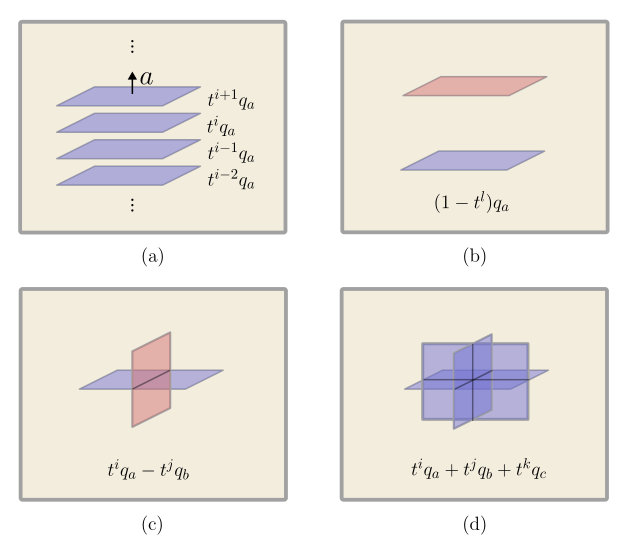}
    \caption{(a) $a$-oriented plane charges partitioning the points of a lattice. Here $q_a$ denotes the charge within a particular plane and $t^i$ represents translation in the normal direction by $i$ units.  The orientation $a$ is a two-dimensional vector subspace of $\R^3$ parallel to the planes shown.   (b) Plane charge content of a planon excitation carrying two non-zero plane charges of the same orientation. (c) Plane charge content of the $w=2$ lineon excitations of the X-cube model.(d) Depiction of the plane charge content of an X-cube model fracton, which has $w=3$. 
    }
    \label{fig:plane-charges}
\end{figure}

In contrast to foliated fracton orders, given a commuting Pauli Hamiltonian, it is generally not very difficult to check whether it realizes a $p$-modular fracton order. In addition, $p$-modularity leads to a significant degree of structure on which a general theory can be built. The key observation is that the fusion and mobility of excitations can be described completely in terms of conserved quantities, which we refer to as plane charges (Fig.~\ref{fig:plane-charges}), and which are defined in terms of mutual statistics between planons and arbitrary point-like excitations. A plane charge is characterized by an orientation and by a (relative) position normal to its orientation (Fig.~\ref{fig:plane-charges}a). By an orientation, we mean a two-dimensional plane in $\R^3$, \emph{i.e.} a two-dimensional vector subspace of $\R^3$. Each excitation of a $p$-modular fracton order carries a finite number of plane charges, which completely characterize its fusion and mobility properties (see Figs.~\ref{fig:plane-charges}b-d). An important quantity is the \emph{weight} $w$ of an excitation, which is the number of different orientations of plane charges it carries. Excitations with $w=1$ are planons, excitations with $w=2$ are lineons, and excitations with $w \geq 3$ may be lineons or fractons depending on whether the relevant planes intersect along a line or at a point.  We say that an excitation is \emph{intrinsic} if it cannot be obtained by combining excitations of lower weight.  This gives finer distinctions among excitations than earlier notions of intrinsic-ness \cite{Song_2019, Prem_2019}, where an excitation is said to be intrinsic if it cannot be obtained as a composite of excitations of higher mobility, \emph{e.g.} an intrinsic fracton is not a composite of planons and/or lineons.

In earlier work, Ref.~\onlinecite{Song_2019} used braiding with planons to label topological charges in the analysis of a class of exactly solvable fracton models, and related this labeling to mobility. However, the action of translation symmetry on these labels, and the resulting module structure of plane charges, which plays a key role in our analysis, was not considered.  Incorporating translation symmetry, the idea of describing fusion and mobility in terms of plane charges was discussed previously in Ref.~\onlinecite{Pai_2019}, where it was shown how to describe the excitations of the X-cube model in this manner. However, in contrast to Ref.~\onlinecite{Song_2019}, the results of Ref.~\onlinecite{Pai_2019} were obtained by writing down, in an \emph{ad hoc} manner, a module of plane charges and the associated isomorphism with the module of superselection sectors. Colloquially, the plane charge description was ``pulled out of thin air,'' instead of being derived in a manner that could be generalized. Here, building on the ideas of Refs.~\onlinecite{Song_2019, Pai_2019}, we define plane charges in a canonical way for any $p$-modular fracton order, incorporating translation symmetry and without making arbitrary choices or \emph{ad hoc} constructions. In particular, the weight of an excitation is a well-defined physical property.

\begin{table}
\begin{tabular}{||c | c | c | c | c | c||} 
 \hline
 Type of order & Non-trivial excitations & Planarity & Foliated & $p$-modular & References \\ 
 \hline
 Trivial order & -- & 0 & Yes & Yes & -- \\ 
 \hline
  3d toric code & Fully mobile (point-like and loops) & 0 & Yes & No & -- \\ 
 \hline
  2d toric code layers & Planons & 1 & Yes & Yes & -- \\ 
 \hline
  Anisotropic model & Lineons, planons & 2 & Yes & Yes & \cite{Shirley_2019} \\ 
 \hline
  X-cube & Fractons, lineons, planons & 3 & Yes & Yes & \cite{Vijay_2016, Shirley_2018} \\ 
 \hline
  4-planar X-cube & Fractons, lineons, planons & 4 & Yes* & Yes & \cite{Slagle_2018, Shirley_2019, Shirley_2023} \\ 
 \hline
  FCC model & Fractons, lineons, planons & 7 & Yes* & Yes & \cite{Ma_2017} \\ 
 \hline
\end{tabular}
\caption{Known distinct foliated and $p$-modular quantum orders that can be realized in $\mathbb{Z}_2$ CSS Pauli codes. For each type of order, the table describes the mobility of non-trivial excitations, gives the planarity $k$ as defined in the text, and indicates whether the model is foliated and $p$-modular; all these properties also hold for $\Z_n$ generalizations of these models studied in this paper.  In some cases, references are given to works where the model was introduced or shown to be foliated.  The $p$-modularity of these models is shown in this paper.  All the models except the 4-planar X-cube and FCC model have been shown to be foliated in previous works by exhibiting an RG circuit.  The latter two models are shown to be foliated in this paper at the level of an algebraic description of the excitations, denoted by the asterisk. To our knowledge, the 4-planar X-cube and FCC models have not previously been shown to be foliated in the literature, but we note that Ref.~\onlinecite{Shirley_2023} showed foliation of the Chamon model which is very closely related to the 4-planar X-cube model.}
\label{tab:orders}
\end{table}

In the simplest setting of $\mathbb{Z}_2$ CSS Pauli codes,\footnote{These are commuting Pauli Hamiltonians written in terms of Pauli $X$ and $Z$ operators acting on qubits, where every Hamiltonian term is either a product of $X$ operators or of $Z$ operators.} only a few examples of foliated quantum orders are known and are given in Table~\ref{tab:orders}. We show in this paper that these examples are all $p$-modular, except for the 3d toric code, which is not $p$-modular for the trivial reason that there are no planons.  We also show that all these fracton orders are foliated at the level of an algebraic description of their excitations, but we do not give explicit RG circuits.  These properties are also shown for $\Z_n$ generalizations of the tabulated fracton orders, as well as the $\Z_n$ checkerboard model.

An important property of a $p$-modular fracton order given in Table~\ref{tab:orders} is the number $k$ of different orientations of lattice planes in which planons move. We refer to $k$ as the planarity and say the system is $k$-planar. If there are no planons, by definition $k=0$. In related terminology, a system is said to be $k$-foliated if it is a fixed point of foliated RG where $k$ distinct orientations of two-dimensional layers are integrated out.

One of our examples, the four color cube (FCC) model,  has not been extensively studied in previous work.  The $\Z_2$ FCC model was introduced in Ref.~\onlinecite{Ma_2017} by coupling four X-cube models and condensing membrane-like excitations, and has thus also been referred to in the literature as the membrane-coupled X-cube model.  To our knowledge the FCC model has not been shown or argued to be foliated previously.  In this paper, we introduce a $\Z_n$ generalization of the FCC model and fully characterize the mobility of its excitations in terms of plane charges.  Surprisingly, the FCC model is 7-planar, with planons mobile in three $\{100\}$ and four $\{111\}$ planes; in the coupled X-cube construction, there are only $\{100\}$ planons before the X-cube models are coupled.

The general relationship between foliated and $p$-modular fracton orders is an interesting question.  In this paper, we obtain a strong constraint on entanglement RG flows of $p$-modular fracton orders:  the system that is integrated out during an RG step can only have planons or bound states of planons as its non-trivial excitations.  One possibility is foliated RG, where one integrates out decoupled 2d layers.  However, there are fracton orders with only planon excitations, but which are not stacks of decoupled layers \cite{Shirley_2020}.  This leaves open the possibility of an RG where such non-decoupled-layer planon-only systems are integrated out, which would mean that there are $p$-modular fracton orders that are not foliated. Whether this actually occurs remains an interesting open question for future work.

We can make progress characterizing $p$-modular fracton orders by finding simple phase invariants.  These are universal properties of a phase that, ideally, can be used to efficiently compare and distinguish different models.  For example, we are interested in understanding whether the orders given in Table~\ref{tab:orders} are all distinct, or whether some of them can be decomposed as a stack of the others.  

An important example of a phase invariant is the quotient superselection sectors (QSS), where one takes a quotient of all excitations by planons \cite{Shirley_2019}.  In an abelian fracton order, this produces an abelian group $\mathcal Q$, which is finite for many type I fracton orders, resulting in an invariant that can easily be compared between different models.  However, because the QSS is just a single abelian group, its power to distinguish among fracton orders is rather limited; in particular, the QSS alone is not sufficient to distinguish among the examples of Table~\ref{tab:orders}.  Additional invariants are needed that, like the QSS, can easily be compared between models.

In this paper, for $p$-modular fracton orders, we generalize the QSS to the weighted quotient superselection sectors (weighted QSS), which consists of a sequence of abelian groups $\mathcal W = (\mathcal W_1, \dots, \mathcal W_k)$.  Here, $k$ is the planarity and $\mathcal W_i$ is obtained by taking a quotient by all excitations of weight $i$ and below; note that $\mathcal W_1 = \mathcal Q$, recovering the original QSS.  Moreover, $\mathcal W_k = S/S = 0$.  The weighted QSS tells us about the presence of intrinsic excitations of each weight, via the fact that $\mathcal W_i \ncong \mathcal W_{i+1}$ if and only if there is at least one intrinsic excitation of weight $i+1$.  Another important property is that if we stack two quantum systems, the weighted QSS combine by direct sum, that is $\mathcal W^{{\rm stack}}_i = \mathcal W^1_i \oplus \mathcal W^2_i$.

Using the weighted QSS, we show that each $\Z_2$ fracton order in Table~\ref{tab:orders} is distinct, in the sense that it is not equivalent to any stack of the other tabulated orders under a finite-depth unitary circuit. In particular, the $\Z_2$ FCC model is not equivalent to a stack of X-cube models.  More generally, because the weighted QSS tells us that the FCC model has intrinsic weight-seven fractons, it cannot be obtained as a stack of any fracton orders with planarity less than seven.  These statements also hold if we consider the coarser notion of foliated phase equivalence (see Sec.~\ref{sec:phase-invariants}), except that a system of 2d toric code layers is foliated-equivalent to a trivial system.  Table~\ref{tab:wqss} gives the weighted QSS for each fracton order of Table~\ref{tab:orders}.  To illustrate the power of the weighted QSS, we observe that if we only consider $\mathcal W_1$, we cannot distinguish the $\Z_2$ FCC model from a stack of 10 4-planar X-cube models, but the full weighted QSS easily makes this distinction.

\begin{table}[]
    \centering
    \begin{tabular}{||c|c||}
        \hline
        Type of order & Weighted QSS 
        \\
        \hline
         2d toric code layers & 0 
         \\
        \hline
         Anisotropic model & $(\Z_2^2, 0)$
         \\
        \hline
         X-cube & $(\Z_2^3,\Z_2,0)$
         \\
        \hline
         Checkerboard & $(\Z_2^6,\Z_2^2,0)$
         \\
        \hline
         4-planar X-cube & $(\Z_2^4,\Z_2,\Z_2,0)$
         \\
        \hline
         FCC model & $(\Z_2^{40},\Z_2^{16},\Z_2^{10},\Z_2^8,\Z_2^2,\Z_2^2,0)$
         \\
         \hline
    \end{tabular}
    \caption{Weighted quotient superselection sectors (QSS) for certain fracton orders realized in $\Z_2$ CSS Pauli codes. In addition to the fracton orders listed in Table~\ref{tab:orders}, we also show the weighted QSS for the $\Z_2$ checkerboard model, which is finite-depth-unitary-equivalent to a stack of two X-cube models \cite{Shirley_2019_checkerboard}; this is reflected in the weighted QSS.  Excluding the checkerboard model, these weighted QSS imply that none of these fracton orders is a stack of the others. The weighted QSS, as well as another invariant -- the constraint module $C$ -- are given for $\Z_n$ generalizations of these fracton orders in Table~\ref{tab:invariants} of Sec.~\ref{sec:phase-invariants}.}
    \label{tab:wqss}
\end{table}

We now outline the remainder of the paper, including a summary of results not mentioned above. In Sec.~\ref{section:definitions}, we introduce an algebraic structure called a $p$-theory, which abstracts away from specific models to describe the fusion, mobility and planon mutual statistics of an abelian fracton order.  Within this framework we describe some of the basic properties of $p$-modular fracton orders; in particular, we define plane charges and explain how to describe fusion and mobility in terms of them.  We also introduce the constraint module $C$, which plays an important role in the theory.  Each generator of $C$ is associated with a linear equation satisfied by physical configurations of plane charges; we prove that $C$ is finite, so only a finite number of linear equations are needed.  Various technical facts about $p$-theories are proved in Appendix~\ref{appendix:technical}, including the result that $C$ is finite.  Section~\ref{sec:phase-invariants} introduces the weighted QSS $\mathcal W$ and some other simple invariants based on the weight of excitations that can be obtained from $\mathcal W$.  Different definitions of phase are discussed that have been used in the context of fracton matter, and it is argued that $\mathcal W$ and $C$ are invariants for all these notions of phase.  In particular, $\mathcal W$ and $C$ are shown to be unchanged under adding decoupled 2d layers (as in foliated fracton phase equivalence), and under coarsening the translation symmetry, which refers to enlarging a system's unit cell by a finite amount. The invariants $C$ and $\mathcal W$ for the $\Z_n$ fracton models studied in Sec.~\ref{section:examples} are summarized in Table~\ref{tab:invariants} of Sec.~\ref{sec:phase-invariants}.

Entanglement RG flows of $p$-modular fracton orders are discussed in Sec~\ref{section:RG}.  We point out that while systems with fully mobile excitations are not $p$-modular but can be foliated, it is reasonable to conjecture that foliated abelian fracton orders without non-trivial fully mobile excitations may always be $p$-modular.  We obtain the constraint on RG flows of $p$-modular theories discussed above.  We then discuss how to determine the RG flow of a given $p$-modular fracton order, showing that upon making a physically reasonable assumption, it is sufficient to study the RG at the level of the $p$-theory without exhibiting a quantum circuit.  We give sufficient conditions for a $p$-theory to be an RG fixed point, and for the RG to be foliated, which are stated in detail and proved in Appendix~\ref{appendix:foliated-RG}.  

Section~\ref{section:examples} studies a series of examples, after a preliminary discussion where we discuss how to work out the plane charge structure of a commuting Pauli Hamiltonian.  In particular we introduce and discuss how to find the $p$-modular exact sequence, which is the main technical tool used to analyze the examples.  We study the X-cube model, layers of 2d toric code, the anisotropic model, the checkerboard model, the 4-planar X-cube model, and the FCC model.  In each case we study a $\Z_n$ version of the model in question for arbitrary $n \geq 2$.  For each model, we compute its weighted QSS and constraint module, and show it admits a foliated RG where layers of $\Z_n$ toric code are integrated out at each step.  While the $\Z_2$ checkerboard model is known to be a stack of two X-cube models, a continuum quantum field theory analysis predicted that the analogous statement does not hold for $n > 2$ \cite{Gorantla_2021}.  We verify this, and also use the phase invariants to predict that, when $n$ is odd, the $\Z_n$ checkerboard model is a stack of three $\Z_n$ anisotropic models.  We verify this prediction by exhibiting a quantum circuit in Appendix~\ref{appendix:checkerboard}. The weighted QSS imply that the FCC model is not a stack of other models analyzed in the literature for even $n$, but for odd $n$ they are consistent with the FCC model being a stack of 10 4-planar X-cube models. Finally, the paper concludes with a discussion of several open questions in Sec.~\ref{section:conclusion}.

\section{Basic properties of $p$-modular fracton orders}
\label{section:definitions}

Throughout the paper, we focus on the point-like excitations of translation-invariant gapped lattice systems in infinite three-dimensional space, describing the fusion and mobility properties largely following Ref.~\onlinecite{Pai_2019}. We only consider bosonic systems, \emph{e.g.} spin models. We denote the set of superselection sectors (also referred to as excitation types or topological charges) by $S$; elements of $S$ are equivalence classes of excitations that can be transformed into one another by acting with local operators. All point-like excitations are taken to be abelian, which makes $S$ into an additive abelian group under fusion of excitations. Moreover, we assume translation invariance with lattice symmetry group ${\cal T} \cong \Z^3$; this makes $S$ into a module over the group ring $\Z \T$, via the action of translation on excitations. We assume that $S$ is finitely generated as a $\Z\T$-module; that is, we assume that there are a finite number of generating excitation types from which all others can be obtained by acting with translations and taking arbitrary finite sums and differences. We further assume that every excitation type has finite fusion order, \emph{i.e.} for every element $x \in S$ there is a positive integer $m$ such that $mx = 0$. Under these assumptions, there is a smallest positive integer $n$ such that $nx = 0$ for all $x \in S$. It is more convenient to work with the group ring $R = \Z_n \T$ and view $S$ as an $R$-module, and we will do this throughout the paper.\footnote{Formally, this can be understood by noting that the ideal $(n) \subset \Z \T$ acts trivially on $S$, which is thus naturally a $\Z\T/(n)$-module, and $\Z_n\T \cong \Z\T / (n)$.} Elements $r \in R$ are formal linear combinations of group elements $g \in \T$. That is, $r = \sum_{g \in \T} c_g \, g$, where $c_g \in \Z_n$ and only finitely many $c_g$ are non-zero. There is an antipode operation mapping $R \to R$; the antipode of $r$ is denoted $\bar{r} = \sum_{g \in \T} c_g \, g^{-1}$, so in particular $\bar{g} = g^{-1}$.

The above assumptions hold in the setting of translation-invariant exact Pauli codes (see Sec.~\ref{section:examples}), but are not limited to fracton orders that can be realized with commuting Pauli Hamiltonians. However, our setup is not entirely general; there are fractonic infinite-component Chern-Simons theories that violate the finite fusion order assumption \cite{Ma_2022}. Because we will be mainly interested in $p$-modular fracton orders, we make one further assumption, namely that there are no non-trivial fully mobile excitations; the formal statement of this assumption is given below. This is used to prove some useful technical properties before assuming $p$-modularity, and it must hold in any $p$-modular fracton order, because a fully mobile excitation must have trivial braiding with any planon. However, this is a weaker assumption than $p$-modularity; for example, Haah's cubic code has no non-trivial fully mobile excitations but is not $p$-modular.

As in Ref.~\onlinecite{Pai_2019}, we use translation symmetry as a tool to describe the mobility of excitations. However, we do not want to study translation symmetry enriched fracton orders \cite{Pretko_2020}. To address this issue, we can impose translation symmetry in a ``coarse'' manner, where we allow for breaking of $\T$ down to subgroups $\T' \subset \T$, where $\T'$ is arbitrary so long as $\T' \cong \Z^3$ \cite{Haah_2013,Haah_2014,Pai_2019,Hermele_KITP_talk, Hermele_UQM_talk, Hermele_inprep}. Such symmetry breaking corresponds to a finite (but possibly large) enlargement of the crystalline unit cell. This is discussed further below in Sec.~\ref{sec:phase-invariants} when we discuss phases and phase invariants.

To bring in lattice geometry, we choose an embedding $\eta : {\cal T} \hookrightarrow \R^3$. The map $\eta$ is required to be an injective group homomorphism, where the group structure on $\R^3$ is the standard vector addition. We further assume that $\eta(t_1), \eta(t_2), \eta(t_3)$ form a vector space basis for $\R^3$ where $\{ t_1, t_2, t_3 \}$ is any set of generators for ${\cal T}$. We need the embedding $\eta$ because $\T$ is an abstract group without any spatial information, but we want to use translation symmetry to describe mobility of excitations.

For any element $x \in S$, we denote its stabilizer group by $\T_x = \{ g \in \T | gx = x\}$. Note that any non-trivial subgroup of $\T$ is isomorphic to $\Z$, $\Z^2$ or $\Z^3$. We say an element $p \in S$ is a \textit{planon} if $\T_p \cong \Z^2$, while fully mobile excitations $x \in S$ have $\T_x \cong \Z^3$. Note that in our terminology, fully mobile superselection sectors are not considered planons. As mentioned above, we assume that there are no non-trivial fully mobile excitations; formally, this is the statement that for any $x \in S$, $\T_x \cong \Z^3$ implies $x=0$. Lineons are excitations with $\T_x \cong \Z$, while fractons have $\T_x = 0$.

Using the embedding $\eta$, we define the orientation of any excitation $x \in S$ to be the subspace $a(x) \equiv \text{span}\hspace{1pt} (\eta(\T_x)) \subset \R^3$.
Then, by Proposition~\ref{prop:sbgrp}, an element $p \in S$ is a planon if and only if $a(p)$ is a two-dimensional subspace of $\R^3$. The same statement holds for fractons, lineons and fully mobile excitations, where $a(x)$ is a zero, one and three dimensional subspace, respectively. Whether two excitations have the same orientation is independent of the choice of $\eta$.
We denote by $A$ the set of distinct orientations of all planons in $S$. For each $a \in A$, $S_a \subset S$ is defined to be the submodule generated by $a$-oriented planons. It is proved in Appendix~\ref{appendix:technical} that every non-zero element of $S_a$ is an $a$-oriented planon, using the assumption of no non-trivial fully mobile excitations. The submodule generated by all planons is denoted $S^{(1)} = \sum_{a \in A} S_a$.\footnote{If $M$ is a module and $M_i \subset M$ a (possibly infinite) collection of submodules, it is standard to define the sum $\sum_i M_i \equiv \{ \sum_i m_i | m_i \in M_i, \text{ finitely many } m_i \neq 0 \}$. It is easily seen that $\sum_i M_i$ is a submodule of $M$; in fact, it is the smallest submodule that contains all the $M_i$.} In Appendix~\ref{appendix:technical} we prove that in fact $S^{(1)}$ is a direct sum, $S^{(1)} = \bigoplus_{a\in A} S_a$. Because $R$ is a Noetherian ring, submodules of finitely generated modules (such as $S$) are finitely generated, and it follows that $A$ is a finite set (see Appendix~\ref{appendix:technical} for details). The cardinality of $A$ is independent of the embedding $\eta$.

So far, with the data $(\T, S, \eta)$, we have specified a \emph{fusion theory of excitations}, which captures both fusion and mobility properties. To study $p$-modularity, we also need to include information on mutual statistics between planons and other excitations; the resulting structure will be referred to as a $p$\emph{-theory of excitations}. We use this terminology because a $p$-theory of excitations does not include all statistical information -- for example, the X-cube and semionic X-cube model have the same $p$-theory, but they are known to be distinct due to lineon-lineon exchange statistics \cite{Pai_2019}. A \emph{theory of excitations}, not defined in this paper, should include all the data of a $p$-theory of excitations, together with enough additional data to specify the outcome of any statistical process.

On physical grounds, there is a bilinear form $\langle \cdot, \cdot \rangle_a : S_a \times S \to \Z_n$
that records the statistical phase acquired when a planon in $S_a$ is moved in a large loop around a charge in $S$.\footnote{The elements of $S_a$ and $S$ are of course equivalence classes of the actual excitations involved in such a statistical process, but the statistical phase is independent of the representative excitations.} To make the mutual statistics and thus the bilinear form well-defined, we need to specify a sense of motion (clockwise or counterclockwise) for each planon remote detection process. For each orientation $a \in A$ we make an arbitrary two-fold choice of unit vector $\hat{n}_a \in \R^3$ normal to the given orientation. The bilinear form then gives the phase accumulated when the planon is moved counterclockwise, looking toward the ``tip'' of the vector $\hat{n}_a$, \emph{i.e.} viewing the process from the $+\hat{n}_a$-direction. If we change $\hat{n}_a \to -\hat{n}_a$, then to describe the same physical system we should also change $\langle \cdot, \cdot\rangle_a \to -\langle \cdot, \cdot\rangle_a$. Sometimes we omit the $a$-subscript on the bilinear form when it is clear from context.

For any planons $p, q \in S_a$ and charges $x, y \in S$, the $a$-oriented bilinear form is taken to satisfy (omitting subscripts):
\begin{enumerate}
    \item (Translation Invariance). $\langle p, x \rangle = \langle gp, gx \rangle$ for all $g \in \T$.
    \item (Bilinearity). 
     $\langle p + q, x \rangle = \langle p,x\rangle + \langle q, x \rangle$ and $\langle p, x + y\rangle = \langle p, x \rangle + \langle p, y \rangle$.
    \item (Symmetry of planon mutual statistics). $\langle p, q\rangle = \langle q, p \rangle$.
    \item (Locality). 
    Fix $p \in S_a$ and $x \in S$, and let $g \in \T$ be a translation such that $g^i p = p$ implies $i = 0$. That is, $g$ is transverse to the plane of mobility of $p$. Then there is an integer $m$ such that $|i| \geq m$ implies $\langle g^i p, x \rangle = 0$. Note that $m$ may depend on $p$, $x$ and $g$.
\end{enumerate}
The first three properties hold on general grounds for any translation-invariant, gapped abelian system. The fourth property is less obvious, and we give an argument that it holds in exact Pauli codes.
To see this, we consider the Wilson loop operator $W$ that transports a planon $p$ around a large counterclockwise loop of linear size $L$. (Note that $L$ is not the system size.) The operator $W$ is a ribbon supported on the boundary of a disc-shaped region $D$ whose boundary is the loop. Moreover, the thickness of $W$ in the direction normal to the disc is a constant not depending on $L$. By exactness (see Sec.~\ref{section:examples}), since $W$ produces no excitations, it is necessarily a product of stabilizers. In addition, as illustrated in Fig.~\ref{fig:smaller-loops}, $W$ can be written as a product of ${\cal O}(L^2)$ operators $W_i$ transporting $p$ around smaller loops of length ${\cal O}(1)$ within the disc $D$, where the operators $W_i$ are related to one another by translation symmetry. Each $W_i$ is product of ${\cal O}(1)$ stabilizers, which implies that we can write $W$ as a product of stabilizers within a slab $V = D \times [-\ell, \ell]$; here $\ell$ is a length of ${\cal O}(1)$ and the region $V$ is simply a thickened version of $D$. Therefore, the superselection sectors that $p$ can detect are those whose excitations are supported within $V$, and the locality property follows. In fact, we expect the locality property to hold in general, because it seems unreasonable for the $\Z_n$-valued statistics between $g^i p$ and $x$ to be non-zero for arbitrarily large transverse separation between the two excitations. We note that if one removes the assumption of finite fusion order, then there are systems that violate the locality property, but in this case the mutual statistics angle can be arbitrarily small and even irrational \cite{Ma_2022}.

\begin{figure}
    \centering
    \includegraphics[width=0.5\linewidth]{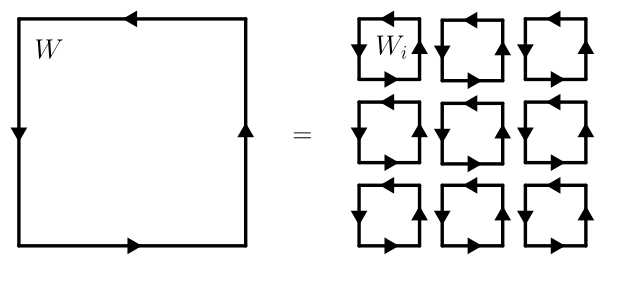}
    \caption{Birds-eye view of the ribbon operator $W$ and its decomposition into a product of ${\cal O}(L^2)$ smaller ribbon operators $W_i$.}
    \label{fig:smaller-loops}
\end{figure}

Recall that a system is $p$-modular when every non-trivial point-like excitation can be detected by braiding with some planon. Formally, $p$-modularity is the property that for any non-zero superselection sector $x \in S$, there exists an orientation $a \in A$ and a planon $p \in S_a$ such that $\langle p, x \rangle_a \neq 0$. This is a kind of remote-detectability property; every non-trivial excitation is remotely detectable by braiding with some planon. It is straightforward to show that $p$-modularity can equivalently be stated as a remote distinguishability property: given $x, y \in S$ such that $x \neq y$, then there exists an orientation $a \in A$ and a planon $p \in S_a$ such that $\langle p, x \rangle_a \neq \langle p, y \rangle_a$.

To summarize our discussion, a $p$-theory of excitations consists of the 5-tuple of data \\ $P = (\T, S, \eta, \{ \hat{n}_a \},  \{ \langle \cdot, \cdot\rangle_a \})$, where the last two entries denote the choice of unit normal vector for each orientation and the set of $a$-oriented bilinear forms. The conditions satisfied by the data are given in the discussion above. In Appendix~\ref{appendix:technical} we give a definition of isomorphism of $p$-theories, and write $P_1 \cong P_2$ when two theories are isomorphic. Isomorphism essentially tells us that two $p$-theories are identical up to arbitrary choices in how $\T$ and $S$ are presented. It is part of a formal definition of phase equivalence of $p$-theories, but is not the full story, in the sense that isomorphic $p$-theories should be phase equivalent (but not necessarily vice versa) for any notion of phase equivalence. We give a further discussion of phase equivalence in Sec.~\ref{sec:phase-invariants}, but defer a full formal treatment to future work.

An important operation on quantum systems is stacking, where we combine two systems into a new one by taking a tensor product of on-site Hilbert spaces and choosing the Hamiltonian to be a decoupled sum of the original Hamiltonians. It will be useful to define the corresponding stacking operation on $p$-theories. We consider two $p$-theories $P_1 = (\T, \eta, S_1, \{ \hat{n}^1_a \}, \{ \langle \cdot , \cdot \rangle^1_a \} )$ and $P_2 = (\T, \eta, S_2, \{ \hat{n}^2_{b} \}, \{ \langle \cdot , \cdot \rangle^2_b \} )$, with sets of orientations $A$ and $B$, respectively. Note that the translation group $\T$ and map $\eta : \T \to \R^3$ is the same in both theories; stacking is only defined when these data agree. The stacked theory is denoted $P_1 \ominus P_2$, and its module of superselection sectors is the direct sum $S_1 \oplus S_2$.  The full data of the stacked theory is $P_1 \ominus P_2 = (\T, \eta, S_1 \oplus S_2, \{ \hat{n}^s_a \}, \{ \langle \cdot, \cdot \rangle^s_a \} )$. Here the set of orientations is $A_s = A \cup B$, and we choose $\hat{n}^s = \hat{n}^1_a$ if $a \in A$ and $\hat{n}_a^s = \hat{n}^2_a$ if $a \in B$. (If $a \in A$ and $a \in B$, we require $\hat{n}^1_a = \hat{n}^2_a$ as a condition for $P_1 \ominus P_2$ to be defined.) The bilinear forms are given by
\begin{equation}
\langle p, x \rangle^s_a = \langle p_1, x_1 \rangle^1_a + \langle p_2, x_2 \rangle^2_a \text{,}
\end{equation}
for any $a$-oriented planon $p \in S_1 \oplus S_2$ and any $x \in S_1 \oplus S_2$, for which we have unique decompositions $p = p_1 + p_2$ and $x = x_1 + x_2$ with $p_i, x_i \in S_i$. The first (second) term is defined to be zero if $a \notin A$ ($a \notin B$). This definition encodes the desired property that excitations in $P_1$ are transparent to planons in $P_2$, and vice versa.

In the setting of a $p$-theory of excitations, we now give a formal definition of plane charges. Choosing an orientation $a \in A$, we define two superselection sectors $x,y \in S$ to be $a$-indistinguishable, written $x \sim_a y$, if $\langle p, x\rangle_a = \langle p, y \rangle_a$ for all $p \in S_a$. This is clearly an equivalence relation, and the equivalence classes are referred to as $a$-oriented plane charges, or just $a$-plane charges.  The set of $a$-plane charges is denoted $Q_a$, and is an $R$-module. It is straightforward to show that $Q_a = S / K_a$,
where
\begin{equation}
    K_a = \{ x \in S | \text{$\langle p, x \rangle_a = 0$ for all $p \in S_a$} \}
\end{equation}
is referred to as the submodule of $a$-transparent superselection sectors. The associated quotient map is written $\pi_a : S \to Q_a$, and we think of $\pi_a(x)$ as the $a$-plane charge carried by $x \in S$. 

In the notation above, a $p$-theory is $p$-modular if and only if, for any $x,y \in S$, $x \sim_a y$ for all $a \in A$ implies $x = y$. It is often convenient to work with a slightly different characterization of $p$-modularity. We define the \emph{plane charge module} $Q = \oplus_{a \in A} Q_a$, and the map $\pi : S \to Q$ by $\pi = \sum_{a \in A} i_a \circ \pi_a$, where $i_a : Q_a \hookrightarrow Q$ is the natural inclusion. Then it is easy to see that $p$-modularity is equivalent to the condition that $\pi$ is injective. This means that $\pi(S)$ is an isomorphic copy of $S$ as a submodule of the plane charge module, and $\pi(x)$ uniquely represents each superselection sector $x \in S$ in terms of the plane charges it carries. We often identify $S$ with $\pi(S)$ in $p$-modular theories. The property of $p$-modularity plays a key role in our results, and, {\bf for the remainder of the main text after this section, we always assume $p$-theories to be $p$-modular unless stated otherwise.} This assumption is not always made in the Appendices.

The injection $\pi : S \to Q$ is useful in large part because it gives a geometrical picture of the mobility properties of superselection sectors. This is rooted in the fact that plane charges have the same mobility properties as planons. Recall that non-zero elements $p \in S_a$ are planons, and by definition $\T_p \cong \Z^2$. In Appendix~\ref{appendix:technical} it is shown that for any non-zero $a$-plane charge $q_a \in Q_a$, we also have $\T_{q_a} \cong \Z^2$.  Moreover, $a$-plane charges are mobile only within $a$-oriented planes; formally this is the statement that $\spn \eta (\T_{q_a}) = a$.  If we act with a translation $t$ transverse to $a$, then $t q_a \neq q_a$ for any non-zero $q_a \in Q_a$. 

A key quantity characterizing a $p$-modular $p$-theory is the number of orientations of planons $k = |A|$. We say that a theory with $k = |A|$ is $k$-planar, and refer to $k$ as the planarity. In the literature, the term $k$-foliated has sometimes been used instead of $k$-planar; we introduce and use the latter term here, preferring to reserve the term ``foliated'' for properties having to do with foliated RG or foliated fracton phases. Indeed, it is not clear if every $k$-planar $p$-modular fracton order can be realized as a fixed point of foliated RG; some results pertaining to this question are discussed in Sec.~\ref{section:RG}.

Given any $R$-module $M$, we define the stabilizer group
\begin{equation}
\T_M = \{ g \in \T | g m = m  \text{ for all } m \in M \} \text{.}
\end{equation}
For the module of $a$-oriented planons, we have $\T_{S_a} \cong \Z^2$. Moreover, in a $p$-modular theory, $\T_{Q_a} = \T_{S_a}$. These results are proved in Appendix~\ref{appendix:technical}.
It is convenient to define
\begin{equation}
\T_a \equiv \T_{Q_a} = \T_{S_a} \text{.}
\end{equation}
Because $\T_a$ acts trivially on $Q_a$, we can think of $Q_a$ as an $R_a \equiv \Z_n \T / \T_a$ module. We recall the following mathematical fact: if $\rho : G \to H$ is a group homomorphism, and $M$ is a $\Z_n H$-module, then there is a $\Z_n G$-module $\rho^* M$ which is identical to $M$ as an abelian group and with $G$-action given by $g \cdot m = \rho(g) m$ for any $m \in M$ and $g \in G$. In the present context we have the quotient map $\rho_a : \T \to \T / \T_a$, and if we view $Q_a$ as an $R_a$-module, the $R$-module structure is recovered by considering $\rho^*_a Q_a$. We sometimes view $Q_a$ as an $R$-module and sometimes as an $R_a$-module, slightly abusing notation by using the same symbol $Q_a$ in both cases.

For any $p$-theory of excitations (not necessarily $p$-modular), a key structural property is that the $R$-module $C \equiv \operatorname{coker} \pi = Q / \pi(S)$ has finitely many elements; this result is proved in Appendix~\ref{appendix:technical}.\footnote{While $C$ is guaranteed to be a finitely generated $R$-module, finitely generated $R$-modules can have infinitely many elements. An example is $S$ itself for any of the fracton models we consider in this paper.} The associated surjective quotient map is denoted $\omega : Q \to C$. We refer to $C$ as the \emph{constraint module} and $\omega$ as the \emph{constraint map}, because $\omega$ expresses $\Z_n$-linear constraints that vanish on elements of $\pi(S) \subset Q$; that is, for $q \in Q$, $\omega(q) = 0$ if and only if $q \in \pi(S)$. Plane charge configurations $q \in Q$ with $q \notin \pi(S)$ are unphysical; they cannot be realized by any excitation. So the constraints encoded by $C$ and $\omega$ are satisfied precisely by physical plane charge configurations. 

We conclude this section by discussing some useful equivalent ways of expressing the planon statistics data of a $p$-modular $p$-theory. First, we can equivalently think of the bilinear forms $\langle \cdot, \cdot \rangle_a$ as maps
\begin{equation}
\langle \cdot , \cdot \rangle_a : \operatorname{ker} \omega|_{Q_a} \times Q_a \to \Z_n \text{.}
\end{equation}
This is valid because $\pi_a |_{S_a}$ gives an isomorphism $S_a \cong Q_a \cap \pi(S) = \operatorname{ker} \omega|_{Q_a}$, and because $\langle p, x \rangle_a$ only depends on $\pi_a(x) \in Q_a$. Viewed in this way, the bilinear forms satisfy the same properties enumerated above, where the symmetry property becomes $\langle q, q' \rangle_a = \langle q', q \rangle_a$ if both $q,q' \in \operatorname{ker} \omega|_{Q_a}$.

A more non-trivial repackaging of the statistical data is accomplished by defining new bilinear forms taking values in $R_a$, namely $\llangle \cdot, \cdot \rrangle_a : \ker \omega|_{Q_a} \times Q_a \to R_a$, according to
\begin{equation}
    \llangle 
    p, q \rrangle_a
    \equiv
    \sum_{[g] \in \T / \T_a} \langle g^{-1} p , q \rangle_a [g].
    \label{eq:planon-pairing}
\end{equation}
Here $p \in \operatorname{ker} \omega|_{Q_a}$, $q \in Q_a$, and we express the coset of $\T_a$ containing $g \in \T$ by $[g]$; this expression is well-defined because $g^{-1} p$ only depends on $[g]$. By locality, the right hand side of Eq.~(\ref{eq:planon-pairing}) contains only a finite number of non-zero terms and is thus well-defined in $R_a$. We note that $R_a$ is naturally an $R$-module, where the $R$-action is given by the surjective ring homomorphism $R \to R_a$ induced by the quotient map $\rho_a : \T \to \T / \T_a$. The value of $\llangle p, \pi_a(x) \rrangle_a$ for $x \in S$ encodes the mutual statistics of $p$ and all its distinct translates with $x$. For any $r \in R$, the new bilinear forms satisfy
\begin{equation}
 \llangle r p, q \rrangle_a =  \llangle  p, \bar{r} q \rrangle_a
 = r \llangle  p, q \rrangle_a \text{.}
\end{equation}

Specifying the bilinear form $\llangle \cdot, \cdot \rrangle_a$ is equivalent to giving the induced $R$-module homomorphism $\Phi_a : \ker \omega|_{Q_a} \to Q_a^*$, defined by $\Phi_a(p) = \llangle p, \cdot \rrangle_a$. Here $Q^*_a \equiv \overline {\text{Hom}}_R(Q_a, R_a)$, where the bar indicates that the elements are $R$-module anti-homomorphisms, \emph{e.g.} $\Phi_a(p)(r q) = \bar{r} \Phi_a(p)(q)$. It is straightforward to check that $p$-modularity implies the map $\Phi_a$ is injective. We use the maps $\Phi_a$ in our treatment of RG for $p$-modular fracton orders (see Sec.~\ref{section:RG}), and to specify the statistical data of the examples studied in Sec.~\ref{section:examples}.

\section {Phase invariants}
\label{sec:phase-invariants}

Here we describe a family of phase invariants of $p$-modular fracton orders. The key property underlying most of the invariants is the weight of superselection sectors, as described in Sec.~\ref{section:intro} and defined formally below. The invariants capture universal aspects of mobility of excitations; some of them generalize the notion of quotient superselection sectors developed in previous work \cite{Shirley_2019}. Because our invariants only depend on the data of a $p$-theory, they cannot distinguish all fracton orders; for example, the $\Z_2$ X-cube and semionic X-cube models have the same $p$-theory but are distinct phases (under some definitions of phase) due to different lineon-lineon exchange statistics \cite{Pai_2019}. However, the invariants are sensitive enough to distinguish the examples discussed in this paper, which is not possible with previously available invariants.  More generally, our invariants provide a powerful set of tools with which to distinguish $p$-modular fracton orders. As noted in Sec.~\ref{section:definitions}, throughout this section (and throughout the remainder of the main text) we always assume that $p$-theories are $p$-modular.

Any discussion of phase invariants for fracton orders raises the question of what we mean by a fracton phase of matter.  This is a  subtle and non-trivial question, and requires some discussion. At the end of this section, we review the different relevant definitions of gapped phases following recent work of one of us in Refs.~\onlinecite{Hermele_KITP_talk, Hermele_UQM_talk, Hermele_inprep}. We argue that our invariants are phase invariants for \emph{all} the relevant definitions of phase.  This means that, for the purposes of this paper, we do not need to choose among the available definitions.  

The weight of a superselection sector $x \in S$ is defined to be the number of orientations of planons that detect $x$, \emph{i.e.}
\begin{align}
    w(x) \equiv |A_x|, && A_x = \{a \in A | \pi_a(x) \neq 0 \} \text{.}
\end{align}
Note that $w(x) = 0$ if and only if $x = 0$. 
In addition, we say $x$ is intrinsic if $x \neq 0$ and $x$ cannot be written as a finite sum $x = \sum_i y_i$ with each $y_i \neq 0$ and $w(y_i) < w(x)$. For example, a composite of two planons of different orientations is a $w = 2$ lineon, but is not intrinsic. However, X-cube model fracton excitations are intrinsic with $w = 3$, because they cannot be formed as a composite of lower weight excitations (lineons and planons). 

Now we introduce our phase invariants. The simplest of these is the \emph{intrinsic weight sequence}
\begin{equation}
W_{int} \equiv \Big\{ w \in \{ 2,3,\dots\} \Big| \exists \text{ intrinsic } x \in S \text{ with } w = w(x) \Big\} \text{.}
\end{equation}
That is, $W_{int}$ is the set of weights $w \geq 2$ carried by some intrinsic excitation. We do not include $w=1$, because all weight-1 excitations (planons) are trivially intrinsic. Any non-trivial $p$-modular fracton order has planons, so including $w=1$ conveys little information. Moreover, if we did include $w=1$, then $W_{int}$ would not be an invariant of foliated fracton phases. The \emph{maximum intrinsic weight} $w_{max}$ is the largest element of $W_{int}$.

To give some examples for $\Z_n$ fracton models, the X-cube model has $W_{int} = \{2,3 \}$ as there are intrinsic $w=2$ lineons and $w=3$ fractons, while the anisotropic model has $W_{int} = \{2 \}$ since there are intrinsic lineons and no excitations with weight $w > 2$. The 4-foliated X-cube model has $w_{max} = 4$, while the FCC model for even $n$ has $w_{max} = 7$, where in both cases the maximum intrinsic weight is carried by the simplest ``elementary'' fracton excitations. This is already enough to show that the FCC model (for even $n$) is not a stack of previously studied examples. These statements are proved by computing the weighted quotient superselection sector invariants introduced below, from which the intrinsic weight sequence can be obtained.

A more refined invariant generalizes the quotient superselection sector (QSS) invariant of \cite{Shirley_2019}, constructed there as an invariant of foliated fracton phases. The QSS invariant is the quotient module $S / S^{(1)}$, \emph{i.e.} one considers topological charges modulo planons. Using the weight structure of excitations, we define the \emph{weighted quotient superselection sectors} (QSS) in a $k$-planar system as the sequence of quotients
\begin{align}
    \mathcal W = ({\cal W}_1, \dots, {\cal W}_k) \equiv  (S / S^{(1)}, S/ S^{(2)}, \dots, S / S^{(k)}), 
\end{align}
where $S^{(i)}$ is the submodule generated by excitations of  weight at most $i$. We observe that $S^{(1)} \subset S^{(2)} \subset \cdots \subset S^{(k)} = S$. The first entry of $\mathcal W$ is the same as the QSS invariant. Since $S^{(k)} = S$, the sequence $\mathcal W$ always terminates with $0$. There are no intrinsic excitations of weight $i$ if and only if $S^{(i)} = S^{(i-1)}$, so ${\cal W}_i = {\cal W}_{i-1}$ if and only if $i$ does not appear in $W_{int}$. In this manner we can reconstruct the intrinsic weight sequence $W_{int}$ from the weighted QSS. The maximum intrinsic weight is the smallest $i \leq k$ for which ${\cal W}_i = 0$. In the weighted QSS invariant, we forget about the module structure and think of the ${\cal W}_i$ as abelian groups; we do this because the module structure of ${\cal W}_i$ is not an invariant for all the types of phases we wish to consider.

Identifying $S$ with $\pi(S)$, the modules $S^{(i)}$ can be computed using the constraint map, per the expression
\begin{align}
    S^{(i)}
   = S^{(i-1)} +
    \sum_{\{a_1,\dots, a_i \} \subset A} \ker \omega|_{Q_{a_1} + \dots + Q_{a_i}} ,
    \label{eq:Si-formula}
\end{align}
where the sum is over the distinct $i$-element subsets of $A$. This is a direct sum for $i=1$, but not in general for $i > 1$. While the $S^{(i-1)}$ term on the right-hand side is superfluous as it is automatically contained within the second term, it is often a helpful reminder in doing explicit computations. The computation of the entries in Table~\ref{tab:invariants} is described in Sec.~\ref{section:examples}.

One final invariant was already introduced above in Sec.~\ref{section:definitions}, and is the constraint module $C = Q / \pi(S)$. As with the weighted QSS, the module structure of $C$ is not a phase invariant, so in this context we forget the module structure and think of $C$ as an abelian group. As proved in Appendix~\ref{appendix:technical}, $C$ is finite. The invariants $C$ and $\mathcal W$ are quoted in Table~\ref{tab:invariants} for the examples of $p$-modular $\Z_n$ fracton models studied in Sec.~\ref{section:examples}. We observe that $C \cong \mathcal W_1$ as abelian groups for all the examples studied in this paper. At present, we do not have an explanation for this phenomenon and we do not know how general it may be; these questions will be interesting to consider in future work.

It is important to understand when the invariants are trivial, and also how they behave under stacking of quantum systems. We call a $p$-theory \emph{essentially 1-planar} if it has $S = S^{(1)}$, \emph{i.e.} if all non-trivial excitations are planons and composites thereof. This property is trivially the same as $\mathcal W_1 = 0$; moreover, $\mathcal W_1 = 0$ implies all the $\mathcal W_i = 0$.
In Appendix~\ref{appendix:technical}, we prove that $C = 0$ if and only if a $p$-theory is essentially 1-planar. Next, we consider the stack $P_s = P_1 \ominus P_2$ (see Sec.~\ref{section:definitions} for the definition of stacking in terms of $p$-theories). The intrinsic weight sequences combine by taking the union, \emph{i.e.} $W^s_{int} = W^1_{int} \cup W^2_{int}$. Moreover, because $S^{(i)}_s = S^{(i)}_1 \oplus S^{(i)}_2$, the weighted QSS are given by the direct sums ${\cal W}^s_i = {\cal W}^1_i \oplus {\cal W}^2_i$. The constraint module is also a direct sum, $C_s = C_1 \oplus C_2$.

\begin{table*}[]
    \centering
    \begin{tabular}{c c l} 
    \text{Model} & \text{Constraint Module} & \text{Weighted QSS} 
    \\
    Toric code layers &  0 & 0 
    \\
    Anisotropic model
    &
    $\Z_n^2$ & $(\Z_n^2, 0)$ 
    \\
    X-cube
    &
    $\Z_n^3$ & $(\Z_n^3, \Z_n, 0)$ 
    \\
    Checkerboard
    &
    $\Z_n^6$
    &
    $(\Z_n^6, \Z_{(n,2)}^2, 0)$
    \\
    4-foliated X-cube
    & $\Z_n^4$ & $(\Z_n^4, \Z_n, \Z_n, 0)$
    \\
    Four Color Cube
    &
    $\Z_n^{40}$
    &
    $(\Z_n^{40}, \Z_{(n,2)}^6\times \Z_{n}^{10}, \Z_n^{10}, \Z_{(4,n)}^2 \times \Z_{(2,n)}^6, \Z_{(2,n)}^2, \Z_{(2,n)}^2, 0)$
    \end{tabular}
    \caption{
        Constraint modules and weighted QSS 
        for the $p$-modular $\Z_n$ fracton models studied in Sec.~\ref{section:examples}. Here, $(r,s)$ denotes the greatest common divisor of integers $r$ and $s$.
    }
    \label{tab:invariants}
\end{table*}

Now we turn to the issue of what we mean by a fracton phase and thus by a phase invariant. In the most familiar definition of gapped phase, referred to here as the standard definition, two gapped systems are in the same phase if their ground states are related by a finite-depth unitary circuit, perhaps after stacking with trivial product states.  With only these ingredients it is not clear how to describe mobility of excitations, and the simplest way to incorporate this is to impose lattice translation symmetry.  More generally, even apart from fracton systems, some kind of homogeneity condition needs to be added to the standard definition to avoid pathologies;\footnote{An example of the kind of pathology to be avoided is a system comprised of a gapped boundary between two different semi-infinite gapped phases. While such a system is perfectly physical, there is no way to decide which phase it belongs to.} translation invariance is the simplest way to achieve this.  This leads to the notion of \emph{translation-invariant gapped phases}, or, more specifically for fracton systems, \emph{translation-invariant fracton phases}.  

While translation-invariant fracton phases have the advantage of a simple definition, there are some significant disadvantages.  One stems from the fact that imposing translation symmetry leads to symmetry-enriched quantum orders \cite{Wen_2002,Wen_2003,Kitaev_2006,Essin_2013,Mesaros_2013,Barkeshli_2019}.  The basic picture is that in the presence of some symmetry, including translation, there can be multiple phases that all become the same when the symmetry is broken. In the context of topological phases, these are referred to as symmetry enriched topological (SET) phases.  Similar phenomena occur in fracton systems \cite{You_2020, Pretko_2020}, and, in particular, there are translation-symmetry enriched fracton orders \cite{Pretko_2020}.  The phenomenon of symmetry enrichment presents an obstacle to understanding fracton orders. It is conceptually simpler to first understand phases in the absence of symmetry, where there are fewer possibilities, and then later to understand the possible symmetry enrichments. Indeed, this is how the theory of SET phases is organized. The most obvious solution is to break or ignore translation symmetry, but it is then unclear how to describe excitation mobility, which we believe is crucial in fracton systems.

One way to address this issue is to impose translation symmetry in a coarse manner \cite{Haah_2013,Haah_2014,Pai_2019,Hermele_KITP_talk,Hermele_UQM_talk,Hermele_inprep}. This means that given a translation group $\T \cong \Z^3$, we always allow for breaking the symmetry down to a subgroup $\T' \subset \T$, so long as also $\T' \cong \Z^3$.  This corresponds to allowing for finite, but perhaps large, enlargements of the crystalline unit cell.  Coarse translation symmetry still allows us to describe mobility of excitations over long distances, but eliminates at least some invariants associated with translation symmetry enrichment.

A separate disadvantage of translation-invariant fracton phases -- even if translation symmetry is only imposed coarsely -- is that they do not interact nicely with the foliated renormalization group (RG) under which some simple fracton models are fixed points \cite{Shirley_2018,Wang_2023,Hermele_UQM_talk, Hermele_inprep}.  In particular, foliated RG coarse graining operations, which involve integrating out two-dimensional topologically ordered layers, do not stay within the same phase.  This is undesirable because it spoils the useful association between phases and RG fixed points that underlies much of our thinking about universality in condensed matter physics.  

An elegant way to resolve this issue is to consider \emph{foliated fracton phases} \cite{Shirley_2018}, where decoupled two-dimensional layers are treated as a free resource.  This means that any stack of decoupled two-dimensional layers is trivial as a foliated fracton phase, even though it is not finite-depth-unitary equivalent to a product state.  However, foliated fracton phases have their own disadvantages.  First, foliated phase-equivalence is coarser than our usual intuitive notions of phase equivalence; this is a disadvantage if one desires a definition of phase that both hews to our familiar intuition and interacts nicely with RG.  Second, foliated phase-equivalence is only a useful definition for certain gapped systems, namely those with foliated fracton order.  For instance, if one wants to study topological phases, or Type II fracton orders such as Haah's cubic code, it is more useful to choose a different definition.  This raises the prospect of using different definitions of phase depending on which kind of system one is studying, which, again, is far from familiar intuition about phases of matter.

In Refs.~\onlinecite{Hermele_UQM_talk,Hermele_inprep}, these issues led one of us to propose the notion of \emph{deformable-lattice (DL) phases}.  The key idea is to allow certain continuous deformations of the crystalline lattice, crucially including uniform scale transformations, while also imposing translation invariance in a coarse manner.  This results in a definition of gapped phase motivated by physical intuition that also interacts nicely with RG for a large class of gapped systems, including topological phases, foliated fracton orders, and at least some Type II fracton orders (including Haah's code).

Let us extract the key information from the above discussion for our present purposes of constructing phase invariants.  It is convenient to think of each of the above definitions of phase as an equivalence relation generated by a set of operations (and inverse operations) that relate one system to another. Rather than enumerating and giving a formal definition of all the relevant operations for each type of phase, we focus on two that have a particularly significant effect on the $p$-theory.

The first operation is \emph{coarsening}, where we lower the translation symmetry to a subgroup as described above. This operation is involved in the definition of DL phases, and could also be used in a formal treatment of foliated fracton phases. In terms of $p$-theories, this is a move we denote $P \to_c P'$ where $P = (\T, \eta, S, \{ \hat{n}_a \}, \{ \langle \cdot , \cdot \rangle_a \} )$ and $P' = (\T', \eta', S, \{ \hat{n}_a \}, \{ \langle \cdot , \cdot \rangle_a \} )$ are $p$-theories, related by the conditions $\T' \subset \T$ and  $\eta' = \eta |_{\T'}$, \emph{i.e.} $\eta'$ is the restriction of $\eta$ to $\T'$. The ring $\Z_n \T'$ acts on $S$ because $\Z_n \T' \subset \Z_n \T$, so $S$ can be viewed as a $\Z_n \T'$-module. In different language, the $\Z_n \T'$-module appearing in $P'$ is $i^* S$, where $i : \T' \hookrightarrow \T$ is the inclusion.

A special choice of the subgroup $\T' \subset \T$, which will be used when discussing RG in Sections~\ref{section:RG} and~\ref{section:examples}, is
\begin{equation}
\T^{(m)} = \{ g \in \T \big| g = h^m \text{ for some } h \in \T \} \text{,}
\end{equation}
where $m \geq 2$. Note that $\T^{(m)}$ is the subgroup of $\T$ corresponding to an isotropic enlargement of the unit cell by a factor of $m$ in linear dimension. For $\T' = \T^{(m)}$, we denote the coarsening operation by $P \to_m P'$.

We observe that, given two $p$-theories related by coarsening, if $x \in S$ is a planon in one theory, then it is a planon in the other theory. In fact, more generally, the dimension of the subspace in which $x$ is mobile is the same in both theories. This can be proved using Corollaries~\ref{cor:dim} and~\ref{cor:int-dim} of Appendix~\ref{appendix:technical}. Letting $\T_x \subset \T$ and $\T'_x \subset \T'$ be the subgroups of $\T$ and $\T'$ leaving $x \in S$ invariant, we have $\T_x \cong \Z^m$, $\T'_x \cong \Z^{m'}$ and $\T'_x = \T_x \cap \T'$. By the results of Appendix~\ref{appendix:technical},
\begin{equation}
m' = \operatorname{dim}\Span (\eta(\T'_x)) = 
\operatorname{dim}\Span (\eta(\T_x)) \cap \Span (\eta(\T'))
= \operatorname{dim}\Span (\eta(\T_x)) = m \text{,}
\end{equation}
where the third equality follows because $\Span (\eta(\T')) = \R^3$. In addition, the orientation $a(x)$ is the same in both theories. This means that a planon $p \in S$ has the same plane of mobility in both theories, so both $p$-theories have the same set of orientations, and it makes sense to choose the normal vectors and bilinear forms to be the same in both $P$ and $P'$.

The weighted QSS are invariant under coarsening, as is the underlying abelian group of the constraint module.  The discussion above tells us that $S_a$, the submodule of $a$-oriented planons, has the same underlying abelian group in both $p$-theories. (The module structure of $S_a$ is different because the translation groups are different.) Forgetting about module structure, $Q$ and $\pi : S \to Q$ are also the same, with the same decomposition $Q = \oplus_{a \in A} Q_a$. This implies that $C = Q / \pi(S)$ is unaffected by coarsening if we ignore the module structure. Moreover, we have $w(x) = w'(x)$, where $w$ and $w'$ denote the weight of $x \in S$ in the two $p$-theories.  It follows that $S^{(i)} \subset S$ are also the same in both theories, again in the sense that their underlying abelian groups are equal. This gives the same weighted QSS ${\cal W}_i$ before and after coarsening.

The second operation we consider is \emph{foliated stacking}, \emph{i.e.} stacking with decoupled gapped 2d layers. This operation is not part of the definition of standard or DL phases, but plays the key role in foliated fracton phases. Formally, in terms of $p$-theories, the input data for the operation consists of two $p$-theories $P = (\T, \eta, S, \{ \hat{n}_a \}, \{ \langle \cdot , \cdot \rangle_a \} )$ and $P_\ell = (\T, \eta, S_\ell, \{ \hat{n}^\ell_b \}, \{ \langle \cdot , \cdot \rangle^\ell_b \} )$. The two sets of orientations are $A$ and $A_\ell$, respectively, with $a \in A$ and $b \in A_\ell$. Note that we require $\T$ and $\eta$ to be the same in both theories. If $a \in A$ and $a \in A_\ell$, we require $\hat{n}_a = \hat{n}^\ell_a$. While $P$ is arbitrary, $P_\ell$ is a $p$-theory of decoupled 2d layers. More formally, $S_\ell$ is generated by planons, and moreover the mutual statistics is required to be that of a stack of decoupled 2d layers.\footnote{This last condition is needed, because there are fracton models with only planon excitations that are not unitarily equivalent to a stack of decoupled 2d layers \cite{Shirley_2020}.} Given this input data, foliated stacking is simply the operation $P \to_f P' = P \ominus P_\ell$. That is, $P'$ is nothing but a stack of $P$ and $P_\ell$, as defined in Sec.~\ref{section:definitions}. Because $P_\ell$ has trivial weighted QSS and constraint module, the invariants do not change under foliated stacking.

\section {Entanglement renormalization flows of $p$-modular theories}
\label{section:RG}

The renormalization group (RG) underpins the understanding of universality in condensed matter physics, in part by relating quantum systems to long-wavelength effective descriptions that encode universal properties of a phase. Several fracton models are known to be fixed points of exotic entanglement renormalization flows, which differ from standard ones in that one must integrate out some non-trivial quantum systems in order to make the fracton model an RG fixed point \cite{Shirley_2018, Dua_2020, Wang_2023}. Foliated RG is an example of this, where one is allowed to integrate out decoupled 2d layers with non-trivial topological order \cite{Shirley_2018}. In this section, we consider entanglement RG flows of $p$-modular fracton orders.

What can we say about the relationship between foliated and $p$-modular fracton orders? First of all, not every foliated quantum order is $p$-modular. A simple counterexample is the 3d toric code, which is a fixed point of foliated RG,\footnote{The 3d toric code is a fixed point of a trivial foliated RG where one never needs to integrate out non-trivial 2d layers, but only integrates out decoupled qubits.} but has fully mobile point-like excitations that cannot be detected by braiding with planons; indeed, there are no planons at all. While the 3d toric code lacks restricted mobility excitations, if we stack the 3d toric code and the X-cube model we obtain a fracton order that is foliated but not $p$-modular. However, if we restrict to fracton orders without non-trivial fully mobile point-like excitations, it is reasonable to conjecture that every foliated fracton order is $p$-modular. 

This section focuses on the opposite question, namely whether a $p$-modular fracton order is a fixed point of foliated RG. In general, entanglement RG \cite{Vidal_2007} starts with a quantum system schematically denoted ${\cal A}$. The system is an RG fixed point if ${\cal A} \simeq {\cal A}^\kappa \ominus {\cal B}$, where ${\cal B}$ is another quantum system, ${\cal A}^\kappa$ is a stack of $\kappa \geq 1$ copies of ${\cal A}$ (each on a lattice with larger lattice constant than the original system), $\ominus$ denotes stacking, and $\simeq$ indicates that two systems are related to one another by a finite depth unitary quantum circuit. By integrating out the system ${\cal A}^{\kappa-1} \ominus {\cal B}$, we again obtain ${\cal A}$, but on a coarse-grained spatial lattice. RG for fracton systems is exotic in the sense that ${\cal A}^{\kappa-1} \ominus {\cal B}$ is in a non-trivial phase; in any conventional RG procedure, we do not integrate out such systems. If ${\cal A}^{\kappa-1} \ominus {\cal B}$ consists of decoupled 2d layers, we say the RG is foliated.

Now suppose that the system ${\cal A}$ is characterized by some $p$-modular fracton order, and we assume that ${\cal A}$ is an RG fixed point, so that ${\cal A} \simeq {\cal A}^\kappa \ominus {\cal B}$. It follows that ${\cal B}$ must also be $p$-moduar. To focus on the most interesting cases, we further suppose that ${\cal A}$ has some intrinsic excitation of weight $w > 1$; that is, ${\cal A}$ has some non-trivial excitation that is not a composite of planons. It follows that
\begin{equation}
{\cal A} \simeq {\cal A} \ominus {\cal B}_{p} \text{,}
\end{equation}
where the $p$ subscript indicates that ${\cal B}_p$ is essentially 1-planar (\emph{i.e.} its non-trivial exciations are planons or composites of planons). This follows from the fact that the invariant $C = 0$ if and only if a $p$-theory is essentially 1-planar.

This result is a very strong constraint on RG fixed points of $p$-modular fracton orders. Not only must we have $\kappa = 1$, but the system to be integrated out, namely ${\cal B}_p$, is essentially 1-planar. One possibility is that ${\cal B}_p$ consists of decoupled 2d layers, in which case the RG is foliated. However, it is not guaranteed that the RG is foliated because there are 1-planar fracton orders that are not simply decoupled layers \cite{Shirley_2020}, and  ${\cal B}_p$ could be such a system, although we are not presently aware of an example of RG where this phenomenon arises.

Now we turn to the issue of how to study the RG for some particular $p$-modular fracton order. For greater simplicity, we would like to work with $p$-theories instead of quantum systems. Recall that all $p$-theories are assumed to be $p$-modular unless stated otherwise. If ${\cal A}$ is a system with $p$-modular fracton order and ${\cal A} \simeq {\cal A} \ominus {\cal B}_p$ as above, then in terms of $p$-theories we have
\begin{equation}
P_A \to_m P'_A \cong P_A^{(m)} \ominus P_B \text{.} \label{eqn:ptr}
\end{equation}
This expression requires some explanation. Here, $P_A$ is the $p$-theory of ${\cal A}$, which is coarsened to $P'_A$ using $\T' = \T^{(m)} \subset \T$ defined in Sec.~\ref{sec:phase-invariants}, for $m \geq 2$. The coarsening reflects the fact that, in entanglement RG of a translation-invariant system, the unit cell of the finite depth circuit, and thus the new system after applying the circuit, needs to be larger than that of the original system. The coarsened $p$-theory is isomorphic to a stack, where $P_A^{(m)}$ is a scaled up copy of $P_A$, corresponding to taking the system ${\cal A}$ and changing the lattice constant $a \to m a$. Formally, given a $p$-theory $P = (\T, \eta, S, \{ \hat{n}_a \}, \{ \langle \cdot, \cdot \rangle_a \} )$, we define $P^{(m)} = (\T^{(m)}, \eta|_{\T^{(m)}}, S^{(m)}, \{ \hat{n}_a \}, \{ \langle \cdot, \cdot \rangle_a \} )$, where the module $S^{(m)} \equiv \alpha^{-1*}_m S$, with $\alpha_m : \T \to \T^{(m)}$ the isomorphism $\alpha_m(g) = g^m$.\footnote{This notation was explained in Sec.~\ref{section:definitions}; in more detail, $S^{(m)}$ is identical to $S$ as an abelian group but has the $\T^{(m)}$ action given by $g \cdot x = h x$ for any $x \in S^{(m)}$, where $g = h^m$ is an arbitrary element of $\T^{(m)}$ and the expression $hx$ on the right-hand side uses the $\T$-action on $S$.} The unit vectors $\hat{n}_a$ and the bilinear forms are the same in the two $p$-theories. Finally, $P_B$ is an essentially 1-planar $p$-theory.

While it is clear that if ${\cal A} \simeq {\cal A} \ominus {\cal B}_p$, then Eq.~\ref{eqn:ptr} holds, understanding the converse would be more useful, because this would let us work entirely with $p$-theories. Namely, if ${\cal A}$ is a system whose $p$-theory satisfies Eq.~\ref{eqn:ptr}, is it true that ${\cal A} \simeq {\cal A} \ominus {\cal B}_p$? We will argue this indeed holds, relying on the following assumption: given a system ${\cal A}$ with $p$-theory $P_A \cong P_1 \ominus P_2$, if there is no process by which excitations of $P_1$ can remotely detect those of $P_2$ and vice versa, then ${\cal A}$ is phase-equivalent to a stack ${\cal A}_1 \ominus {\cal A}_2$, where ${\cal A}_i$ has $p$-theory $P_i$. We would actually like a finite-depth-circuit equivalence ${\cal A} \simeq {\cal A}_1 \ominus {\cal A}_2$, which might not hold in general; however, we assume this does hold if ${\cal A}$ has a commuting projector Hamiltonian as in all the cases of interest for this paper. 

Now, given a system ${\cal A}$ with $p$-theory satisfying $P \cong P_1 \ominus P_2$, where $P_2$ is essentially 1-planar, we would like to argue ${\cal A} \simeq {\cal A}_1 \ominus {\cal A}_2$. Remote detection processes where excitations of $P_2$ detect other excitations are nothing but planon braiding processes, so the fact that $P$ is a stack tells us that excitations of $P_1$ cannot be detected in this way. The desired result follows from the assumption above if we can argue that, in addition, there is no process by which excitations of $P_1$ can remotely detect excitations of $P_2$.

A general process by which excitations of $P_1$ detect a planon $p \in S_2$ can be described as follows and is illustrated in Fig.~\ref{fig:generic-detection-process}.  We let $| \psi \rangle$ be the state with a single planon $p$ and no other excitations above the ground state. The planon is located at position $\boldsymbol{r}_i$, a large but finite distance away from the origin $\boldsymbol{r} = 0$. (1) Starting from $| \psi \rangle$, we move $p$ to the origin. (2) Next, create, move and destroy excitations of $P_1$ within a finite but large thickened shell $\Sigma$ consisting of lattice points $\boldsymbol{r}$ with $r_1 < |\boldsymbol{r}| < r_2 \ll |\boldsymbol{r}_i|$; in Fig.~\ref{fig:generic-detection-process} we represent $\Sigma$ as the surface of a cube for ease of illustration. (3) Finally, we move $p$ back to its initial position $\boldsymbol{r}_i$. Each step is achieved by acting with operators ${\cal S}_p$, ${\cal D}$ and ${\cal S}^{-1}_p$, respectively, where ${\cal S}_p$ is a string operator moving $p$ from $\boldsymbol{r}_i$ to the origin, and ${\cal D}$ is supported within $\Sigma$. The final state after the process described is thus ${\cal S}^{-1}_p {\cal D} {\cal S}_p | \psi \rangle$, which is the same as the initial state $|\psi\rangle$ up to a phase factor; however, this phase factor is not interesting because it includes local phase factors accumulated during step (2). Instead we compare to the state ${\cal D} {\cal S}^{-1}_p {\cal S}_p | \psi \rangle$, corresponding to a process where we change the order of the steps by first doing (1), then (3), and then (2). This process accumulates the same local phase factors as the first one, but there is no remote detection of $p$ by ${\cal D}$. We have
\begin{equation}
{\cal S}^{-1}_p {\cal D} {\cal S}_p | \psi \rangle = e^{i \phi} {\cal D} {\cal S}^{-1}_p {\cal S}_p | \psi \rangle = e^{i \phi} {\cal D} | \psi \rangle \text{,}
\end{equation}
where $\phi$ is a statistical phase factor, and $\phi \neq 0$ indicates non-trivial remote detection of $p$. This expression can be rewritten
\begin{equation}
{\cal D} {\cal S}_p | \psi \rangle = e^{i \phi} {\cal S}_p {\cal D} | \psi \rangle \text{;} 
\end{equation}
that is, $\phi \neq 0$ means ${\cal S}_p$ and ${\cal D}$ do not commute acting on $|\psi\rangle$.  We will assume $\phi \neq 0$ and obtain a contradiction.

The operator ${\cal D}$ may be a finite depth circuit or more generally a finite-time adiabatic evolution under some local Hamiltonian; in either case it can be sensibly truncated to a subregion of its support. Acting on $| \psi \rangle$, or on $S_p | \psi \rangle$, or on a ground state, ${\cal D}$ creates no excitations. However, if we truncate ${\cal D} \to {\cal D}_F$, where $F$ is a subregion of $\Sigma$ as shown in Fig.~\ref{fig:generic-detection-process}, then ${\cal D}_F$ creates excitations of $P_1$ on the region $T \subset \partial F$ where non-trivial truncation occurred.\footnote{This is where we use the assumption that ${\cal D}$ implements a remote detection process using excitations of $P_1$.} We choose $F$ to contain the intersection of $\Sigma$ with the support of ${\cal S}_p$, which implies
\begin{equation}
{\cal D}_F {\cal S}_p | \psi \rangle = e^{i \phi} {\cal S}_p {\cal D}_F | \psi \rangle \text{.}
\end{equation}
This implies that we can construct a process where $p$ braids with some of the excitations created by ${\cal D}_F$ with mutual statistical angle $\phi \neq 0$, a contradiction. That process is constructed using a different string operator ${\cal S}'_p$ that moves $p$ from the origin back to $\boldsymbol{r}_i$ without passing through $F$, so that ${\cal S}_{loop} = {\cal S}'_p {\cal S}_p$ transports $p$ around a closed loop, and ${\cal D}_F {\cal S}_{loop} | \psi \rangle = e^{i \phi} {\cal S}_{loop} {\cal D}_F | \psi \rangle$.

\begin{figure}
    \centering
    \includegraphics[width=0.5\linewidth]{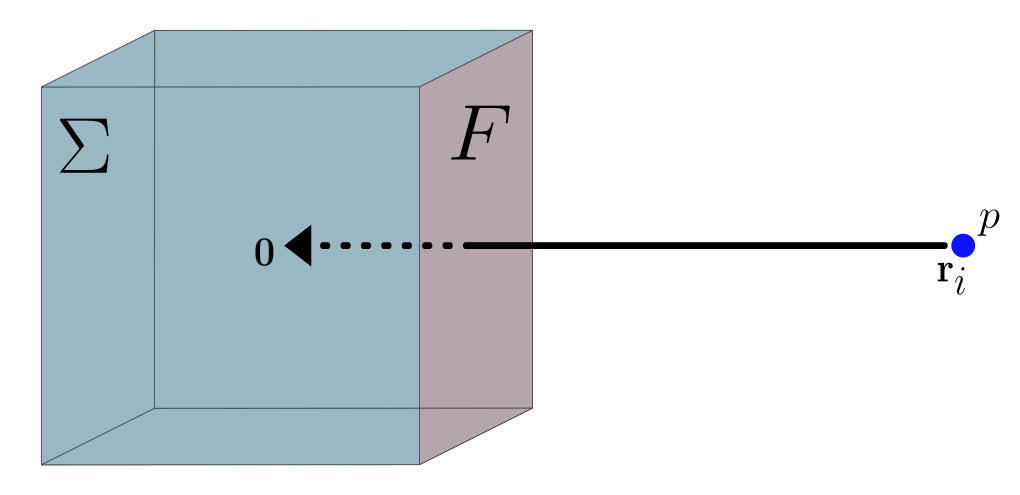}
    \caption{Illustration of a generic process for remotely detecting a planon $p$ by other excitations. The planon is first moved from its initial position at $\mathbf r_i$ to the origin (black arrow), then other excitations are created, moved and annihilated within a shell of finite thickness, shown for ease of illustration as the surface of a cube $\Sigma$. Finally the planon is returned to its initial position in the inverse of the first step (not shown). The region $F$ is the face of the cubical shell that intersects the path of the planon string operator. See text for more details.}
    \label{fig:generic-detection-process}
\end{figure}

Now that we have justified working with $p$-theories, we would like to understand when a $p$-theory is an RG fixed point, in the sense that Eq.~(\ref{eqn:ptr}) holds.  Moreover we would like to be able to determine whether the theory $P_B$ consists of decoupled 2d layers. We do not  know the answer in full generality, but in Appendix~\ref{appendix:foliated-RG} we provide  sufficient conditions that apply in all the examples considered in this paper, which we briefly describe here.

We can always coarsen the translation symmetry so that $\T / \T_a \cong \Z$ for each orientation $a \in A$. Upon doing this we have $R_a = \Z_n \T / \T_a \cong \Z_n[t^\pm]$. In Proposition~\ref{prop:rg-2} of Appendix~\ref{appendix:foliated-RG}, we show that if, for each orientation $a$:
\begin{enumerate}
    \item The module of $a$-oriented plane charges $Q_a$ is a free $R_a$-module; and
    \item There is an $R_a$-module basis $q_1, \dots, q_r$ for $Q_a$ together with a basis $\Delta_1 q_1, \dots, \Delta_r q_r$ for $\ker \omega|_{Q_a}$, for some $\Delta_i \in R_a$ whose highest and lowest degree coefficients are units in $\Z_n$, such that $\langle t^k \Delta_i q_i, t^\ell q_j \rangle_a = \delta_{k \ell} F_{ij}$,
\end{enumerate}
Then there is an integer $m$ such that upon further coarsening the translation symmetry to $\T^{(m)} \subset \T$ we have $P \to_m P' \cong P^{(m)} \ominus P_2$, where $P_2$ is essentially $1$-planar. The integer $m$ is any $m > 1$ such that $\T^{(m-1)}$ acts trivially on the constraint module $C$; because $C$ is finite, such an integer always exists. Physically, the second condition gives a division of planons and plane charges into layers, so that planons only detect plane charges in the same layer; note that $t^k \Delta_i q_i$ is  a translation of the basis planon $\Delta_i q_i$ by $k$ layers away from a reference layer.

Proposition~\ref{prop:foliated-rg} gives a condition that ensures $P_2$ consists of layers of 2d toric codes, and hence the RG is foliated. Again we first coarsen translation symmetry so that $\T / \T_a \cong \Z$ for each $a \in A$. If we choose a basis for $Q_a$ and $q \in Q_a$ is a basis element, then $q^* \in Q_a^*$ denotes the dual basis element. That is, $q^*(q) = 1 \in R_a$ and $q^*(q') = 0$ if $q' \neq q$ is another basis element. If, for each orientation $a \in A$:
\begin{enumerate}
\item $Q_a$ is a free $R_a$-module; and
\item There is an $R_a$-module basis $q_{\fe 1}, q_{\fm 1}, \dots, q_{\fe r_a}$, $q_{\fm r_a}$ for $Q_a$, and elements $\Delta_{ai} \in R_a$ for $i=1,\dots,r_a$ whose highest and lowest degree coefficients are units in $\Z_n$, such that $\Delta_{a1} q_{\fe 1}, \bar{\Delta}_{a1} q_{\fm 1},
\dots, \Delta_{a r_a} q_{\fe r_a}$, $\bar{\Delta}_{a r_a} q_{\fm r_a}$ is an $R_a$-module basis for $\operatorname{ker} \omega|_{Q_a}$, and also $\Phi_a(\Delta_{ai} q_{\fe i}) = q^*_{\fm i}$ 
and $\Phi_a(\bar{\Delta}_{ai} q_{\fm i}) = q^*_{\fe i}$,
\end{enumerate}
Then, up to isomorphism, $P \to_m P' \cong P^{(m)} \ominus P_2$, where $m > 1$ is as above, and $P_2 = \ominus_{a \in A} P_{\ell,a}$, where $P_{\ell,a}$ is a stack of $r_a(m-1)$ copies of $a$-oriented $\Z_n$ toric code layers per unit cell of $\T^{(m)}$ translation symmetry. Here we have assumed that each $Q_a$ breaks up into ``electric'' ($\fe$) and ``magnetic'' ($\fm$) sectors such that electric excitations only detect magnetic excitations and vice versa. This is a natural set of assumptions in CSS Pauli codes, where $Z$-sector excitations only detect $X$-sector excitations and vice versa. Of course, that is not the only assumption made, and there may be $p$-modular CSS codes where the assumptions of Proposition~\ref{prop:foliated-rg} do not hold; however, they do hold for the examples studied in this paper.

\section {Examples}
\label{section:examples}

In this section we study a handful of fracton models and show they are $p$-modular. For each example, we describe the module of superselection sectors in terms of plane charges and compute the weighted QSS and constraint module. We also use the results of Appendix~\ref{appendix:foliated-RG}, summarized above in Sec.~\ref{section:RG}, to show that the $p$-theory admits a foliated RG, \emph{i.e.} we find a decomposition of the form Eq.~\ref{eqn:ptr}. Subject to the assumptions discussed in Sec.~\ref{section:RG}, this shows that these examples are all foliated fracton orders. We note that the $\Z_n$ X-cube model \cite{Shirley_2018} and $\Z_2$ anisotropic model \cite{Shirley_2019} were previously shown to be foliated. The $\Z_2$ checkerboard model is foliated as it is a stack of two X-cube models \cite{Shirley_2019_checkerboard}, and the toric code layers model is trivially foliated. Ref.~\onlinecite{Shirley_2023} showed foliation of the Chamon model, which is very closely related to the 4-planar X-cube model.

All of the models we consider are $\Z_n$ translationally invariant commuting Pauli Hamiltonians, also referred to as Pauli codes. Such models are written in terms of clock ($Z$) and shift ($X$) $\Z_n$ Pauli operators, satisfying the commutation relation $Z X = e^{2\pi i / n} X Z$. The Hamiltonian is of the form $H = - \sum_\alpha (h_\alpha + h^\dagger_\alpha)$, where each stabilizer $h_\alpha$ is a product of Pauli operators, and all the $h_\alpha$'s are mutually commuting. Because the stabilizers satisfy $h_\alpha^n = 1$, their eigenvalues are $n$th roots of unity, \emph{i.e.} they are of the form $e^{2 \pi i m / n}$ for $m \in \Z_n$. We often refer to stabilizer eigenvalues by $m \in \Z_n$ rather than its exponential. Throughout the paper, we consider systems defined in infinite space.

We always assume that the stabilizers have the property that if $(h_\alpha)^\ell = 1$, then $\ell$ is a multiple of $n$, which always holds if $n$ is prime, and holds in all our examples for arbitrary $n$. This is equivalent to assuming that the set of eigenvalues of each stabilizer is all the $n$th roots of unity. 

We employ the formalism developed in Ref.~\onlinecite{Haah_2013} to describe Pauli codes, which we now review. Throughout this section, as above, we set $R = \Z_n \T$ with $\T \cong  \Z^3$ the lattice translational symmetry group, where $n$ is the Hilbert space dimension of a single qudit. Usually, we take $\T$ to be the free abelian multiplicative group on three variables $x,y,z$, or equivalently the multiplicative group of ``Laurent monomials'' in $x,y,z$, where arbitrary integer powers are allowed to appear.  The element $x^a y^b z^c$ with $a,b,c \in \Z$ represents the cubic lattice translation $a \hat x + b \hat y + c \hat z$. With this choice, $R = \Z_n[x^{\pm}, y^{\pm}, z^{\pm}]$, the ring of Laurent polynomials in the variables $x,y,z$ with $\Z_n$ coefficients. For the FCC and checkerboard models, which are defined on a face centered cubic (fcc) lattice, we instead take $\T$ to be the free abelian group generated by the monomials $xy, yz, zx$. In this case $R$ is a subring of the above-mentioned Laurent polynomial ring.

Translation invariant, locally topologically ordered Pauli codes can be specified by an exact sequence of free,  finitely generated $R$-modules\footnote{Here we use the assumption that the set of eigenvalues of each stabilizer is all $n$th roots of unity. If there is a stabilizer that has only a proper subset of $n$th roots of unity as eigenvalues, which is possible if $n$ is not prime, then $G$ and $E$ are not free. An example is a stabilizer that is a product of $Z^2$ Pauli operators when $n=4$.}
\begin{align}
    G \xrightarrow{\sigma} P \xrightarrow{\epsilon = \sigma^\dagger \lambda} E . \label{eqn:exact-pauli-sequence}
\end{align}
The module $P$ is the Pauli module, and its elements correspond bijectively to finite products of Pauli operators modulo phase factors. $P$ can be obtained as the abelianization of the multiplicative Pauli group $\mathcal P$ generated by $Z$ and $X$ Pauli operators; addition in $P$ corresponds to (abelianized) operator multiplication in $\mathcal P$. It is important to be aware that elements of $P$ are not precisely Pauli operators, because information about the overall phase is lost. In order for Eq.~\ref{eqn:exact-pauli-sequence} to unambiguously specify a Pauli code, this phase information needs to be recovered somehow for the stabilizers $h_\alpha$; one option is to choose a Pauli operator in $\mathcal P$ for each element of $P$, and we do this by putting all $Z$ Paulis first, followed by $X$ operators, with an overall phase of unity. If there are $q$ qudits per unit cell, then $P = R^{2 q}$, and elements can be written as $2q$-component column vectors with entries in $R$. We take the first $q$ entries to correspond to $Z$ Pauli operators, with the second $q$ entries corresponding to $X$ operators.

Next, elements of $G$ are formal sums of the local stabilizer operators that appear in the Hamiltonian. That is, a general element of $G$ is of the form $\sum_\alpha c_\alpha h_\alpha$, where $c_\alpha \in \Z_n$ and in this expression the $h_\alpha$'s should be understood as formal objects and not operators acting on Hilbert space. The map $\sigma$ sends each stabilizer $h_\alpha$ to the corresponding product of Pauli operators in $P$. If there are $t$ local stabilizers per unit cell, $G = R^t$ and elements can be written as $t$-component column vectors. In this paper, we only consider models where it is possible to make a translation-invariant choice of Hamiltonian stabilizers $h_\alpha$ that do not satisfy any local constraints. That is, there is no non-trivial equation of the form $h^{n_1}_{\alpha_1} h^{n_2}_{\alpha_2} \cdots h^{n_k}_{\alpha_k} = e^{i \phi} \mathbbm{1}$, where the $n_i$ are non-zero elements of $\Z_n$. Equivalently, the map $\sigma$ is injective. This is not possible for arbitrary Pauli codes, as in the 3d toric code.

The excitation module $E$ is identical to $G$, \emph{i.e.} $E = G = R^t$. However the physical interpretation of its elements is different. Elements of $E$ are patterns of eigenvalues of the $h_\alpha$ stabilizers, with only finitely many non-zero eigenvalues. As expected from the lack of local constraints on stabilizers, and as can be verified in the examples we consider, elements of $E$ are in bijective correspondence with excitations above the ground state (energy eigenstates) of bounded spatial support. The excitation map $\epsilon : P \to E$ sends an abelianized Pauli operator to the excitation that it creates when acting on the ground state. The maps $\epsilon$ and $\sigma$ are related by $\epsilon = \sigma^\dagger \lambda$, where $\lambda : P \to P$ is a map coming from the symplectic form on $P$ describing commutations among Pauli operators (see Ref.~\onlinecite{Haah_2013}). Here, for an $R$-valued matrix $f$, $f^\dagger$ is the transpose and element-wise antipode of $f$. Moreover, with our conventions, $\lambda$ is a $2q \times 2q$ matrix which can be written as follows in block form:
\begin{equation}
\lambda = \left( \begin{array}{cc}
 0 &  \mathbf 1 \\ - \mathbf 1 & 0
 \end{array}\right) \text{.}
\end{equation}

By choosing $G$ such that stabilizers obey no local constraints, and $E$ such that all elements are physical excitations in infinite space, we sidestep some of the technical complications that arise in Ref.~\onlinecite{Haah_2013}. In particular, the module of superselection sectors (referred to in Ref.~\onlinecite{Haah_2013} as topological charges) is given by $S = \operatorname{coker} \epsilon \equiv E / \operatorname{im} \epsilon$, without any need to take the torsion submodule of $\operatorname{coker} \epsilon$. Given an excitation $x \in E$, we denote the superselection sector it belongs to by $[x] \in S$.

Exactness of the sequence Eq.~\ref{eqn:exact-pauli-sequence} at $P$ is the condition $\operatorname{im} \sigma = \operatorname{ker} \epsilon$. Physically, the condition $\operatorname{im} \sigma \subset \operatorname{ker} \epsilon$ is the statement that stabilizers create no excitations, and thus commute with one another; this property holds by definition for any Pauli stabilizer code. The model is \emph{locally topologically ordered} when also $\operatorname{ker} \epsilon \subset \operatorname{im} \sigma$, which tells us that any local operator creating no excitations is a stabilizer (or a product of a finite number of stabilizers). In particular, when working with large finite systems satisfying periodic boundary conditions, this implies that no local operator can distinguish between degenerate ground states. Codes satisfying the local topological order property are also referred to as exact codes, because this property is the same as exactness of Eq.~\ref{eqn:exact-pauli-sequence}. We are only interested in exact codes in this paper.

For a $p$-modular Pauli code, the above sequence extends to an exact sequence of the form
\begin{align}
    0 \to G \xrightarrow{\sigma} P \xrightarrow{\epsilon} E \xrightarrow{\pi} Q \xrightarrow{\omega} C \to 0,
    \label{eq:code-sequence}
\end{align}
where $Q$ and $C$ are not free $R$-modules, but instead have the structure described in Sec.~\ref{section:definitions}. 
We call Eq.~\ref{eq:code-sequence} the $p$-modular sequence of the code.  In a slight abuse of notation, we use the same symbol $\pi : E \to Q$ as for the induced map $\pi : S = E / \operatorname{im} \epsilon \to Q$. Exactness at $E$ is the statement that excitations with trivial plane charges are exactly locally createable excitations, and is equivalent to injectivity of $\pi : S \to Q$ and thus to $p$-modularity. Exactness at $Q$ means that exactly those configurations $q$ of plane charges realized by some excitation satisfy the constraints expressed by $\omega(q) = 0$.

All the examples discussed in this paper are CSS codes, which means that each stabilizer is a product either of $Z$ operators or of $X$ operators. The $p$-modular sequence of a CSS code decomposes into a direct sum of two exact sequences, with $G = G^Z \oplus G^X$, $P = P^Z \oplus P^X$, $\sigma = \sigma^Z \oplus \sigma^X$, $\epsilon = \epsilon^Z \oplus \epsilon^X$, and so on for the other modules and maps. Elements of $P^Z$ correspond to products of $Z$ Pauli operators, while elements of $G^Z$ label $Z$-stabilizers, and $E^Z$ describes configurations of $Z$-excitations (\emph{i.e.} excitations of $Z$-stabilizers), which in turn carry only $Z$-plane charges (elements of $Q^Z$). The corresponding statements hold for the modules with $X$ superscripts. Maps with $Z$ ($X$) superscripts map between modules with the same superscripts, with the exception of $\epsilon^Z : P^X \to E^Z$ and $\epsilon^X : P^Z \to E^X$, reflecting the fact that $Z$-Paulis create $X$-excitations and vice versa. While this results in two separate exact sequences, the plane charge structure of $Z$-excitations is determined by braiding with $X$-planons and vice versa. The module of superselection sectors is also a direct sum
\begin{equation}
S = \frac{E}{\operatorname{im} \epsilon} = \frac{E^Z}{\operatorname{im} \epsilon^Z} \oplus \frac{E^X}{\operatorname{im} \epsilon^X} = S^Z \oplus S^X \text{.}
\end{equation}

For each example, we obtain the sequence Eq.~\ref{eq:code-sequence} and use it to determine the phase invariants and the behavior under RG. The general strategy we pursue is first to identify a set of orientations $A$ and putative submodules of $a$-oriented planons $\tilde{S}_a \subset S_a$. We do not know \emph{a priori} that we have found all orientations of planons, or all the planons of a given orientation; that is, it may be the case that $\tilde{S}_a \neq S_a$. Next, we use the submodules $\tilde{S}_a$ to obtain a putative plane charge structure. Letting $\tilde{K_a} \subset S$ be the submodule of topological charges transparent to braiding with planons in $\tilde{S}_a$, we define the putative plane charge modules $\tilde{Q}_a = S / \tilde{K_a}$, with associated quotient maps $\tilde{\pi}_a : S \to \tilde{Q}_a$. Moreover we have $K_a \subset \tilde{K_a} \subset S$, implying
\begin{equation}
\tilde{Q}_a = \frac{S/K_a}{\tilde{K}_a / K_a} = \frac{Q_a}{\tilde{K}_a / K_a} \text{,}
\end{equation}
so there is a quotient map $\tau_a : Q_a \to \tilde{Q}_a$, and $\tilde{\pi}_a = \tau_a \circ \pi_a$. We define $\tilde{Q} = \bigoplus_{a \in A} \tilde{Q}_a$, which comes with inclusions $\tilde{\iota}_a : \tilde{Q}_a \hookrightarrow \tilde{Q}$, and moreover we define  $\tilde{\pi} : S \to \tilde{Q}$ by $\tilde{\pi} = \sum_{a \in A} \tilde{\iota}_a \circ \tilde{\pi}_a$. Defining $\tau : Q \to \tilde{Q}$ by $\tau = \bigoplus_{a \in A} \tau_a$, we have $\tilde{\pi} = \tau \circ \pi$. It follows immediately that if $\tilde{\pi}$ is injective, then so is $\pi$, and thus the model is $p$-modular. At this point we can complete the putative $p$-modular sequence by defining $\tilde{C} = \tilde{Q} / \operatorname{im} \tilde{\pi}$, with associated quotient map $\tilde{\omega} : \tilde{Q} \to \tilde{C}$.

In fact, we can prove that if $\tilde{\pi}$ is injective, then $\tilde{Q}_a = Q_a$, and moreover the entire putative $p$-modular sequence is identical to the true one. As noted above, $K_a \subset \tilde{K}_a$ for all $a \in A$. Suppose that for some $a \in A$, $K_a$ is a proper submodule of $\tilde{K}_a$, so there exists $x \in \tilde{K}_a$ with $x \notin K_a$. Then $\tilde{\pi}_a(x) = 0$ but $\pi_a(x) \neq 0$. By Corollary~\ref{cor:alpha_a-planon}, there exists $\alpha_a \in R$ such that $\alpha_a x$ is a non-trivial $a$-planon. But $\tilde{\pi}_a(\alpha_a x) = \alpha_a \tilde{\pi}_a(x) = 0$, implying $\tilde{\pi}(\alpha_a x) = 0$ and contradicting injectivity of $\tilde{\pi}$. Therefore $K_a = \tilde{K}_a$ for all $a \in A$ and the $p$-modular sequences agree.  Finally, if desired, once we have the $p$-modular sequence, we can go back and check that we indeed found all the planons.

Before going into the examples, we introduce some notation. For a subgroup $\G \subset \T$, we denote $R_\G \equiv \Z_n \T / \G$, and let $p_\G : R \to R_\G$ be the natural projection of group rings. We view $R_\G$ as an $R$-module, with the $R$-action $R \times R_\G \to R_\G$ given by $(r, r_\G) \mapsto p_\G(r) r_\G$. We also let $c_\mathcal G : R_\mathcal G \to R_\T$ project the rest of the way down; note that $R_\T$ is just $\Z_n$ with a trivial action of $R$. Explicitly, a general element $r \in R$ is given by $\sum_{g \in \T} a_g g$ with $a_g \in \Z_n$. Then $p_\G(r) = \sum_{[g] \in \T/\G} (\sum_{g \in [g]} a_g) [g]$, and $c_\G(p_\G(r)) = \sum_{g \in \T} a_g$. It is useful to note that the kernel of $p_\mathcal G$ is the ideal generated by elements of $1 -\mathcal G$. In particular, if $\mathcal G$ is generated by $g_1, \dots, g_l$, then 
\begin{align}
    \ker p_\mathcal G = (1 - g_1, \dots, 1 - g_l).
    \label{eq:ker-p}
\end{align}

We let the indices $\mu,\nu,\rho$ run over $x,y,z$, subject to the constraint that $(\mu,\nu,\rho)$ is a cyclic permutation of $(x,y,z)$. We also use the symbols $\mu, \nu, \rho$ to denote the corresponding monomials in $\Z_n[x^{\pm}, y^{\pm}, z^{\pm}]$. To construct the plane charge module $Q$, we start with a family of subgroups $\T_i \subset \T$, with $\T_i \cong \Z^2$ for each subgroup. In some examples, but not always, the $\T_i$'s are the same as $\T_a = \T_{S_a}$, the groups leaving the superselection sector of $a$-oriented planons invariant. We denote $R_i = R_{\T_i}$, $p_i = p_{\T_i}$, $c_i = c_{\T_i}$. We also define
$R_{i_1 \cdots i_k} \equiv R_{\T_{i_1}} \oplus \cdots \oplus R_{\T_{i_k}}$. Given a group homomorphism $\varepsilon_i : \T \to \Z_n^\times$, we can define an $R$-module homomorphism $\varepsilon_i p_i : R \to R_i$ by $g \mapsto \varepsilon_i(g) p_i(g)$.

\subsection {X-cube}
\label{section:x-cube}

We begin with the $\Z_n$ generalization of X-cube model \cite{Vijay_2016, Slagle_2017, Ma_2017, Shirley_2019, Radicevic_2020}. It is not the simplest of our examples, but it is a key example of a foliated and $p$-modular fracton order, and it has sufficient complexity to be a good first illustration of our analysis. We therefore analyze the plane charge structure in some detail. Of course, for $n=2$, the plane charge structure has already been determined in Ref.~\onlinecite{Pai_2019},  but in an \emph{ad hoc} manner; here, we determine the plane charge structure following the ideas laid out in the sections above.

The Hamiltonian is the negative sum of the cube and star stabilizers shown in Fig. \ref{fig:x-cube}(a), plus Hermitian conjugates. The translation group is $\T = \langle x,y,z \rangle$. The Pauli module is $P = R^6$ and the excitation module is $E = R^3$, and the excitation map $\epsilon : R^6 \to R^3$ is given by the $3 \times 6$ matrix
\begin{equation}
\epsilon = \left(\begin{array}{ccc|ccc}
            &&&&  - d_y &  d_z
            \\
            &&&     - d_x &&  d_z
            \\ \hline
            - \bar d_{yz} & - \bar d_{zx} & - \bar d_{xy}
        \end{array}\right) \text{,}
\end{equation}
where $d_\mu = 1-\mu$ and $d_{\mu\nu} = d_\mu d_\nu \in R$. The lower-left block is $\epsilon^X : P^Z \to E^X$ and the upper right block is $\epsilon^Z : P^X \to E^Z$.
This information completely specifies the model. We denote the six standard basis vectors of $P = R^6$ by $\mathbf{z}_x, \mathbf{z}_y, \mathbf{z}_z, \mathbf{x}_x, \mathbf{x}_y, \mathbf{x}_z$, where $\mathbf{z}_\mu$ ($\mathbf{x}_\mu$) represents a $Z$-Pauli ($X$-Pauli) operator living on a $\mu$-oriented edge of the simple cubic lattice. The standard basis vectors of $G = R^3$ are denoted respectively by $\mathbf{A}_x, \mathbf{A}_y, \mathbf{B}$, which represent two star stabilizers and one cube stabilizer, respectively, as shown in Fig.~\ref{fig:x-cube}(a). In $E = R^3$, we denote the standard basis vectors by $\mathbf{a}_x, \mathbf{a}_y, \mathbf{b}$, which represent excitations of the corresponding stabilizers.

\begin{figure}
    \centering
    \includegraphics[width=0.9\textwidth]{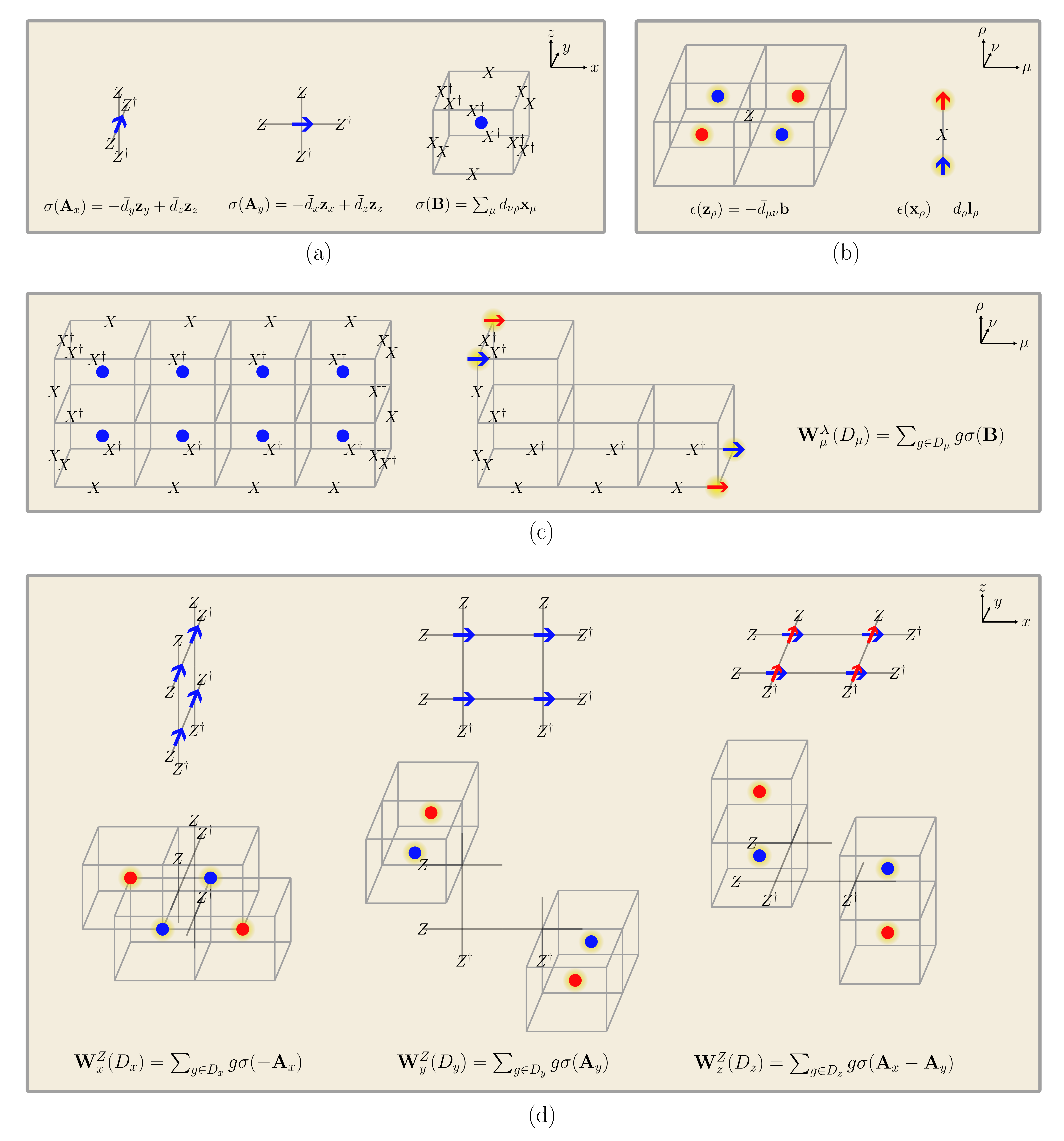}
    \caption{(a) $Z_n$ X-cube stabilizers; each blue symbol
    represents the corresponding product of surrounding Pauli operators. 
    (b) Locally generated excitations; blue (red) circles with yellow highlight denote $+1$ ($-1$) excitations of corresponding cube stabilizer, while blue (red) arrows with yellow highlight denote $+ \mathbf{l}_\rho$ ($-\mathbf{l}_\rho$) lineon exciations at the indicated position. The arrows indicate the lineon's direction of mobility. 
    (c)-(d) Closed and open planon string operators. We always braid planons counterclockwise according to the right-hand rule.
    }
    \label{fig:x-cube}
\end{figure}

It is well known that a nearest neighbor $\mu$-dipole of cube excitations is a planon mobile in the $\nu\rho$-plane. We introduce three orientations of such fracton dipoles denoted $\mathbf{p}^X_\mu = d_{\mu} \mathbf{b} \in E$, where the $X$ superscript reminds us that these are $X$-excitations. We can see that $\mathbf{p}^X_\mu$ is indeed mobile in the $\nu \rho$-plane by observing $d_{\mu\nu} \mathbf b , d_{\rho\mu} \mathbf b \in \im \epsilon$, which implies $[\mathbf{p}^X_\mu] = \nu [\mathbf{p}^X_\mu] = \rho [\mathbf{p}^X_\mu]$.

Turning to $Z$-excitations, we observe that $d_x \mathbf{a}_y = \epsilon(- \mathbf{x}_x)$ and $d_y \mathbf{a}_x = \epsilon(- \mathbf{x}_y)$, implying $\mathbf{l}_x \equiv -\mathbf{a}_y$ and $\mathbf{l}_y \equiv -\mathbf{a}_x$ are mobile along the $x$- and $y$-directions, respectively. In fact, these excitations are lineons, as is well known. In addition, $d_z (\mathbf{a}_x + \mathbf{a}_y) = \epsilon(\mathbf{x}_z)$, implying that $\mathbf{l}_z \equiv (\mathbf{a}_x + \mathbf{a}_y)$ is mobile along $z$ (in fact, it is a lineon). Moreover, the lineon dipoles $\mathbf{p}^Z_\mu = \overbar{d_{\mu}} \mathbf{l}_{\nu}$ are $\nu \rho$-planons, satisfying $[\overbar{d_\mu} \mathbf{l}_\nu ] = [-\overbar{d_\mu} \mathbf{l}_\rho ] $.

To determine the plane charge structure, we need to consider braiding planons around a general excitation ${\mathbf e} = r_{a_x} \mathbf{a}_x + r_{a_y} \mathbf{a_y} + r_b \mathbf{b}$, where $r_{a_x}, r_{a_y}, r_b \in R$. Suppose we transport the lineon dipole $\mu^m \mathbf{p}^Z_\mu$ around the boundary of a disc $D_\mu$ within the planon's plane of mobility. We denote the associated string operator by $\mathbf{W}^X_\mu(D_\mu) \in P$, where the $X$ superscript reflects the fact that $X$-Pauli operators are needed to transport $Z$-excitations. This string operator, as illustrated in Fig.~\ref{fig:x-cube}(c), is a product of cube stabilizers within $D_\mu$; that is,
$\mathbf{W}^X_\mu(D_\mu) = \sum_{g \in D_\mu} g \sigma( \mathbf{B})$. Taking the limit where $D_\mu$ is large, we see that braiding  $\mu^m \mathbf{p}^Z_{\mu}$ around $\mathbf{e}$  simply counts excitations of cube stabilizers within the plane of mobility, by adding up the eigenvalues modulo $n$ of stabilizers within the plane. The statistical phases for braiding with $\mathbf{p}^Z_{\mu}$ and all its translates along the $\mu$-direction are recorded in the value of the map $p_{\mu}(r_b) \in R_\mu$, where $\T_\mu = \T_{S_\mu} = \langle \nu, \rho \rangle$. We thus identify the plane charge modules for $\mu$-oriented $X$-excitations as $Q^X_\mu = R_\mu$.

Similarly, the string operator transporting the fracton dipole $\mu^m \mathbf{p}^X_\mu$ around the boundary of a disc $D_\mu$ is denoted $\mathbf{W}^Z_\mu(D_\mu)$, and this operator is a product of star stabilizers within $D_\mu$, as shown in Fig.~\ref{fig:x-cube}(d). We have $\mathbf{W}^Z_x(D_x) = \sum_{g \in D_x} g \sigma( -\mathbf{A}_x)$ and $\mathbf{W}^Z_x(D_y) = \sum_{g \in D_y} g \sigma( \mathbf{A}_y)$, and finally $\mathbf{W}^Z_\mu(D_z) = \sum_{g \in D_z} g \sigma(\mathbf{A}_x - \mathbf{A}_y)$. The statistical phases for braiding with ${\mathbf p}^X_\mu$ and all its translates are encoded in the values $-p_x(r_{a_x}) \in R_x$, $p_y(r_{a_y}) \in R_y$, and $p_z(r_{a_x}) - p_z(r_{a_y}) \in R_z$, for $\mu = x,y,z$, respectively. We thus identify $Q^Z_\mu = R_{\mu}$.

Putting these results together, we have $Q = Q^Z \oplus Q^X = R_{xyz} \oplus R_{xyz} = R_{xyzxyz}$, with the plane charge map $\pi : E \to Q$ specified in matrix form by
\begin{equation}
\pi =  \left(\begin{array}{cc|c}
       -p_x &
       \\
       & p_y
       \\
        p_z & -p_z
       \\ \hline
       && p_x
       \\
       && p_y
       \\
       && p_z
    \end{array}\right) \text{,}
\end{equation}
where the upper (lower) block is $\pi^Z$ ($\pi^X$), and the $i$th standard basis vector of $Q = R_{xyzxyz}$ is the identity element of the $i$th direct summand.\footnote{Even though we view the direct summands $R_\mu$ as modules over $R$, they come with a distinguished element (the identity) because they are also themselves rings.} We also denote the six standard basis vectors by $\mathbf{q}^Z_x, \mathbf{q}^Z_y, \mathbf{q}^Z_z,\mathbf{q}^X_x, \mathbf{q}^X_y, \mathbf{q}^X_z$. We observe that $\im \pi$ is generated by $\pi(\mathbf{a}_x) = -\mathbf{q}^Z_x + \mathbf{q}^Z_z$, $\pi(\mathbf{a}_y) = \mathbf{q}^Z_y - \mathbf{q}^Z_z$, and $\pi(\mathbf{b}) = \mathbf{q}^X_x + \mathbf{q}^X_y + \mathbf{q}^X_z$. This implies that the planons introduced above are represented in terms of plane charges by $\pi(\mathbf p^Z_\mu) = (1-\bar{\mu}) \mathbf q^Z_\mu$ and $\pi(\mathbf p^X_\mu) = (1-\mu)\mathbf q^X_\mu$. That is, each ``elementary'' planon is represented by adjacent parallel plane charges of opposite signs. 

To complete the $p$-modular sequence, we need to identify the constraint module $C$, and the map $\omega : Q \to C$ that vanishes exactly on physical configurations of plane charges, \emph{i.e.} those lying in $\im \pi \subset Q$.  A constraint is an $R$-linear map $\omega_i : Q \to \R_{\T} \cong \Z_n$ that vanishes on the generators $\pi(\mathbf a_x), \pi(\mathbf a_y), \pi(\mathbf b)$. We find three such constraints, so we choose $C = \R^3_{\T} \cong \Z^3_n$ and express the constraints as the rows of the matrix 
\begin{equation}
\omega =   \left(\begin{array}{ccc|ccc}
            c_x & c_y &  c_z
            \\ \hline
            &&& c_x && - c_z
            \\
            &&&& c_y & - c_z
        \end{array}\right) \text{,}
\end{equation}
where the upper (lower) block is $\omega^Z$ ($\omega^X$).
By construction we have $\im \pi \subset \ker \omega$, which is easily verified by showing that $\omega \pi = 0$ using matrix multiplication and the identity $c_x p_x = c_y p_y = c_z p_z$.

We have thus constructed a $p$-modular sequence of the form Eq.~\ref{eq:code-sequence}. We know the sequence is a chain complex, but we still need to show it is exact. To do this, we  replace $R = \Z_n \T$ with $R' = \Z \T$ in the construction of the modules in the sequence, noting that with this replacement, $R'_\mu = \Z [ \T / \T_\mu ]$ and $R'_\T \cong \Z$. The maps between modules are the same matrices given above. Ignoring the translation action, this gives a sequence of free $\Z$-modules, which has the property that tensoring with $\Z_n$ recovers the original sequence. The new sequence is convenient to work with because $\Z \T \cong \Z[x^\pm, y^\pm, z^\pm]$ is a Laurent polynomial ring over a Euclidean domain, so standard commutative algebra techniques apply \cite{Atiyah_1969,Eisenbud_1995}. If we can show the new sequence is exact, then exactness of the original sequence follows from a universal coefficient theorem, specificially Theorem~3A.3 of Ref.~\onlinecite{hatcherbook}.

We illustrate how to show the new sequence is exact by discussing exactness at $E$, which is particularly important because it is equivalent to injectivity of $\pi : S \to Q$ and thus to $p$-modularity.  Exactness at $E$ is equivalent to $\im \epsilon^X = \ker \pi^X$ and $\im \epsilon^Z = \ker \pi^Z$. Letting $\mathfrak p_\mu = \ker p_\mu$, by Eq.~\ref{eq:ker-p}, we have $\mathfrak p_\mu = (d_\nu, d_\rho)$. The first statement can be shown by computing the following intersection of ideals in $R'$,
\begin{equation}
    \mathfrak p_x \cap \mathfrak p_y \cap \mathfrak p_z
    = (d_y, d_z) \cap (d_z, d_x) \cap (d_x, d_y)
    = (d_{xy}, d_{yz}, d_{zx}),
\end{equation}
which can be shown using Gr\"{o}bner bases. Similarly, the second statement reduces to computing the following intersection of submodules of $E$:
\begin{equation}
(\mathfrak p_y \mathbf a_x + R \mathbf a_y) 
    \cap (\mathfrak p_x \mathbf a_y + R \mathbf a_x) \cap (\mathfrak p_z \mathbf a_x + \mathfrak p_z \mathbf a_y + R(\mathbf a_x + \mathbf a_y))
    = (d_x) \mathbf a_x + (d_y) \mathbf a_y + (d_z) (\mathbf a_x + \mathbf a_y) \text{.}
\end{equation}
We evaluated these intersections using the Macaulay2 software system \cite{M2}. In fact, Macaulay2 works with modules over $\Z[x,y,z]$, \emph{i.e.} with ordinary polynomials of non-negative degree. However, because $\Z\T$ is a localization of $\Z[x,y,z]$, and because localization commutes with taking intersections of submodules, one can do the necessary calculations in $\Z[x,y,z]$; see Chapter 3 of Ref.~\onlinecite{Atiyah_1969} for more information. We also checked directly in Macaulay2 that $\im \epsilon = \ker \pi$, using the fact that localization is an exact functor; exactness of the rest of the $p$-modular sequence was also checked in this way.

Next we work out the weighted QSS. We need to understand the modules $S^{(i)} \subset S$; to avoid cumbersome notation, we identify $S \cong \pi(S) \subset Q$. According to Eq.~\ref{eq:Si-formula},
\begin{equation}
S^{(1)} = \Big( \bigoplus_{\mu} \ker \omega |_{Q^Z_\mu} \Big) \oplus \Big( \bigoplus_\mu \ker \omega |_{Q^X_\mu} \Big) \text{.}
\end{equation}
These restrictions of $\omega$ are the individual columns of $\omega$ in matrix form. For example,
\begin{equation}
\omega|_{Q^Z_\mu} = \left( \begin{array}{c}
c_\mu \\ 0 \\ 0 \end{array}\right).
\end{equation}
It is straightforward to see that $S^{(1)}$ is given precisely by $R$-linear combinations of the planons $\pi(\mathbf p^Z_\mu)$ and $\pi(\mathbf p^X_\mu)$. 

To compute the quotient ${\cal W}_1 = S / S^{(1)}$, we observe that any element $\mathbf q \in Q$ can be written in the form $\mathbf q = \mathbf q_0 + \mathbf p$, where $\mathbf p \in S^{(1)}$ is an $R$-linear combination of $\pi(\mathbf p^Z_\mu)$ and $\pi(\mathbf p^X_\mu)$, and $\mathbf q_0 = \sum_{\mu} a_\mu \mathbf{q}^Z_\mu + \sum_{\mu} b_\mu \mathbf{q}^X_{\mu}$ for $a_\mu, b_\mu \in \Z_n$. We require $\mathbf q \in S$ by imposing the constraint $\omega(\mathbf q) = 0$, which reduces to the equations $a_z = -(a_x + a_y)$ and $b_x = b_y = b_z \equiv b$, so for $\mathbf q \in S$ we have $\mathbf q = \mathbf q_0 + \mathbf p$, where
\begin{equation}
\mathbf q_0 = a_x \mathbf{q}^Z_x + a_y \mathbf{q}^Z_y - (a_x+a_y) \mathbf{q}^Z_x + b \sum_\mu \mathbf{q}^X_\mu \text{.}
\end{equation}
In fact, $\mathbf q_0$ and $\mathbf p$ are uniquely determined by $\mathbf q$, giving a direct sum decomposition $S \cong S^{(1)} \oplus \mathcal W_1$. The key observation is that $\mathbf q_0 \in S^{(1)}$ implies $\mathbf q_0 = 0$, because requiring the maps $\omega|_{Q^{Z,X}_\mu} \circ P_{Q^{Z,X}_{\mu}}$ to annihilate $\mathbf q_0$, where $P_{Q^{Z,X}_\mu} : Q \to Q^{Z,X}_\mu$ is projection onto the direct summand,  gives $a_x = a_y = b = 0$.
This establishes that $\mathcal W_1 \cong \Z_n^3$, where $(a_x, a_y, b)$ parametrize the elements of $\mathcal W_1$.

To compute $\mathcal W_2$, we observe that $\mathbf q^Z_x - \mathbf q^Z_z$ and $\mathbf q^Z_y - \mathbf q^Z_x$ are intrinsic weight-2 excitations, as they are not annihilated by the maps $\omega|_{Q^{Z,X}_\mu} \circ P_{Q^{Z,X}_\mu}$. The above discussion implies that any $\mathbf q \in S$ can be written
\begin{equation}
\mathbf q = b \sum_\mu \mathbf q^X_\mu + \mathbf w_2 \text{,}
\end{equation}
where $\mathbf w_2 \in S^{(2)}$. This decomposition is also unique; that is, given $\mathbf q$, both $b \in \Z_n$ and $\mathbf w_2$ are uniquely determined. To establish this we first observe that $b \sum_\mu \mathbf q^X_\mu \in S^{(1)}$ implies $b = 0$. Moreover, it is straightforward to show that $\ker \omega|_{Q^X_x + Q^X_y} = \ker \omega|_{Q^X_x} \oplus \ker \omega|_{Q^X_y}$, and similarly for the other restrictions to sums of two $Q^x_\mu$; this implies there are no intrinsic $w=2$ $X$-excitations, so $b \sum_\mu \mathbf q^X_\mu \in S^{(2)}$ implies $b \sum_\mu \mathbf q^X_\mu \in S^{(1)}$ and thus $b=0$. So we have $S \cong \mathcal{W}_2 \oplus S^{(2)}$ with $\mathcal W_2 \cong \Z_n$. Finally it is clear that $\mathcal W_3 = 0$, because there are no excitations of weight higher than three. Therefore we obtain the weighted QSS
\begin{equation}
\mathcal W = (\Z_n^3, \Z_n, 0).
\end{equation}

It is well known that the X-cube model admits a foliated RG where layers of toric codes are exfoliated. We recover this using the results of Appendix~\ref{appendix:foliated-RG} as summarized at the end of Sec.~\ref{section:RG}. Identifying $R_\mu = \Z_n[t^{\pm}]$, we let $\Delta = 1-t$, then $\{ \Delta \mathbf q^X_{\mu}, \bar{\Delta} \mathbf q^Z_{\mu} \}$ forms a basis for $\operatorname{ker} \omega|_{Q_\mu}$, and the statistics described above are encoded in maps $\Phi_\mu : \operatorname{ker} \omega|_{Q_\mu} \to Q_{\mu}^*$ defined by 
\begin{align}
\Delta \mathbf q^X_\mu \mapsto ( \mathbf q^Z_\mu)^* \\
\bar{\Delta} \mathbf q^Z_\mu \mapsto ( \mathbf q^X_\mu)^* \text{.}   
\end{align}
The conditions of Proposition~\ref{prop:foliated-rg} are thus satisfied. Since $\T$ acts trivially on $C$, we set $m = 2$ to obtain the well known foliated RG that exfoliates one layer of $\Z_n$ toric code per unit cell of $\T^{(2)} \subset \T$ translation symmetry, for each orientation $\mu = x,y,z$.

\subsection{Toric code layers}

\begin{figure}
    \centering
    \includegraphics[width = 0.5\textwidth]{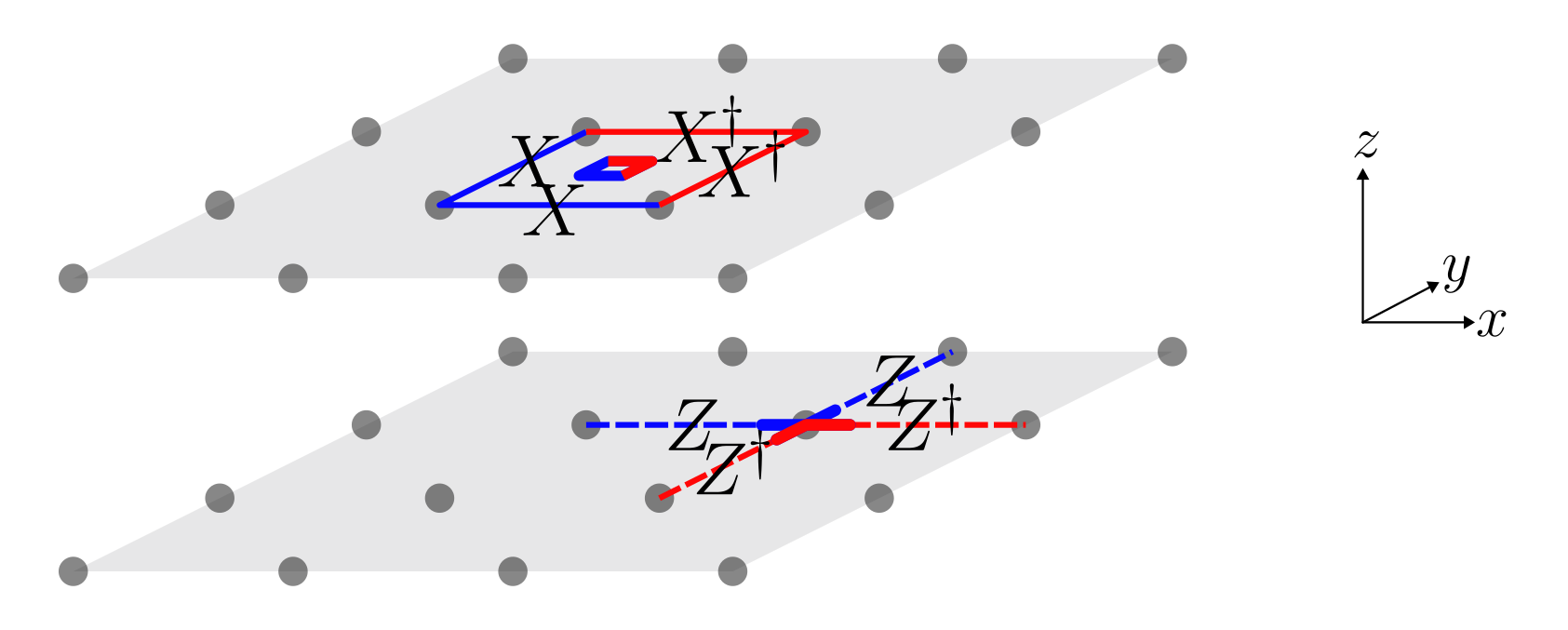}
    \caption{Plaquette and star stabilizers of $\Z_n$ Toric code layers.}
    \label{fig:TCL-stabilizers}
\end{figure}

\begin{figure}
    \centering
    \includegraphics[width = 0.8\textwidth]{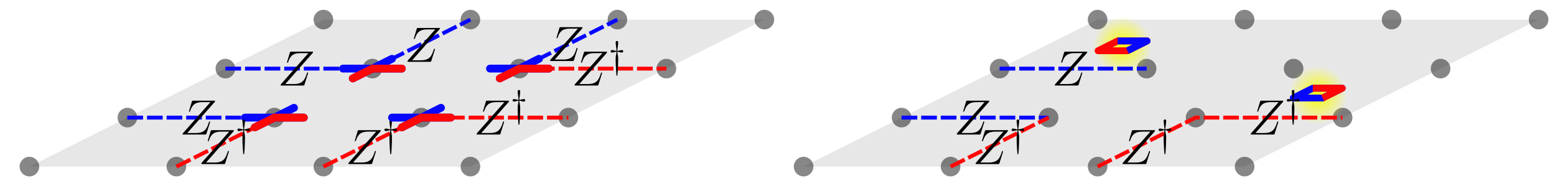}
    \caption{Coplanar product of toric code star stabilizers is closed string operator for plaquette excitations, yielding
    the standard mutual statistics of $e$ and $m$ toric code anyons. 
    }
    \label{fig:TCL-loop}
\end{figure}

We now briefly discuss the simplest example, namely layers of $\Z_n$ toric code in a three dimensional simple cubic lattice. We partition the lattice into planes labeled by $z$-coordinate and put the usual plaquette and star stabilizers of the $\Z_n$ toric code on each plane, as shown in Fig. \ref{fig:TCL-stabilizers}. We have $P = R^4$ and $E = G = R^2$. The excitation map $\epsilon : P \to E$ is given by
\begin{equation}
\epsilon = \left(
    \begin{array}{cc|cc}
        & & - d_x & d_y
        \\ \hline
        - \bar d_y & - \bar d_x & &
    \end{array}\right) \text{,}
\end{equation}
where $d_\mu = 1-\mu$.
As usual, we refer to elementary star (plaquette) excitations as $e$ ($m$) anyons.  Transporting an $e$ anyon around a large loop within its layer of mobility counts the number of $m$ anyons (modulo $n$) in the same layer, and vice versa. The result of such braiding around an arbitrary excitation is given by the value of the map $\pi : E \to Q = R_{zz}$, where
\begin{equation}
\pi = \left( \begin{array}{cc}
p_z & \\
& p_z
\end{array}\right) \text{.}
\end{equation}
It is easy to see this map is surjective, so the constraint module $C = 0$. The resulting $p$-modular sequence is exact. Therefore we can identify $S \cong Q = R_{zz}$. Recalling that elements of $R_z$ are Laurent polynomials in $z$ with $\Z_n$ coefficients, we see that elements of $S^Z = R_z$ ($S^X = R_z$) simply record the total number of $e$ ($m$) anyons on each layer, where the coefficient of $z^l$ gives the number of anyons in the corresponding layer modulo $n$. Clearly $S = S^{(1)}$ since all non-trivial excitations are planons, so $\mathcal W_1 = 0$. Note that we could have anticipated $\mathcal W_1 = C = 0$ from the fact that the $p$-theory of any stack of 2d layers is essentially 1-planar.

Finally, there is an obvious foliated RG which simply corresponds to partitioning the system into even and odd layers. This is also the RG predicted by Proposition~\ref{prop:foliated-rg} if we denote the standard basis vectors of $Q = R_{zz}$ by $\mathbf q_{\fe}, \mathbf q_{\fm}$, set $\Delta = 1$, and choose $m = 2$.

\subsection{Anisotropic model}

\begin{figure}
    \centering
    \includegraphics[width=0.4\textwidth]{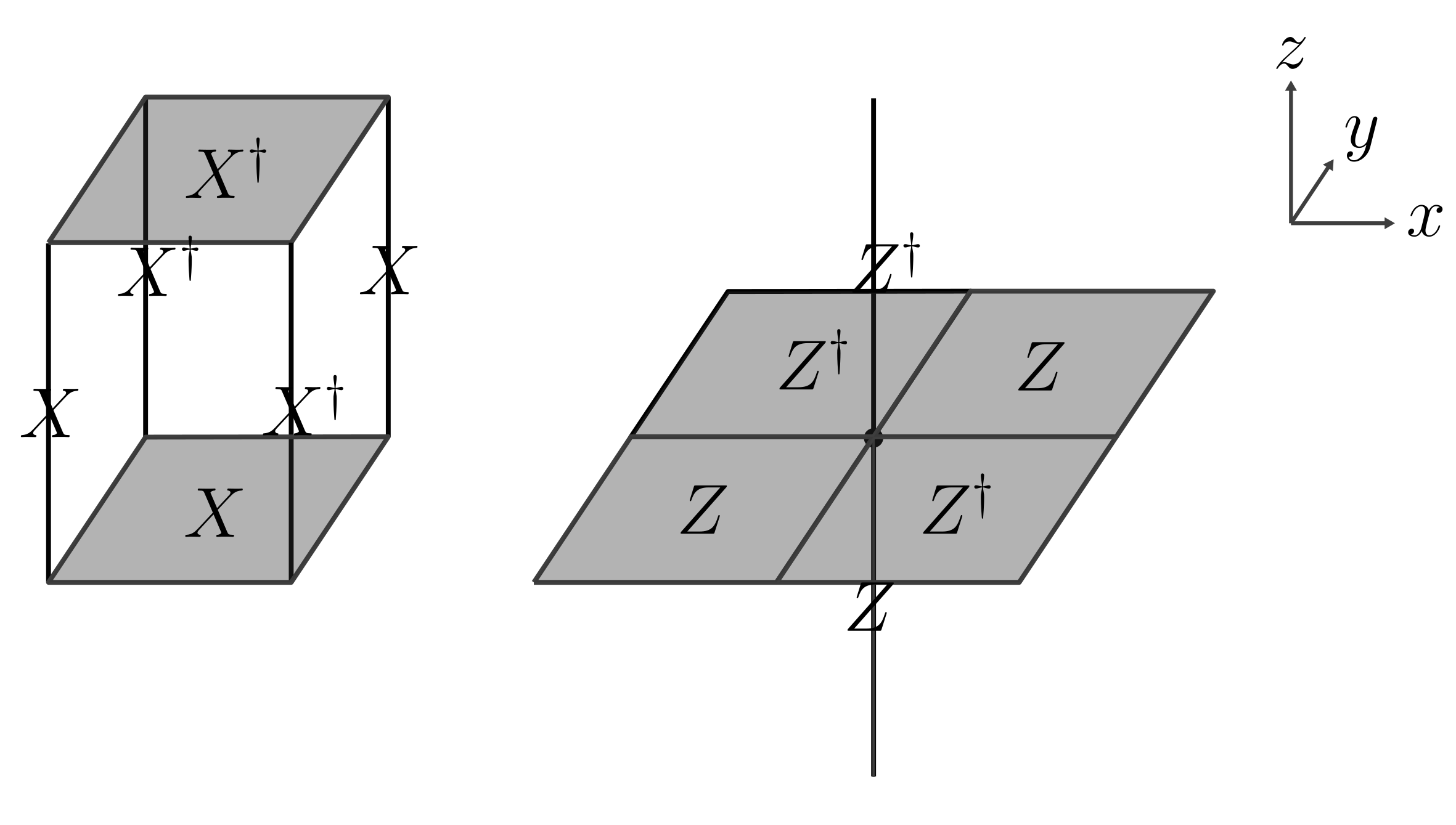}
    \caption{$\Z_n$ Anisotropic model stabilizers. Qudits live on vertical edges and horizontal faces of simple cubic lattice.} 
    \label{fig:aniso-stab}
\end{figure}

Here we record results for the $\Z_n$ Anisotropic model \cite{Shirley_2018}. This is a 2-planar model with lineon and planon excitations, but no fractons. The stabilizers are shown in Fig. \ref{fig:aniso-stab}. We set $\T_x = \langle y, z\rangle$ and $\T_y = \langle x, z \rangle$.
The $p$-modular sequence is
\begin{align}
    0 \to 
    R^2
    \xrightarrow{
        \lambda_2 \epsilon^\dagger
    }
    R^4 
    \xrightarrow{
			\begin{pmatrix}
				&&  - d_z & d_{xy}
				\\
				- \bar d_{xy} & - \bar d_z
			\end{pmatrix}
    }
    R^2
    \xrightarrow{
    \begin{pmatrix}
        p_x
        \\
        -p_y
        \\ & p_x \\ & p_y
    \end{pmatrix}
    }
    R_{xyxy}
    \xrightarrow{
    \begin{pmatrix}
        c_x & c_y 
        \\
         && c_x & - c_y
    \end{pmatrix}
    }
    R_\T^2  
    \to 0 \text{,}
    \label{eq:2xc}
\end{align}
where $d_\mu = 1-\mu$.
Note that $Q = R_{xyxy}$, and $C \cong \Z_n^2$. We verified exactness of the $p$-modular sequence along the lines described for the X-cube model. For instance,  exactness at $E = R^2$ is due to the equality of ideals $(d_z, d_{xy})= (d_y, d_z) \cap (d_z, d_x) = \ker p_x \cap \ker p_y$. 

Denoting the standard basis vectors of $Q$ by $\mathbf q^Z_x, \mathbf q^Z_y, \mathbf q^X_x,\mathbf q^X_y$, the module $S^{(1)}$ is generated by planons $\mathbf p^{Z,X}_\mu = (1-\mu)\mathbf q^{Z,X}_\mu$, where $\mu = x,y$. Given $\mathbf q \in S$, we have $\mathbf q = \mathbf q_0 + \mathbf p$, where $\mathbf p \in S^{(1)}$ and $\mathbf q_0 = a( \mathbf q^Z_x - \mathbf q^Z_y) + b (\mathbf q^X_x + \mathbf q^X_y )$, with $a,b \in \Z_n$, and where $\mathbf q_0$ and $\mathbf p$ are unique given $\mathbf q$. Therefore we have a direct sum decomposition $S \cong \mathcal W_1 \oplus S^{(1)}$ with $\mathcal W_1 \cong \Z_n^2$, and the weighted QSS
\begin{equation}
\mathcal W_\mathrm{An} = (\Z_n^2, 0) \text{,}
\end{equation}
where $\mathcal W_2 = 0$ since there are no excitations of weight greater than two. The two $\Z_n$ factors reflect the well-known fact that there are both $Z$ and $X$ lineons mobile along the $z$-axis.

If we define $\mathbf q_{\fe \mu} = \mathbf q^Z_{\mu}$ and $\mathbf q_{\fm \mu} = - \mathbf q^X_{\mu}$, identify $R_\mu = \Z_n[t^{\pm}]$, and define $\Delta = 1-t$, then the statistics is encoded by maps $\Phi_\mu$ defined by $\Phi_\mu(\Delta \mathbf q_{\fe \mu} ) = \mathbf q^*_{\fm \mu}$ and $\Phi_\mu(\bar{\Delta} \mathbf q_{\fm \mu} ) = \mathbf q^*_{\fe \mu}$. Since $\T$ acts trivially on $C$, we set $m=2$ and obtain a foliated RG where we coarsen the translation symmetry to $\T^{(2)} \subset \T$ and exfoliate one layer each of $x$- and $y$-oriented $\Z_n$ toric code per $\T^{(2)}$ unit cell.

\subsection {Checkerboard}
\label{section:checkerboard}

\begin{figure}
    \centering
    \includegraphics[width=0.6\textwidth]{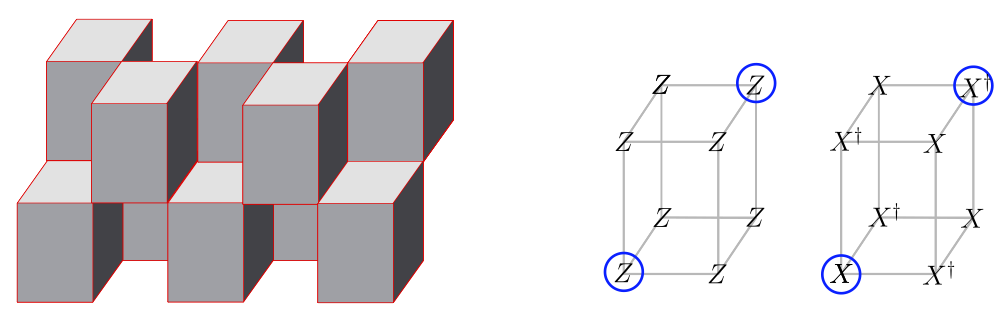}
    \caption{$\Z_n$ Checkerboard stabilizers. 
    Qudits live on vertices of simple cubic lattice. Stabilizers live on the fcc sublattice (shaded cubes on the left). There are two qudits per fcc unit cell; we take the basis sites to be those circled on the right.
    }
    \label{fig:check-stab}
\end{figure}

Next, we consider the $\Z_n$ Checkerboard model \cite{Vijay_2016}, whose stabilizers we depict in Figure \ref{fig:check-stab}. It is known that the $n=2$ model is unitarily equivalent to a stack of two $\Z_2$ X-cube models \cite{Shirley_2019_checkerboard}. However, a field theory analysis predicted that this statement does not hold for $n > 2$ \cite{Gorantla_2021}. Our analysis verifies this prediction, and provides additional information on the relationship between the $n > 2$ Checkerboard model and other fracton models.

Working over the fcc translational symmetry group $\T = \langle xy, yz, zx \rangle$, the model is described by the $p$-modular code sequence
\begin{widetext}
    \begin{align}
    0 \to R^2
    \xrightarrow{
    \lambda \epsilon^\dagger
    }
    R^4
    \xrightarrow{
    \begin{pmatrix}
        & & \bar \tau & \tau
        \\
         - \bar \tau &  \tau && 
    \end{pmatrix}
    }
    R^2
    \xrightarrow{
    \begin{pmatrix}
        \epsilon_x p_x
        \\
        \epsilon_y p_y
        \\
        \epsilon_z p_z
        \\
        & 
        \epsilon_x p_x
        \\ & 
        \epsilon_y p_y
        \\ & 
        \epsilon_z p_z
    \end{pmatrix}
    }
    R_{xyzxyz}
    \xrightarrow{
    \omega
    }
    \Z_n^6
    \to 0,
\end{align} 
\end{widetext}
where $\tau = 1 + xy + yz + zx$, 
$\epsilon_\mu : \T \to \Z_n^\times$ is the homomorphism
given by $\nu\rho, \nu\rho^{-1} \mapsto -1$, $\mu\nu \mapsto 1$,
and $R_\mu = \Z_n \T / \T_\mu$ where $\T_\mu = \langle \nu \rho, \nu \bar{\rho} \rangle$. We have $\ker \epsilon_\mu p_\mu = (1 + \nu\rho, 1 + \nu\bar \rho)$. Exactness of the sequence at $E = R^2$ is the same as the equality of ideals $(\tau, \bar \tau) = \bigcap_\mu (1 + \nu\rho, 1 + \nu\rho^{-1})$.

To express the constraint map, it is convenient to coarsen the translation symmetry to the subgroup $\T' = \langle x^2, y^2, z^2\rangle$. This enlarges the unit cell by a factor of four, which is reflected in the fact that $R^c \cong (R')^4$ where $R' = \Z_n \T'$, and where we denote the coarsening operation by a $c$ superscript; that is, $R^c$ is $R$ viewed as an $R'$-module, and $R^c \cong (R')^4$ as $R'$-modules. We identify $R^c$ with $(R')^4$ by choosing the ordered basis $(1, yz, xz, xy)$ of elements of $R^c$. We thus have $G^c = E^c = (R')^8$ and $P^c = (R')^{16}$. The maps can also be coarsened to obtain a $p$-modular sequence. For instance, the identity $1 : R \to R$ becomes the $4 \times 4$ identity matrix $(R')^4 \to (R')^4$. A more interesting result is
\begin{equation}
(yz)^c = \left( \begin{array}{cccc}
& y^2 z^2 & & \\
1 & & & \\
& & & y^2 \\
& & z^2 & \end{array}\right) \text{.}
\end{equation}
Using these and similar expressions for $(xz)^c$ and $(xy)^c$, each entry of the $2 \times 4$ matrix representing $\epsilon : R^4 \to R^2$ is expanded to a $4 \times 4$ block. As an illustration, we explicitly write down $(\epsilon^X)^c : (P^Z)^c =(R')^8 \to (E^X)^c = (R')^4$:
\begin{equation}
(\epsilon^X)^c = \left( \begin{array}{cccccccc} 
-1 & -1 & -1 & -1 & 1 & y^2 z^2 & x^2 z^2 & x^2 y^2 \\
- \bar{y}^2 \bar{z}^2 & -1 & - \bar{y}^2 & -\bar{z}^2 & 1 & 1 & x^2 & x^2  \\
- \bar{x}^2 \bar{z}^2 & - \bar{x}^2 & - 1 & -\bar{z}^2 & 1 & y^2 & 1 & y^2  \\ 
- \bar{x}^2 \bar{y}^2 & - \bar{x}^2 & - \bar{y}^2 & -1 & 1 & z^2 & z^2 & 1  
\end{array}\right) \text{.}
\end{equation}
The map $(\epsilon^Z)^c$ takes the same form, but without the minus signs in the left-hand $4 \times 4$ block.

To coarsen the plane charge module $Q = R_{xyzxyz}$, we observe that $R^c_\mu \cong R'_{\mu \mu}$ (as $R'$-modules), where $R'_\mu = \Z_n \T' / \T'_\mu$ and $\T'_\mu = \langle \nu^2 , \rho^2\rangle$. We identify $R^c_\mu$ with $R'_{\mu \mu}$ by choosing the ordered basis $( p_\mu(1), p_\mu (\mu \nu) )$ of elements of $R^c_\mu$, so $(Q^{Z,X})^c = R'_{xxyyzz}$. The coarsened maps $(\epsilon_\mu p_\mu)^c : R^c \to R^c_\mu$ can thus be represented as $2 \times 4$ matrices, and we have
\begin{equation}
(\pi^Z)^c = (\pi^X)^c = \left( \begin{array}{cccc} 
p'_x & -p'_x & & \\
& & -p'_x & p'_x \\
p'_y &  & -p'_y & \\
& p'_y & & -p'_y \\
p'_z & & & -p'_z \\
& -p'_z & p'_z & 
\end{array}\right) \text{.}
\end{equation}

The constraint module $C^c = (C^Z)^c \oplus (C^X)^c$ and $(C^{Z,X})^c = (R'_{\T'})^3$, with the constraint map $\omega^c$ given by
\begin{equation}
(\omega^Z)^c = (\omega^X)^c = \begin{pmatrix}
        &  c'_x  &  -  c'_y & &  c'_z
        \\
        c'_x  & & & c'_y &   - c'_z
        \\
        - c'_x & & c'_y & & & c'_z 
    \end{pmatrix} \text{,}
\end{equation}
where $c'_\mu : R'_\mu \to R'_{\T'}$ is the natural projection. It is straightforward to check that $\omega^c \pi^c = 0$. We checked exactness of the coarsened $p$-modular sequence along the same lines described for the X-cube model.

We now work out the weighted QSS. Because $(\omega^Z)^c = (\omega^X)^c$, we have $\mathcal W^Z_i \cong \mathcal W^X_i$, and we focus on $Z$-excitations. We work with the coarsened $p$-modular sequence but omit ``$c$'' superscripts to avoid notational clutter. We denote the standard basis vectors for $Q^Z = R'_{xxyyzz}$ by $\mathbf q^Z_{\mu i}$, with $i=1,2$, in the order $\mathbf q^Z_{x1}, \mathbf q^Z_{x2}, \mathbf q^Z_{y 1}, \dots$. The module $S^{Z,(1)}$ is generated by planons $\mathbf p^Z_{\mu i} = (1- \mu^2) \mathbf q^Z_{\mu i}$. A general element $\mathbf q \in S$ can be uniquely expressed as $\mathbf q = \mathbf q_0 + \mathbf p$, where $\mathbf p \in S^{Z,(1)}$ and $\mathbf q_0 = \sum_\mu a_\mu \mathbf f_\mu$, where $a_\mu \in \Z_n$ and
\begin{equation}
\mathbf f_\mu = \mathbf q^Z_{\mu 1} - \mathbf q^Z_{\nu 2} + \mathbf q^Z_{\rho 2} \text{.}
\end{equation}
This gives a direct sum decomposition $S^Z \cong \mathcal W^Z_1 \oplus S^{Z,(1)}$, with $\mathcal W^Z_1 \cong \Z_n^3$. Moreover we have an isomorphism $\mathcal W^Z_1 \to \Z_n^3$ given by
\begin{equation}
\mathbf q_0 = \sum_\mu a_\mu \mathbf f_\mu \mapsto (a_x,a_y,a_z) \text{,}
\end{equation}
and we make this identification below.

By examining restrictions such as $\omega^Z|_{Q^z_x \oplus Q^z_y}$, we find that $S^{Z,(2)} \cong \tilde{S}^{(2)} \oplus S^{Z,(1)}$, where $\tilde{S}^{(2)}$ is the subgroup of $\Z^3_n$ generated by $\{ (0,1,1), (1,0,1), (1,1,0) \}$.
We have $\mathcal W^Z_2 \cong \Z_n^3 / \tilde{S}^{(2)}$, and we need to evaluate the quotient. An arbitrary element $(a_x, a_y, a_z) + \tilde{S}^{(2)} \in \Z_n^3 / \tilde{S}^{(2)}$ can be written in the form $(a,0,0) + \tilde{S}^{(2)}$ for some $a \in \Z_n$, because $(1,-1,0), (1,0,-1) \in \tilde{S}^{(2)}$ (these elements are differences of generators). Therefore $W^Z_2$ is cyclic and generated by $(1,0,0) + \tilde{S}^{(2)}$. By expressing $(1,0,0)$ as a linear combination of the generators of $\tilde{S}^{(2)}$ and solving for the coefficients, it is straightforward to show that $(1,0,0) \in \tilde{S}^{(2)}$ if and only if $n$ is odd. On the other hand, $(2,0,0) \in \tilde{S}^{(2)}$ for all $n$. Therefore we have
\begin{equation}
\mathcal W^Z_2 \cong \left\{ \begin{array}{ll}
0 , & n \text{ odd} \\
\Z_2 , & n \text{ even} 
\end{array} \right.
\end{equation}
Because there are no excitations above weight three, this gives the weighted QSS of the checkerboard model:
\begin{equation}
\mathcal W_\mathrm{Ch} = \left\{ \begin{array}{ll}
(\Z_n^6, 0 ) , & n \text{ odd} \\
(\Z_n^6, \Z_2^2, 0) , & n \text{ even}
\end{array}\right.
\end{equation}

We observe that
\begin{align}
    \mathcal W_\text{Ch}
    \cong
    \begin{cases}
        \mathcal W_\text{XC}^2 ,
        & n = 2
        \\
        \mathcal W_\text{An}^3 , &  \text{$n$ odd}
    \end{cases}.
\end{align}
For $n=2$, this is consistent with the well known finite depth unitary equivalence between the $\Z_2$ checkerboard model and a stack of two $\Z_2$ X-cube models \cite{Shirley_2019_checkerboard}. For $n$ odd, this result suggests that the checkerboard model is a stack of three anisotropic models. Indeed, we establish this in Appendix~\ref{appendix:checkerboard}, showing first in terms of $p$-theories that
$P_\text{Ch} \cong P_\text{An}^x \ominus P_\text{An}^y \ominus P_\text{An}^z$, where $P^\mu_\text{An}$ is the $p$-theory of the anisotropic model oriented with lineons mobile along the $\mu$-direction. Then we go on to exhibit a finite depth circuit relating the $n$-odd checkerboard model to a stack of three anisotropic models.

When $n = 2 \operatorname{mod} 4$, since $\Z_n \cong \Z_2 \oplus \Z_{n/2}$, the weighted QSS suggest that the checkerboard model is a stack of a $\Z_2$ and a $\Z_{n/2}$ checkerboard model. Indeed, this follows from the Chinese remainder theorem for Pauli codes (Proposition~29 of Ref.~\onlinecite{Ruba_2022}). On the other hand, checkerboard models with $n = 2^k$ for $k > 1$ do not appear to be stacks of other known fracton models and are worthy of further study in future work.

Turning to foliated RG, the statistics are captured in maps $\Phi_\mu$ defined by
\begin{align}
\Delta \q^Z_{\mu 1} \mapsto - (\q^X_{\mu 2})^* \\
\Delta \q^Z_{\mu 2} \mapsto - t (\q^X_{\mu 1})^* \\
\bar{\Delta} \q^X_{\mu 1} \mapsto - \bar{t} (\q^Z_{\mu 2})^* \\
\bar{\Delta} \q^X_{\mu 2} \mapsto - (\q^Z_{\mu 1})^* \text{,}
\end{align}
where we identify $R'_\mu = \Z_n[t^{\pm}]$ and $\Delta = 1-t$. If we define new basis elements by $\q_{\fe \mu i} = \q^Z_{\mu i}$ ($i=1,2$), $\q_{\fm \mu 1} = - \q^X_{\mu 2}$ and $\q_{\fm \mu 2} = 
- t \q^X_{\mu 1}$, then the conditions of Proposition~\ref{prop:foliated-rg} are satisfied. We obtain a foliated RG where we coarsen the $\T'$ symmetry to $\T^{'(2)}$ and exfoliate two layers of $\Z_n$ toric code in each orientation per $\T^{'(2)}$ unit cell.

\subsection {4-planar X-cube}

\begin{figure}
    \centering
    \includegraphics[width=0.8\textwidth]{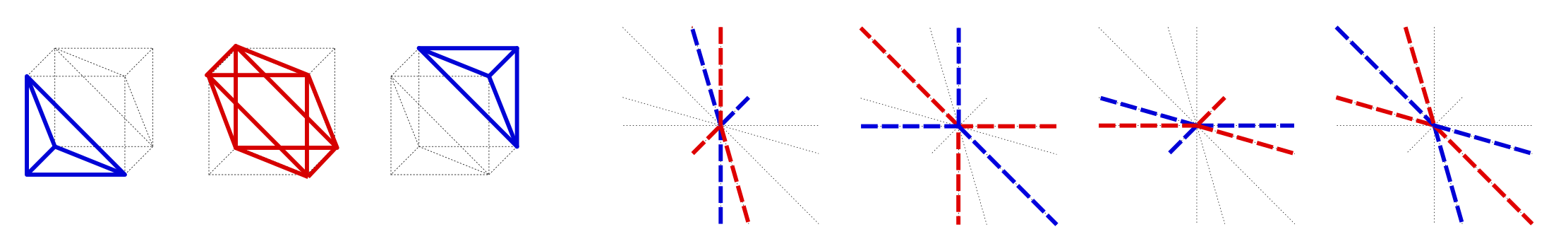}
    \caption{4-planar X-cube stabilizers. Qudits live on edges.  Solid (dashed) blue lines denote Pauli $X$ ($Z$). Red lines denote Hermitian conjugates.
    Each of the star stabilizers involves Pauli $Z$ operators that are coplanar within a $(100)$, $(010)$, $(001)$ or $(111)$ lattice plane. Only three of the four star stabilizers are independent. 
    }
    \label{fig:4xc-stab}
\end{figure}

Now we consider the 4-planar X-cube model \cite{Slagle_2018,Shirley_2019,Shirley_2023}, as presented in Appendix A of Ref.~\onlinecite{Shirley_2023}; the stabilizers are depicted in Fig.~\ref{fig:4xc-stab}. This model was referred to in Ref.~\onlinecite{Shirley_2023} as the 4-foliated X-cube model, but we prefer the term 4-planar, because this property is easier to check. While this model differs microscopically from a model of Ref.~\onlinecite{Shirley_2019} on the pyrochlore lattice (referred to in Ref.~\onlinecite{Shirley_2019} as the ``hyperkagome'' X-cube model), the models have identical modules of superselection sectors and plane charge structure \cite{Evan_unpublished}.  More generally, we expect, but have not shown, that any 4-planar X-cube model constructed following Ref.~\onlinecite{Slagle_2018} is in the same phase as the specific model considered here, as long as the four orientations of planes have the property that any triple of orientations intersects at a point. Here we state the main results without detailed derivations.

Before proceeding, we observe that the analysis here implies that the Chamon model is also $p$-modular, using the results of Ref.~\onlinecite{Shirley_2023}. After coarsening, the Chamon model is equivalent to a fermionic gauge theory with identical fusion to the $\Z_2$ 4-planar X-cube, with lineon flux excitations and fracton charge excitations. The planon remote detection properties between charge and flux sectors are identical to those of the 4-planar X-cube model, which is enough to imply $p$-modularity.

The translation group is $\T = \langle x,y,z \rangle$. We find the $p$-modular sequence:
\begin{align}
    0
    &\to 
    R^6
    \xrightarrow{
        \lambda \epsilon^\dagger
    }
    R^{12}\dots
    \nonumber
    \\
    &\xrightarrow{
    \left(
    \begin{tabular}{L L L L L L L L L L L L}
    &&&&&&  0 & d_y & - d_z & y- z & 0 & 0
    \\
    &&&&&&  -  d_x  & 0 & d_z & 0 & z - x& 0
    \\
    &&&&&& d_x &  - d_y & 0 & 0 & 0 & x - y
    \\
    -1 &  -1 & -1 &  -1 &  -1 & -1 & &&&&&
    \\
    (\bar y + \bar z)  &  (\bar z + \bar x) & (\bar x + \bar y) & (1 + \bar x) &  (1 + \bar y) & (1 + \bar z) &&&&&&
    \\
    - \overline {yz} &  - \overline {zx}  & - \overline {xy} & - \bar x & - \bar y & - \bar z &&&&&&
    \end{tabular}
    \right)
    }
    R^6
    \dots
    \nonumber
    \\
    &
    \xrightarrow{
        \begin{pmatrix}
            p_x & & &  
            \\
            &  p_y &&  
            \\
            && p_z & 
            \\
            - p_w & - p_w & - p_w & 
            \\
            &&& p_x & p_x & p_x
            \\
            &&&  p_y & p_y & p_y
            \\
            &&&  p_z & p_z & p_z
            \\
            &&&  \bar z p_w & p_w & z p_w
        \end{pmatrix}
    }
    R_{xyzwxyzw}
    \xrightarrow{
        \begin{pmatrix}
            c_x & c_y & c_z & c_w     
            \\
            &&&& c_x  &&& - c_w
            \\
            &&&&& c_y && - c_w
            \\
            &&&&&& c_z & - c_w
        \end{pmatrix}
    }
    R_\T^4
    \to 0 \text{,}
    \label{eq:4xc}
\end{align}
where $d_\mu = 1-\mu$. We define $\T_w \equiv \langle xy^{-1}, yz^{-1} \rangle$, the group of translations within $(111)$ lattice planes, and $R_w = \Z_n \T / \T_w$. As in other examples we define $\T_\mu = \langle \nu, \rho \rangle$ and $R_\mu = \Z_n \T / \T_\mu$. 
The model is 4-planar with planons moving within $(100)$, $(010)$, $(001)$ and $(111)$ lattice planes. Exactness of this sequence can be shown similarly to the standard X-cube case above. We find the weighted QSS sequence
\begin{equation}
\mathcal W_{4\mathrm{XC}} = (\Z_n^4, \Z_n, \Z_n, 0) \text{.}
\end{equation}
This tell us that the model supports intrinsic $w=2$ lineons and $w=4$ fractons, but there are no intrinsic $w=3$ excitations. We note that while the excitation map $\epsilon$ is much more complicated than for the 3-planar X-cube case, the maps $\pi$ and $\omega$ take very similar and simple forms. We speculate that this pattern continues for X-cube models of higher planarity constructed in Ref.~\onlinecite{Slagle_2018}.

The planon mutual statistics are encoded in maps $\Phi_a$, with $a = x,y,z,w$, defined by
\begin{eqnarray}
\Phi_\mu (\Delta \mathbf q^Z_\mu) &=& - (\q^X_\mu)^* \\
\Phi_\mu (\bar{\Delta} \mathbf q^X_\mu) &=& - (\q^Z_\mu)^* \\
\Phi_w (\Delta \q^Z_w) &=& - \bar{t} ( \q^X_w)^* \\
\Phi_w (\bar{\Delta} \q^X_w) &=& - t ( \q^Z_w)^* \text{.}
\end{eqnarray}
Here we have denoted the standard basis elements of $Q = R_{xyzwxyzw}$ by $\q^Z_x, \q^Z_y, \q^Z_z, \q^Z_w, \q^X_x, \q^X_y, \q^X_z, \q^X_w$. Defining new basis elements $\q_{\fe a} = \q^Z_a$, $\q_{\fm \mu}= - \q^X_\mu$ and $\q_{\fm w} = - \bar{t} \q^X_w$, the conditions of Proposition~\ref{prop:foliated-rg} are satisfied. Since $\T$ acts trivially on $C$, we choose $m = 2$ and obtain a foliated RG where one layer of $\Z_n$ toric code is exfoliated per unit cell of $\T^{(2)}$ symmetry, for each of the four orientations $a=x,y,z,w$.

\subsection {Four Color Cube}
\label{section:fcc}

\begin{figure}
    \centering
    \begin{subfigure}[b]{0.35\textwidth}
        \centering
        \includegraphics[width=0.5\textwidth]{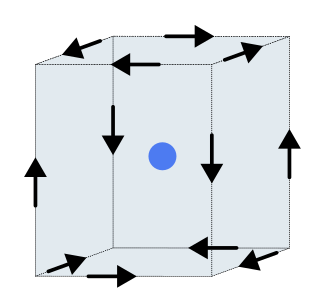}
        \caption{
            $\sigma( \mathbf A) = \overline{yz} \sigma_y \mathbf z_x - \overline{xy} \sigma_x \mathbf z_y$.
        }
    \end{subfigure}
    \begin{subfigure}[b]{0.35\textwidth}
        \centering
        \includegraphics[width=0.5\textwidth]{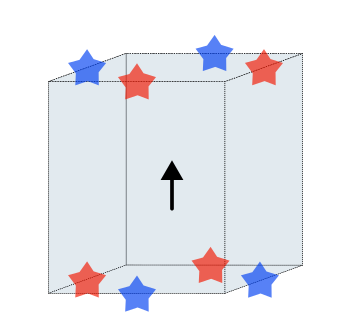}
        \caption{$\epsilon( \mathbf z_\mu) = \overline{\mu \nu} \sigma_\mu  \mathbf b$.
        }
    \end{subfigure}
    \\
    \caption{
        FCC stabilizers and locally generated excitations. The $\pm\mu$-oriented arrows represent translates of $\pm \mathbf z_\mu$ or $\pm \mathbf x_\mu$, where for $z$-oriented arrows we define $\mathbf z_z = -(\mathbf z_x + \mathbf z_y)$ and $\mathbf x_z = -(\mathbf x_x + \mathbf x_y)$, which makes the $x\to y \to z \to x$ cyclic permutation symmetry manifest. Blue (red) stars represent $+1$ ($-1$) excitations of the corresponding stabilizer.
    }
    \label{fig:fcc-stabilizer}
\end{figure}

\begin{figure*}
    \centering
    \includegraphics[width=0.8 \textwidth]{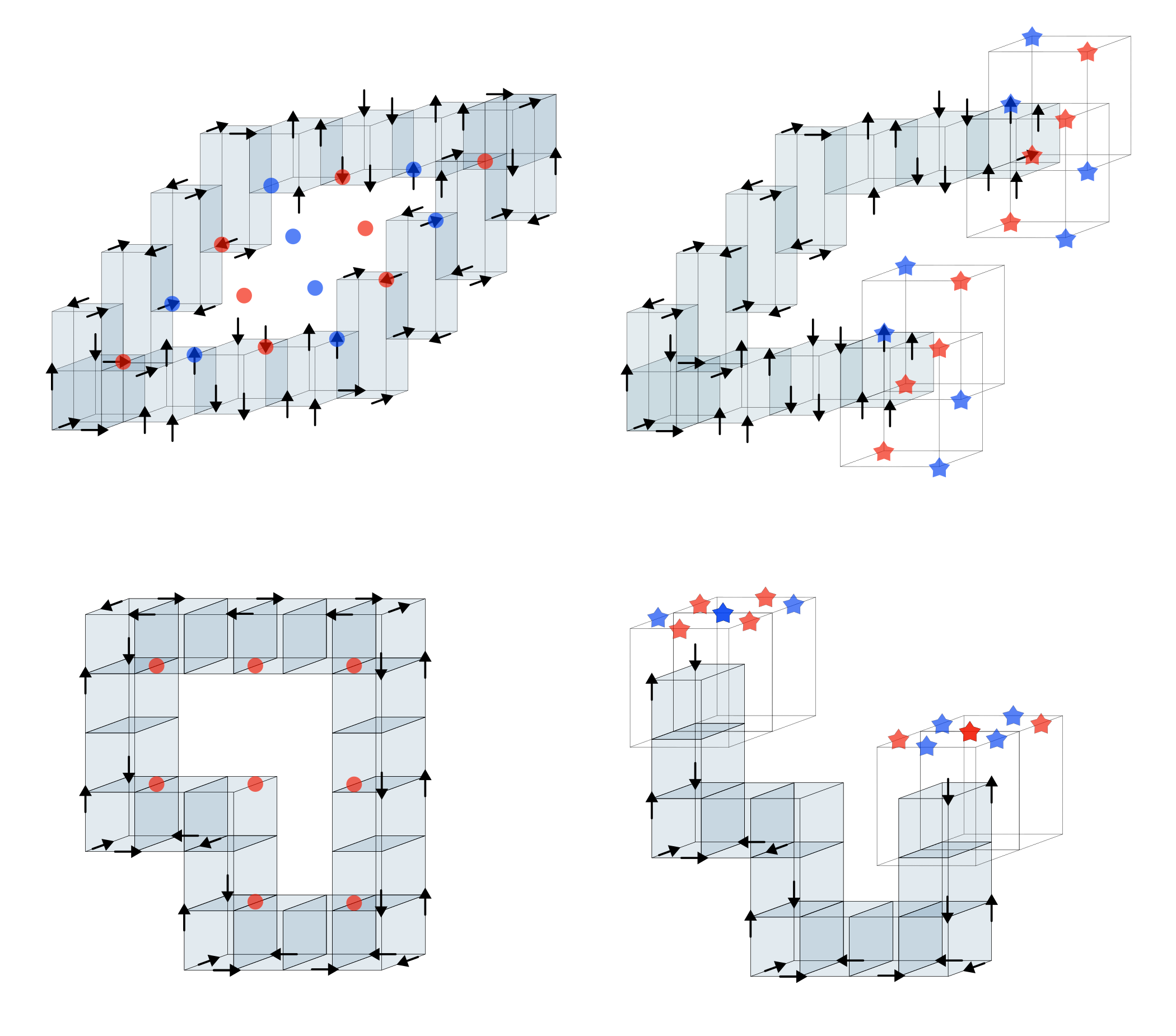}
    \caption{
        Closed (left) and open (right) string operators for $\{111\}$ (top) and $\{100\}$ planons in the FCC model. Arrows represent $\mathcal Z_\mu$ Pauli operators, with the arrow pointing in the direction of $\mu=x,y,z$. Each closed string is a product of stabilizers (inverse stabilizers) represented by blue (red) filled circles. On the right, excitations created by the open string are shown by blue (red) stars denoting +1 ($-1$) stabilizer eigenvalues.
    }
    \label{fig:planon-string-operators}
\end{figure*}

Finally, we analyze the four color cube (FCC) model. This model was introduced for $n = 2$ in Ref.~\onlinecite{Ma_2017} by coupling together a stack of four X-cube models, in a manner that was understood in terms of $p$-membrane condensation. The result was a model with two qubits and two stabilizers per site of the face centered cubic (fcc) lattice. Moreover the model is self dual, meaning it is invariant under a unitary transformation that exchanges $X$ and $Z$ Paulis. In Ref.~\onlinecite{Ma_2017}, it was shown that the model has lineon and fracton excitations, it was argued that the model has topological order (\emph{i.e.} the degenerate ground states with periodic boundary conditions are locally indistinguishable), and the ground state degeneracy was computed numerically. However, the mobility of excitations was not understood in detail; in particular, no planon excitations were identified, although it was found that the stabilizers obey constraints in both $\{ 100 \}$ and $\{ 111 \}$ planes. Here, we introduce a generalization of the FCC model to $n \geq 2$ and show it is $p$-modular and 7-planar, with planons mobile in the three $\{100\}$ planes and four $\{ 111 \}$ planes. We compute the phase invariants and show the $p$-theory admits a foliated RG.

The $\Z_n$ FCC model has two qudits at each site of the fcc lattice, with translation symmetry $\T = \langle xy, yz, zx\rangle$. Lattice sites can also be labeled by $g \in \T$. The Pauli operators at each site are $\mathcal X_x(g), \mathcal Z_x(g), \mathcal X_y(g), \mathcal Z_y(g)$ satisfying the nonstandard commutation relations
\begin{align}
    \mathcal Z_\mu(g) \mathcal X_{\mu'}(g')
    \mathcal Z_\mu^\dagger(g) \mathcal X_{\mu'}^\dagger(g')
    =
    \omega_n^{\delta_{gg'}(\delta_{\mu' \nu} - \delta_{\mu' \rho})},
    \label{eq:fcc-lambda}
\end{align}
where $\omega_n = e^{2 \pi i / n}$. Equation~(\ref{eq:fcc-lambda}) holds
for all values $\mu,\mu' = x,y,z$ once we define
$\mathcal X_z \equiv (\mathcal X_x \mathcal X_y)^{-1}$, $\mathcal Z_z \equiv (\mathcal Z_x \mathcal Z_y)^{-1}$. 
With this choice, both the stabilizers and commutation relations are invariant under cyclic permutations $x,y,z \mapsto y,z,x$; and therefore the excitation map $\epsilon = \sigma^\dagger \lambda$ is too.  We note that the commutation relations become the standard ones upon a relabeling $\cZ_1 = \cZ_x$, $\cX_1 = \cX_y$, $\cZ_2 = \cZ_y$, $\cX_2 = \cX_x^\dagger$, which we do not use below.

The use of Paulis with nonstandard commutation relations is motivated by the original presentation of the $\Z_2$ FCC model \cite{Ma_2017}.  There one starts with three qubits per site, with standard Paulis labeled by $\mu = x,y,z$ and subject to the on-site constraint $X_x X_y X_z = 1$.  In the constrained Hilbert space single $Z_\mu$ operators are unphysical and instead one works with products of two $Z$'s like $Z_\mu Z_\nu$.  The effective Pauli operators within the constrained subspace are $\cX_\mu = X_\mu$ and $\cZ_\mu = Z_\nu Z_\rho$, and these satisfy the $\Z_2$ version of the commutation relations above. The duality exchanging $X$ and $Z$ stabilizers is manifest in the $\cX$ and $\cZ$ variables; in particular, $\cX_\mu$ and $\cZ_\mu$ each create the same spatial pattern of excitations.

Returning to the case of general $n$, we have $P = R^4$ for the Pauli module, while the excitation and stabilizer modules are $G = E = R^2$, reflecting that there is one $Z$-stabilizer and one $X$-stabilizer for each fcc site. The basis vectors are denoted $\mathbf A$ and $\mathbf B$ for $G$, $\mathbf a$ and $\mathbf b$ for $E$, and $\mathbf z_x, \mathbf z_y, \mathbf x_x, \mathbf x_y$ for $P$. The model is defined by the excitation map
\begin{equation}
\epsilon = \left( \begin{array}{cccc}
 &  & \overline{xy} \sigma_x & \overline{yz} \sigma_y \\
- \overline{xy} \sigma_x & - \overline{yz} \sigma_y &  & 
\end{array} \right) \text{,}
\end{equation}
where we define $d_\mu = 1 - \mu^2$, $d_\mu^{\pm} = 1 \pm \nu\rho^{\pm 1}$, and
\begin{equation}
\sigma_\mu \equiv d_\mu d^+_\mu d^-_\mu = (1-\mu^2)(1 + \nu \rho)(1 - \nu \bar{\rho}) \text{,}
\end{equation}
which satisfy the identity $\overline{\sigma_\mu} = (\overline{\mu \nu})^2 \sigma_\mu$. Other useful identities are $\sum_{\mu = x,y,z} \overline{\mu \nu} \sigma_{\mu} = \sum_{\mu = x,y,z} \overline{\mu \nu} d^-_\mu = 0$. The non-standard commutation relations are encoded in the matrix
\begin{equation}
\lambda = \left( \begin{array}{cccc}
 & & 0 & 1  \\
  & & -1 & 0 \\
  0 & 1 & & \\
   -1 & 0 & &
  \end{array} \right) \text{,}
\end{equation}
and the stabilizer map is 
\begin{equation}
\sigma = \lambda \epsilon^\dagger = \left( \begin{array}{cc}
 \overline{yz} \sigma_y & \\
 - \overline{xy} \sigma_x & \\
 & - \overline{yz} \sigma_y \\
 & \overline{xy} \sigma_x 
 \end{array}\right) \text{.}
\end{equation}
The self-duality between $Z$ and $X$ sectors is manifest in the form of both the stabilizer and excitation maps.
Stabilizers and the pattern of excitations created by a single Pauli operator are shown in Fig.~\ref{fig:fcc-stabilizer}.

As illustrated in Fig.~\ref{fig:planon-string-operators}, the model supports planons with seven distinct orientations. The planar subgroup $\mathcal T_\mu = \langle \nu^2, \rho^2 \rangle$ describes mobility within the $\mu$-oriented $\{100\}$ plane. It turns out that $\T_\mu = \T_{S_\mu}$; that is, $\T_\mu$ is the group of translations leaving the superselection sector of $\{100\}$ planons invariant. Even though $\nu \rho$ lies in the plane of mobility of a $\mu$-oriented planon, this translation changes the superselection sector, \emph{i.e.} it acts nontrivially on $S_\mu$. To fix a counterclockwise orientation for braiding statistics, we choose the $+\hat{\mu}$ normal vector for $\mu$-oriented planes. To describe mobility within $\{111\}$ planes, we define $\mathcal T_{\pm\pm'} = \langle zx^{\pm 1}, zy^{\pm'1}\rangle$. 
We find it convenient to introduce three different notations for $\{111\}$ plane orientations, namely $(++,+-,-+,--) = (0,1,2,3) = (z^+, x^+, y^+, 3)$, and use $\alpha$ for an index running over the four orientations. With this choice, $\mathcal T_{\mu^+} = \langle  \mu\nu, \mu \rho \rangle$. While $\T_3 = \T_{S_3}$, this is not the case for $\T_{\mu^+}$, and we have $\T_{S_{\mu^+}} = \langle \nu \bar{\rho},\mu^2 \nu \rho \rangle$. We choose the normal vector $(\hat{x} + \hat{y} + \hat{z})/\sqrt{3}$ for $3$-oriented planes and $(\hat{\mu} - \hat{\nu} - \hat{\rho})/\sqrt{3}$ for $\mu^+$-oriented planes.

We identify representative planons of each orientation as follows. The excitations $\tilde{\mathbf p}^Z_{\mu 3} \equiv \overline{\mu \rho} d_\nu^+ d_\nu^- d_\mu \mathbf a$ and $\tilde{\mathbf p}^X_{\mu 3} \equiv  \overline{\mu \rho} d_\nu^+ d_\nu^- d_\mu \mathbf b$ are $\mu$-oriented planons, which we verified by finding the associated closed string operators and showing that these detect excitations within a disc enclosed by the string. The significance of the ``$3$'' subscript will become apparent below. Defining $c_\mu = d_\nu - d_\rho = \rho^2 - \nu^2$, we have $\mu^+$-oriented planons $\tilde{\mathbf p}^Z_{\mu^+} \equiv - \mu^2 d_\mu^+ d_\nu c_\nu \mathbf a$ and $\tilde{\mathbf p}^X_{\mu^+} \equiv - \mu^2 d_\mu^+ d_\nu c_\nu \mathbf b$. Finally, $\tilde{\mathbf p}^Z_3 \equiv \bar{x}^2 d_x^+ d_y^+ d_x d_y \mathbf a$ and $\tilde{\mathbf p}^X_3 \equiv \bar{x}^2 d_x^+ d_y^+ d_x d_y \mathbf b$ are planons with orientation $\alpha = 3$. The superselection sectors of these representative planons turn out to generate the submodules $S_a \subset S$ of $a$-oriented planons.

Using the planon string operators shown in Fig. \ref{fig:planon-string-operators}, we find $Q^Z = Q^X = R_{zxy0123}$, where $R_\mu = \Z_n \T / \T_\mu$ and $R_\alpha = \Z_n \T / \T_\alpha$, using the planar subgroups introduced above. The plane charge map is
\begin{equation}
\pi^Z = \pi^X =  \begin{pmatrix}
         p_z
         \\
         p_x
         \\
         p_y
         \\
         \epsilon_{z^+} p_{z^+}
         \\
         \epsilon_{x^+} p_{x^+}
         \\
         \epsilon_{y^+} p_{y^+}
         \\
         p_3
     \end{pmatrix} \text{,}
\end{equation}
where $p_a : R \to R_a$ are the projections and we have group homomorphisms $\epsilon_{\mu^+} : \T \to \Z_n^\times$ defined by $\mu\nu, \rho\mu \mapsto -1$, $\nu\rho \mapsto 1$. We then complete the $p$-modular sequence by defining $C = Q / \im \pi$ and $\omega : Q \to C$ the corresponding quotient map. Exactness of the sequence was checked along the lines described in Sec.~\ref{section:x-cube}. This shows that the FCC model is $p$-modular.

For the $p$-modular sequence to be useful for calculations, we need an explicit description of $C$ and $\omega$. In the other examples, we were able to guess a form of $C$ and $\omega$ and check the resulting sequence is exact, but here we take a different approach. As a consequence of self-duality, from now on we focus on $Z$-stabilizer excitations and omit $Z$-superscripts for elements of $Q^Z$; the corresponding discussion for $X$-excitations is identical. We start by coarsening the translation symmetry to $\T' = \langle x^2, y^2, z^2 \rangle$ and obtaining a more explicit description of $(Q^Z)^c$. We note that $R^c \cong (R')^4$, with $1,yz,zx,yx$ forming a basis for $R^c$; this is nothing but the standard description of the fcc lattice as a simple cubic lattice with basis. We can associate each basis site with a $\{ 111 \}$ plane orientation by observing that if we remove the basis site in question, the other three basis sites lie in a $\{ 111 \}$ plane. In this manner we associate $(1,yz,zx,xy) \leftrightarrow (3, x^+,y^+,z^+)$, and define fracton excitations $\mathbf f_3 = \mathbf a$, $\mathbf f_{x^+} = yz \mathbf a$, $\mathbf f_{y^+} = xz \mathbf a$ and $\mathbf f_{z^+} = xy \mathbf a$, which generate $(E^Z)^c \cong (R')^4$. It is convenient to write elements of $(Q^Z)^c$ as basis elements multiplied by one-variable Laurent polynomials in $\Z_n[t^\pm]$, where multiplication by $t$ represents the smallest perpendicular translation between parallel planes. There are four basis elements for $R^c_\mu \subset (Q^Z)^c$, which are denoted $\mathbf q_{\mu \alpha}$. The single generator of each $R^c_\alpha \subset (Q^Z)^c$ is denoted $\mathbf s_\alpha$. The generators are defined as follows:
\begin{eqnarray}
\mathbf q_{\mu 3} &=& -p_\mu(\mathbf f_3) \\
\mathbf q_{\mu' \mu^+} &=& p_{\mu'}(\mathbf f_{\mu^+} ) \\
 t  \mathbf s_3 &=& p_3(\mathbf f_3) \\
\mathbf s_{\mu^+} &=& - p_{\mu^+}(\mathbf f_3) \text{.}
\end{eqnarray}
The choices of minus signs and shifts by $t$ above are made so that the generators of $S^Z \cong \pi^Z(E^Z)$ take the following nice form:
\begin{eqnarray}
\pi^Z (\mathbf f_3) &=& - ( \mathbf q_3 + \mathbf s - (1+t) \mathbf s_3 ) \\
\pi^Z (\mathbf f_{\mu^+}) &=& \mathbf q_{\mu^+} + \mathbf s - (1+t) \mathbf s_{\mu^+} \text{.}
\end{eqnarray}
Here and below we adopt the convention that an omitted index is summed over, so \emph{e.g.} $\mathbf q_3 = \sum_{\mu} \mathbf q_{\mu 3}$. From now on we omit ``$c$'' superscripts to avoid notational clutter. 

The $R'$-module structure on $Q^Z$ is given by specifying by the action of 
 $\T'$ translations on the generators, which is
\begin{equation}
    \mu^{2a} \nu^{2b} \rho^{2c}
    \begin{pmatrix}
        \mathbf q_{\mu\alpha}
        \\
        \mathbf s_{\mu^+}
        \\
        \mathbf s_3
    \end{pmatrix}
    =
    \begin{pmatrix}
        t^a \mathbf q_{\mu\alpha}
        \\
        t^{b + c - a} \mathbf s_{\mu^+}
        \\
        t^{-(a + b + c)} \mathbf s_3
    \end{pmatrix}.
\end{equation}
Note that for each generator of $Q^Z$, given one of $x^2$, $y^2$ or $z^2$ that acts nontrivially on it, one has to make an arbitrary choice whether the translation acts as $t$ or $\bar{t}$. Once this choice is made, the action of the other two generators of $\T'$ is fixed.

Upon coarsening $\T \to \T'$, there are four types of $\{100\}$ planons of each orientation. As elements of $S^Z \cong \pi^Z(E^Z) \subset Q^Z$, these are generated by $\mathbf p_{\mu \alpha} = \pi^Z(\tilde{\mathbf p}^Z_{\mu \alpha}) = \Delta_1 \mathbf q_{\mu \alpha}$, where $\Delta_i \equiv t^i - 2 + \bar{t}^i$, and where we defined $\tilde{\mathbf p}^Z_{\mu' \mu^+} \equiv - \nu \rho \tilde{\mathbf p}^Z_{\mu' 3}$. Similarly, the $\{111\}$ planons are generated by $\mathbf p_\alpha = \pi^Z(\tilde{\mathbf p}^Z_\alpha) = \Delta_2 \mathbf s_\alpha$. After obtaining an expression for $\omega^Z$, it can be checked that these excitations generate all planons and composites of planons, \emph{i.e.} they generate $S^{(1)}$.

We now outline our strategy for obtaining an explicit description of $\omega^Z$ and $C^Z$ before carrying it out. As an abelian group (or a $\Z_n$-module), $Q^Z$ is generated by the infinite set $G_{Q^Z} = \{ t^{\Z} \mathbf q_{\mu \alpha} , t^\Z \mathbf s_\alpha \}$. We will choose a certain finite subset $\{ \mathbf u_1, \dots \mathbf u_m \} \subset G_{Q^Z}$, and define $\mathfrak u_i \equiv \omega^Z(\mathbf u_i)$, $i = 1,\dots,m$. Identifying certain elements of $\ker \omega^Z$ will give us relations that can be used to express $\omega^Z(\mathbf q)$ for any $\mathbf q \in G_{Q^Z}$ in terms of the $\mathfrak u_i$, which thus generate $C^Z$. This provides an explicit description of the $R'$-module structure on $C^Z$ in terms of the generators $\{ \mathfrak u_i \}$ via $r' \mathfrak u_i = \omega^Z(r' \mathbf u_i)$, where $r' \in R'$.

The next step is to show that the $\mathfrak u_i$ do not satisfy any relations and thus form a basis for $C^Z$ as a $\Z_n$-module, implying that $C^Z \cong \Z_n^m$. We let $\tilde{C}^Z$ be the free $\Z_n$-module with generators $\{ \tilde{\mathfrak u}_1, \dots, \tilde{\mathfrak u}_n \}$. There is an obvious surjective map $\alpha : \tilde{C}^Z \to C^Z$ defined by $\tilde{\mathfrak u}_i \mapsto \mathfrak u_i$.
The expression mentioned above for $\omega^Z(\mathbf q)$ in terms of the $\mathfrak u_i$ defines a group homomorphism $\tilde{\omega}^Z : Q^Z \to \tilde{C}^Z$ by replacing $\mathfrak u_i \to \tilde{\mathfrak u}_i$, which is easily seen to be surjective. Using the explicit description of the $R'$-module structure on $C^Z$, we find an $R'$-module structure on $\tilde{C}^Z$ that makes $\tilde{\omega}^Z$ and $\alpha$ into $R'$-module homomorphisms. Finally we have $\im \pi^Z \subset \ker \tilde{\omega}^Z \subset \ker \omega^Z$, where the first inclusion is easily checked by evaluating $\tilde{\omega}^Z$ on the generators $\pi^Z(\mathbf f_\alpha)$, and the second inclusion follows trivially from the fact $\omega^Z = \alpha \circ \tilde{\omega}^Z$. Since $\im \pi^Z = \ker \omega^Z$, we thus have $\ker \tilde{\omega}^Z = \ker \omega^Z$. It follows that $\alpha$ is an $R'$-module isomorphism and thus we can identify $\tilde{C}^Z$ with $C^Z$.

Following the above strategy, we begin by defining $\mathfrak q_{\mu \alpha} = \omega^Z(\mathbf q_{\mu \alpha})$. We also define
\begin{equation}
\mathfrak d_\alpha = \omega^Z\big( \varepsilon_{\mu \alpha} t^k (1-t) \mathbf q_{\mu \alpha^\mu} \big) \text{,}
\end{equation}
where $\alpha^\mu$ is the image of $\alpha$ under the permutation $(\mu^+ 3)(\nu^+ \rho^+)$, and
$\varepsilon_{\mu\alpha} = \delta_{\alpha \nu^+} + \delta_{\alpha\rho^+} - \delta_{\alpha\mu^+} - \delta_{\alpha 3}$.
The definition of $\mathfrak d_\alpha$ makes sense because the right-hand side is independent of $k \in \Z$ and the index $\mu$. To see this, we consider the $\langle 100 \rangle$ lineon excitation $\mathbf l_{\mu \nu} = (1 + \mu \nu)(1- \mu \bar{\nu}) \mathbf a$, which is mobile along the $\rho$-direction. The image of $\mathbf l_{\mu \nu}$ and its translates by arbitrary elements of $\T$ under $\pi^Z$ is 
$\pm [ \epsilon_{\mu \alpha} t^k (1-t) \mathbf q_{\mu \alpha^\mu} -
\epsilon_{\nu \alpha} t^\ell (1-t) \mathbf q_{\nu \alpha^\nu}]$ for arbitrary $\alpha = 0,1,2,3$ and $k,\ell \in \Z$, where $\alpha, k$ and $\ell$ depend on the translation applied to $\mathbf l_{\mu \nu}$. Using $\im \pi^Z = \ker \omega^Z$, it follows that $\mathfrak d_\alpha$ is well-defined. Finally we define $\mathfrak s_\alpha = \omega^Z(\mathbf s_\alpha)$.

To determine the value of $\omega^Z$ on arbitrary elements of $Q^Z$, we consider the $\{100\}$ planons $\mathbf p_{\mu \alpha}$. We have $\omega^Z(t^k \mathbf p_{\mu \alpha}) = \omega^Z(t^k \Delta_1 \mathbf q_{\mu \alpha}) = 0$, which gives a recursion relation $\omega^Z(t^{k+1} \mathbf q_{\mu \alpha}) = 2 \omega^Z(t^k \mathbf q_{\mu \alpha}) - \omega^Z(t^{k-1} \mathbf q_{\mu \alpha} )$. This can be solved to find
\begin{equation}
\omega^Z(t^k \mathbf q_{\mu \alpha} ) = \mathfrak q_{\mu \alpha} - k \varepsilon_{\mu \alpha} \mathfrak d_{\alpha^\mu}, \qquad k \in \Z \text{.} \label{eqn:omega1}
\end{equation}
We proceed similarly to determine $\omega^Z(t^k \mathbf s_\alpha)$. We have $\omega^Z(t^k \mathbf p_{\alpha}) = \omega^Z(t^k \Delta_2 \mathbf s_{\alpha}) = 0$, giving $\omega^Z(t^{k+2} \mathbf s_\alpha) = 2 \omega^Z(t^k \mathbf s_\alpha) - \omega^Z(t^{k-2} \mathbf s_\alpha)$. To solve this, we use $\omega^Z(\pi^Z( \mathbf f_\alpha)) = 0$ to express $\omega^Z(t \mathbf s_\alpha) = \mathfrak s - \mathfrak s_\alpha + \mathfrak q_\alpha$. Similarly, we use the fact that $\omega^Z \circ \pi^Z$ vanishes on certain translates of $\mathbf f_\alpha$ to write $\omega^Z(t^k \mathbf s_\alpha)$ in terms of $C^Z$ generators for $k = -1, 2$. This provides enough information to solve the recursion relation, and we find
\begin{equation}
\omega^Z(t^k \mathbf s_{\alpha} ) = \left\{ \begin{array}{ll}
\mathfrak (\mathfrak s_\alpha - k \mathfrak s) - \frac{k}{2}(\mathfrak q + \mathfrak d - \mathfrak d_\alpha), & k \in 2\Z \\
\mathfrak -(\mathfrak s_\alpha - k \mathfrak s) + \mathfrak q_\alpha + \frac{k-1}{2}(\mathfrak q + \mathfrak d_\alpha), & k \in 2\Z +1 \end{array}\right.  \text{.}  \label{eqn:omega2}
\end{equation}
The $R'$-module structure on $C^Z$ is given by
\begin{eqnarray}
\mu^2 \mathfrak q_{\mu \alpha} &=& \mathfrak q_{\mu \alpha} - \varepsilon_{\mu \alpha} \mathfrak d_{\alpha^\mu} \label{eq:mod-structure-first}\\
\nu^2 \mathfrak q_{\mu \alpha} &=& \rho^2 \mathfrak q_{\mu \alpha}  = \mathfrak q_{\mu \alpha} \\
x^2 \mathfrak d_\alpha &=& y^2 \mathfrak d_\alpha = z^2 \mathfrak d_\alpha = \mathfrak d_\alpha \\
x^2 \mathfrak s_3 &=& y^2 \mathfrak s_3 = z^2 \mathfrak s_3 = - (\mathfrak s_3 + \mathfrak s) + \mathfrak q_3 - (\mathfrak q + \mathfrak d_3)  \\
\mu^2 \mathfrak s_{\mu^+} &=& - (\mathfrak s_{\mu^+} + \mathfrak s) + \mathfrak q_{\mu^+} - (\mathfrak q + \mathfrak d_{\mu^+}) \\
\nu^2 \mathfrak s_{\mu^+} &=& \rho^2 \mathfrak s_{\mu^+} = - (\mathfrak s_{\mu^+} - \mathfrak s) + \mathfrak q_{\mu^+} \text{.} \label{eq:mod-structure-last}
\end{eqnarray}

Reinterpreting the $\mathfrak q_{\mu \alpha}, \mathfrak d_\alpha, \mathfrak s_\alpha$ as generators of the free $\Z_n$-module $\tilde{C}^Z \cong \Z_n^{20}$, Equations~\ref{eq:mod-structure-first}-\ref{eq:mod-structure-last} give an $R'$-module structure on $\tilde{C}^Z$. To verify this, it is enough to check that each of $x^2, y^2, z^2$ gives an automorphism acting on $\tilde{C}^Z$, and that these automorphisms commute. The expressions~\ref{eqn:omega1} and~\ref{eqn:omega2} define $\tilde{\omega}^Z : Q^Z \to \tilde{C}^Z$, which can be straightforwardly checked to be an $R'$-module homomorphism. The same property obviously holds for $\alpha : \tilde{C}^Z \to C^Z$. Finally, straightforward computation shows that $\tilde{\omega}^Z(\pi^Z(\mathbf f_\alpha)) = 0$; by the discussion above, we thus have $C^Z \cong \tilde{C}^Z \cong \Z_n^{20}$.  Including the X-sector where identical results hold, we have found the invariant $C \cong \Z_n^{40}$.

We now compute the weighted QSS, focusing again on the $Z$-sector. We first verify that the $\mathbf p_{\mu \alpha}$ and $\mathbf p_{\alpha}$ defined above generate $S^{(1)}$. This can be done straightforwardly by writing down the most general expression for a planon $\mathbf p$ of each orientation in $Q^Z$, and requiring $\omega^Z(\mathbf p) = 0$. We then find that the quotient $Q^Z / S^{(1),Z}$ can be identified with the free $\Z_n$-module with the 40 generators $\{ t^k \mathbf q_{\mu \alpha}, t^\ell \mathbf s_{\alpha} \}$, where $k = 0,1$ and $\ell = 0,\dots,3$. This is done by observing that an arbitrary element of $Q^Z$ can be ``cleaned up'' to a linear combination of the above generators by adding multiples of the $\mathbf p_{\mu \alpha}, \mathbf p_\alpha$ and their translates. At this point we carried out the computation numerically using Macaulay2. We have the obvious induced map $\omega_Z : Q^Z / S^{(1),Z} \to C^Z$, which can be expressed as a $20 \times 40$ matrix (we use the same symbol for $\omega^Z$ in a slight abuse of notation). We have $\mathcal W^Z_{1} = S^Z / S^{(1),Z} \cong \ker \omega^Z$. The modules $S^{(i),Z} / S^{(1),Z} \subset Q^Z / S^{(1),Z}$ can be constructed explicitly, and the quotients
\begin{equation}
\mathcal W^Z_{i} = S^Z / S^{(i),Z} = \frac{\ker \omega^Z}{S^{(i),Z}/S^{(1),Z}} 
\end{equation}
are straightforward to evaluate. 
It is convenient to work with $\Z$ coefficients, and obtain a result that gives the answer for $\Z_n$ coefficients upon tensoring with $\Z_n$. Combining with $Z$ and $X$ sectors we have the weighted QSS sequence for $\Z$ coefficients:
\begin{equation}
\mathcal W_{\Z} = (\Z^{40},\Z^{10} \times \Z^6_2, \Z^{10}, \Z_4^2 \times \Z_2^6, \Z_2^2, \Z_2^2, 0) \text{.}
\end{equation}
Tensoring with $\Z_n$ gives the weighted QSS sequence
\begin{equation}
\mathcal W = \left\{ \begin{array}{ll}
(\Z_n^{40},\Z_n^{10},\Z_n^{10},0,0,0,0), & n \text{ odd} \\
(\Z_n^{40}, \Z_{n}^{10} \times \Z_2^6 , \Z_n^{10}, \Z_2^8 , \Z_2^2, \Z_2^2, 0) ,
& n = 2 \mod 4 \\
(\Z_n^{40}, \Z_{n}^{10} \times \Z_2^6 , \Z_n^{10}, \Z_4^2 \times \Z_2^6 , \Z_2^2, \Z_2^2, 0) ,
& n = 0 \mod 4
\end{array}\right. \text{.}
\end{equation}

The planon mutual statistics are given by 
\begin{align}
    \Phi_\mu(\Delta_1 \mathbf q_{\mu\alpha}^X) = - \varepsilon_{\mu\alpha} \mathbf q_{\mu\alpha^\mu}^{Z*},
    \\
    \Phi_\mu(\Delta_1 \mathbf q_{\mu\alpha}^Z) = - \varepsilon_{\mu\alpha} \mathbf q_{\mu\alpha^\mu}^{X*},
\end{align}
and \begin{align}
    \Phi_\alpha(\Delta_2 \mathbf s_\alpha^X) = \mathbf s_\alpha^{Z^*},
    \\
    \Phi_\alpha(\Delta_2 \mathbf s_\alpha^Z) = \mathbf s_\alpha^{X^*} \text{.}
\end{align}
Now, having coarsened to $\T'$ translational symmetry, 
we note that the hypotheses of Proposition \ref{prop:foliated-rg} are satisfied after a redefinition of basis elements of $Q$. Coarsening by a further factor of $2n$ in linear dimension, the subgroup $\T'' = \langle x^{4n}, y^{4n}, z^{4n}\rangle$ acts trivially on the constraint module. We note that this is the only model among our examples where the amount by which we need to coarsen to trivialize the translation action depends on qudit dimension. Thus, letting $P_\text{FCC}$ be the $p$-theory of the FCC model with $\T'$-translation symmetry, we identify a foliated RG of the form
\begin{align}
    P_\text{FCC} \to_{2n+1} P_\text{FCC}^{(2n+1)} \ominus P_\text{TC}^{(xyz)^{(8n)}(0123)^{2n}}.
\end{align}
Here, upon coarsening by a factor of $2n+1$, we have a copy of the FCC model on a coarser spatial lattice together with $8n$ toric codes per unit cell in each $\{100\}$ orientation plus $2n$ toric codes per unit cell in each $\{111\}$ orientation. 

Due to the large size of the matrices involved after coarse-graining, even for small qudit dimension, it is challenging to find a circuit realizing such an RG. 
    Nevertheless, as a check, we note that the result above recovers the system-size-dependence of the logarithm (in base $n$) of ground
    state degeneracy $K(L/a)$ on a three-torus of linear system size $L/a$ as obtained for the $\Z_2$ FCC model in Ref.~\onlinecite{Ma_2017}.  This determination of $K(L/a)$ was based on non-rigorous arguments (as noted in Ref.~\onlinecite{Ma_2017}) supported by numerical calculations for finite system sizes. Setting $a = 1$, if we assume a form
        $K(L) = c L + d$,
    then the RG implies 
        $K(L)
        =
        K(\frac 1 {2n+1} L)
        + 32n \frac {2}{2n+1} L$,
    which yields $c = 32$ as expected from the $n=2$ result of Ref.~\onlinecite{Ma_2017}.

\section {Discussion}
\label{section:conclusion}

We conclude with a discussion of several open questions. One interesting question is whether every $p$-modular $p$-theory is realizable by some physical quantum system. For anyons in two spatial dimensions, the analog of a $p$-modular $p$-theory is an abelian group describing fusion of excitations together with a bilinear form encoding a modular braiding. It is well known that every such theory can be realized. However, we believe that not every $p$-modular $p$-theory can be realized. This will be treated in upcoming work of two of us (EW, MH) and Wilbur Shirley, where we also discuss additional conditions on a $p$-theory that may be sufficient for its realizability \cite{WSH_inprep}.

As noted in Sec.~\ref{section:definitions}, a $p$-theory does not incorporate all information on statistical properties of excitations needed for a full description of the universal properties of a phase. In particular, $p$-theories lack information on self exchange statistics, and there are examples of distinct fracton phases with the same $p$-theory and different self-statistics \cite{Pai_2019, Song_2024}. In future work, it will be interesting to modify the definition of $p$-theory to include self statistics in an efficient manner. With such a definition in hand, together with a better understanding of which $p$-theories are realizable, we conjecture that the problem of classifying $p$-modular fracton phases will reduce to an algebraic problem of classifying suitably enhanced $p$-theories.

However, there are issues for which including self statistics is likely not important. In CSS codes, any self statistics originates from mutual statistics between $X$ and $Z$ sector excitations. While bound states of such excitations can have non-trivial self statistics, it is not clear that this needs to be included as additional data within a $p$-theory. We speculate that classifying $p$-modular CSS codes is equivalent to a problem of classifying $p$-theories, incorporating suitable conditions for realizability. We also point out that the arguments of Sec.~\ref{section:RG} imply that the data of a $p$-theory is sufficient to understand entanglement RG flows of $p$-modular fracton orders, at the level of a system's non-trivial excitations. 

It will be important to identify more examples of $p$-modular fracton orders. Indeed, while the landscape of fracton models is large and complex, there are only a handful of $p$-modular fracton orders known to be realized in the simplest setting of $\Z_2$ CSS codes. We are aware of at least two examples from the literature not studied in this paper that are foliated and likely to be $p$-modular.  These are are the stacked kagome lattice X-cube model of Ref.~\onlinecite{Shirley_2019} and the triangular/honeycomb anisotropic Laplacian model studied in Ref.~\onlinecite{Xue_2024}.  To generate more examples, one way forward is to start with a trivial paramagnet with a planar subsystem symmetry, and gauge the symmetry to obtain a fracton model. Here, we mention one set of conjectures along these lines from work in progress of EW, MH and Shirley \cite{WSH_inprog}. Suppose we choose $k$ orientations of lattice planes labeled by $a_1, \dots, a_k$, and let $\mathcal S_{a_i}$ be the product of all $\Z_n$ planar symmetry operators of a given orientation. We suppose that the symmetry obeys global relations $\mathcal S_{a_i} = \mathcal S_{a_j}$ for any pair of orientations. For example, this describes the symmetry of the plaquette Ising model on the 3d cubic lattice \cite{Vijay_2016}.

Starting from a trivial paramagnet and gauging the symmetry, we define the resulting model to be a $k$-planar X-cube model; this is a reasonable definition because the 3- and 4-planar X-cube models, as well as the 2-planar anisotropic model, can all be obtained this way \cite{Vijay_2016,Shirley_2019_gauging,Shirley_2023}. We expect that the large class of X-cube models constructed in Ref.~\onlinecite{Slagle_2018} can also be obtained by gauging planar symmetries in this manner. We conjecture that the resulting fracton order is $k$-planar, foliated and $p$-modular, and only depends on $k$ and on the intersection properties of the chosen orientations. For example, it certainly matters whether a triple of lattice planes intersects at a point or along a line; we conjecture that this is the only geometrical data needed to specify a unique fracton order within this construction. We further conjecture that the constraint map of a $k$-planar X-cube model is of the form
\begin{align}
    \omega_\text{$k$XC}
    :
    R_{a_1\dots a_k a_1 \dots a_k}
    \xrightarrow{
        \begin{pmatrix}
            c_{a_1} & \dots & c_{a_k}
            & &&&
            \\
            &&& c_{a_1} & & & -c_{a_k}
            \\
            &&& & \ddots & & \vdots
            \\
            &&& && c_{a_{k-1}} & -c_{a_k}
        \end{pmatrix}
    }
    R_{\T}^{k} \text{,}
\end{align}
generalizing our results for the anisotropic model and 3- and 4-planar X-cube models. The resulting weighted QSS are easily computed to be $\mathcal W_\text{$k$XC} = (\Z_n^k, \Z_n, \dots, \Z_n, 0)$, where the zero module is the $k$th entry. We note that even though the fracton order depends on the intersection properties of the $k$ orientations, the conjectured constraint map and weighted QSS only depend on the integer $k$. If these conjectures hold, one consequence is that the $\Z_2$ FCC model is not phase-equivalent to any stack of $k$-planar X-cube models, which can easily be seen from the weighted QSS.

Finally, we comment on an interesting property that sets the FCC model apart from the other models studied. For any $p$-modular fracton order, the description in terms of plane charges becomes simpler upon coarsening the translation symmetry to obtain a trivial action on the constraint module $C$; this can always be done because $C$ is finite. In the $\Z_n$ FCC model, to achieve a trivial $\T$-action on $C$, we need to enlarge the unit cell by a factor of $\mathcal O(n)$ in linear dimension. In contrast, in the other models, the necessary enlargement of the unit cell is independent of $n$. This phenomenon is likely related to the fact that the plane charge content of planons in the FCC model has a quadrupolar form (\emph{e.g.} $(t - 2 + \bar{t}) \mathbf q_{\mu \alpha})$), while the corresponding expressions in the other models are dipoles of plane charges such as $(1-t) \mathbf q_\mu$. It will be interesting to explore this difference in future work, and to find more examples of fracton orders similar to the FCC model in this regard.

\acknowledgments

We are grateful for useful discussions with Agn\`{e}s Beaudry, Yifan Hong, Wilbur Shirley, Charles Stahl and David T. Stephen. We are also especially grateful for collaboration on related work in progress with Wilbur Shirley, and for helpful comments on the manuscript from Agn\`{e}s Beaudry, including pointing out an error in a draft version. The research of MQ (at University of Colorado Boulder, prior to August 2024), EW and MH is supported by the U.S. Department of Energy, Office of Science, Basic Energy Sciences (BES) under Award number DE-SC0014415. Research of MQ after August 2024 was supported, in part, by the Physical Science Division of the University of Chicago. Research of AD at Caltech (prior to 2024) was supported by the Simons Collaboration on Ultra-Quantum Matter (UQM), funded by Simons Foundation grant 651438. The work of all authors benefited from meetings of the UQM Simons Collaboration, supported by Simons Foundation grant 618615.

\appendix

\section{Technical results on fusion theories, $p$-theories and phase invariants}
\label{appendix:technical}

Here we obtain some basic technical results underpinning the discussion of the main text. In order to be clear about what is being assumed, this section is written in ``theorem-proof'' style. While the definitions of a fusion theory and $p$-theory were given in Sec.~\ref{section:definitions}, we re-state them here in a concise form. Some other definitions are given in Sections~\ref{section:definitions} and~\ref{sec:phase-invariants}. In this Appendix we do not assume that $p$-theories are $p$-modular unless explicitly stated. 

\begin{defn}
A fusion theory consists of the data $(\T, S, \eta)$, where:
\begin{enumerate}
\item $\T$ is a group $\T \cong \Z^3$.
\item $S$ is a finitely generated $\Z\T$-module, where every element is of finite order. We let $n$ be the smallest positive integer such that $n x = 0$ for all $x \in S$; such an $n$ is guaranteed to exist because $S$ is finitely generated. Note that the ideal $(n) \subset \Z\T$ annihilates $S$. We let $R = \Z\T/(n) = \Z_n \T$, and view $S$ as an $R$-module. Note that if $S=0$, $R$ is the zero ring.
\item For any $x \in S$, $\T_x \cong \Z^3$ implies $x = 0$.  This is the assumption that there are no non-trivial fully mobile excitations.
\item $\eta : \T \hookrightarrow \R^3$ is an injective group homomorphism, where the group structure on $\R^3$ is the standard vector addition. 
\item If $\{t_1, t_2, t_3\}$ is a set of generators for $\T$, then $\{ \eta(t_1), \eta(t_2), \eta(t_3)\}$ is a vector space basis for $\R^3$. 
\end{enumerate}
\end{defn}

Given a fusion theory, the set of orientations of planons is denoted $A$.

We begin with some basic facts about subgroups of $\Z^3$. These facts rely on the natural inclusion $\Z^3 \subset \R^3$, which allows us to view $\Z^3$ and subsets thereof as subsets of the vector space $\R^3$. The first proposition below tells us that subgroups of $\Z^3$ isomorphic to $\Z^k$ span a $k$-dimensional subspace of $\R^3$.

\begin{proposition}
Let $A \subset \Z^3$ be a subgroup, then $A \cong \Z^k$ for some $k \leq 3$. Supposing $k > 0$, if $\{a_1, \dots, a_k \}$ is a basis for $A$, then $\{ a_1, \dots, a_k \}$ is a linearly independent set of vectors in $\R^3$. \label{prop:sbgrp}
\end{proposition}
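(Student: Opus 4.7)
My plan is to prove both claims with fairly standard algebraic arguments; no fracton-specific input is needed.

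For the first claim that $A \cong \Z^k$ for some $k \leq 3$, I would invoke (or quickly re-prove by induction on $n$) the classical fact that every subgroup of a finitely generated free abelian group $\Z^n$ is itself free abelian, of rank at most $n$. The inductive step projects $\Z^n$ onto its last coordinate via $\pi: \Z^n \to \Z$; the image $\pi(A)$ is a subgroup of $\Z$ and hence either $0$ or isomorphic to $\Z$, while the kernel $\ker(\pi|_A) = A \cap \Z^{n-1}$ is a subgroup of $\Z^{n-1}$ and thus free of rank at most $n-1$ by induction. The short exact sequence $0 \to A \cap \Z^{n-1} \to A \to \pi(A) \to 0$ splits because $\pi(A)$ is free, giving $A$ free of rank at most $n$. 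Specializing to $n = 3$ yields $A \cong \Z^k$ with $k \leq 3$.

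For the second claim I would argue by contradiction. Suppose $\{a_1, \dots, a_k\}$ is a $\Z$-basis of $A$ but there exist real numbers $c_1, \dots, c_k$, not all zero, with $\sum_i c_i a_i = 0$ in $\R^3$. Writing each $a_i$ in coordinates, this is a homogeneous linear system of three equations in the $k$ unknowns $c_i$ whose coefficient matrix $M$ has integer entries. The dimension of the solution space equals $k - \mathrm{rank}(M)$ whether the system is solved over $\R$, over $\Q$, or (after extension) over $\Z$; in particular, if a nonzero real solution exists, a nonzero rational solution exists, and clearing denominators produces a nontrivial integer relation $\sum_i m_i a_i = 0$. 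This contradicts $\Z$-linear independence of the basis $\{a_1, \dots, a_k\}$, completing the proof.

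There is no real obstacle here: the first part is textbook material (and may simply be cited from a standard algebra reference such as Lang or Hungerford), and the second is essentially the observation that $\Z$-independence in $\Z^n$ automatically upgrades to $\R$-independence in $\R^n$ because the relation matrix has integer entries. The only minor choice to make when writing out the details is between the inductive proof sketched above and a citation; since the result is so standard, a short citation plus a one-line dimension-count argument for linear independence would suffice.
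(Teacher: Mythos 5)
Your proposal is correct. The first claim is handled the same way in both cases: the paper simply cites the standard structure theorem for subgroups of free abelian groups (Hungerford, Theorem 1.6), and your inductive proof via the split exact sequence $0 \to A \cap \Z^{n-1} \to A \to \pi(A) \to 0$ is exactly the textbook argument behind that citation. For the second claim your route differs in mechanism from the paper's, though both rest on the same underlying idea of upgrading a real dependence to an integer one. You argue directly: a nontrivial real solution of the integer system $\sum_i c_i a_i = 0$ forces $\mathrm{rank}(M) < k$, and since the rank of an integer matrix is the same over $\Q$ as over $\R$, there is a nonzero rational solution, which after clearing denominators contradicts $\Z$-independence of the basis. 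The paper instead proceeds by induction on $k$, isolates a single relation $a_k = \sum_{i<k} c_i a_i$, extends $\{a_1,\dots,a_{k-1}\}$ to an integer basis of $\R^3$, and pairs with the reciprocal (dual) lattice basis $b_i$ to conclude $c_i = a_k \cdot b_i \in \Q$ before clearing denominators. Your field-independence-of-rank argument is shorter and avoids the induction entirely; the paper's reciprocal-lattice computation is more explicit and sets up the dual-basis machinery that it immediately reuses in the proof of the following proposition about intersections of subgroups, which is presumably why that presentation was chosen. Either proof is complete and valid.
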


\begin{proof}
The first assertion is a basic fact about finitely generated free abelian groups; it follows from Theorem 1.6 of Ref.~\onlinecite{Hungerford_1974}. The second statement follows from Theorem~I in Chapter~4.1 of Ref.~\onlinecite{Samuel_1970}; nonetheless, we provide a proof.

We prove the second statement by induction; it obviously holds for $k=1$ since $a_1 \neq 0$. Suppose the statement has been proved for $k \to k-1$, then $\{ a_1, \dots, a_{k-1} \}$ is a basis for a subgroup of $\Z^3$ isomorphic to $\Z^{k-1}$, and is thus a linearly independent set. Suppose that $\{ a_1, \dots, a_k \}$ is not linearly independent, then there exist real numbers $c_i$, some of which must be nonzero, such that
\begin{equation}
a_k = c_1 a_1 + \cdots + c_{k-1} a_{k-1} \text{.} \label{eqn:lin-comb}
\end{equation}

Now we extend $\{a_1, \dots, a_{k-1} \}$ to $\{a'_1, a'_2, a'_3 \}$, a vector space basis of $\R^3$ such that $a'_i = a_i$ when $i \leq k-1$. We take all the $a'_i$ to have integer components. Viewing $\{ a'_1, a'_2, a'_3 \}$ as the basis for a Bravais lattice, we let $\{ b_1, b_2, b_3 \} \subset \R^3$ be a basis for the corresponding reciprocal lattice, with normalization chosen such that $a'_i \cdot b_j = \delta_{i j}$. With this choice of normalization, the vectors $b_i$ have rational components. 

Taking the dot product of Eq.~(\ref{eqn:lin-comb}) with $b_i$ for $i \leq k-1$ gives
\begin{equation}
a_k \cdot b_i = c_i
\end{equation}
The left-hand side is rational, so the coefficients $c_i$ are rational. Therefore there exists $N \in \N$ such that $N c_i$ is an integer for all $i=1,\dots,k-1$ and such that
\begin{equation}
N c_1 a_1 + \cdots + N c_{k-1} a_{k-1} - N a_k = 0 \text{.}
\end{equation}
This contradicts $\{a_1, \dots, a_k\}$ being a basis for $A$, and thus $\{ a_1, \dots, a_k\}$ must be a set of linearly independent vectors.
\end{proof}

\begin{proposition}
Let $A$ and $C$ be subgroups of $\Z^3$ with $A \cong \Z^k$ and $C \cong \Z^\ell$, assuming without loss of generality that $k \leq \ell \leq 3$. Viewing $\Z^3 \subset \R^3$, we define $m$ to be the dimension of the intersection of subspaces spanned by $A$ and $C$; that is,
\begin{equation}
m = \operatorname{dim} \Big( \operatorname{span} A \cap \operatorname{span} C \Big)\text{.}
\end{equation}
Then $A \cap C \cong \Z^m$.
\end{proposition}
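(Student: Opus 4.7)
By Proposition~\ref{prop:sbgrp}, $A \cap C \subseteq \Z^3$ is free of some rank $j \leq 3$, so it suffices to show $j = m$. The plan is to present $A \cap C$ as the kernel of a natural $\Z$-linear map and then compute that rank via a dimension count in $\R^3$.

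Fix $\Z$-bases $a_1, \dots, a_k$ of $A$ and $c_1, \dots, c_\ell$ of $C$; by Proposition~\ref{prop:sbgrp} each set is $\R$-linearly independent in $\R^3$. Define
\begin{equation}
    \phi : \Z^{k+\ell} \to \Z^3, \qquad (m_1, \dots, m_k, n_1, \dots, n_\ell) \mapsto \sum_i m_i a_i - \sum_j n_j c_j,
\end{equation}
so that $\operatorname{im} \phi = A + C$. The next step is to check that the assignment $(m, n) \mapsto \sum_i m_i a_i$ defines an isomorphism $\ker \phi \xrightarrow{\sim} A \cap C$: surjectivity is the statement that any $x \in A \cap C$ admits expressions in both the $a$- and $c$-bases, while injectivity follows because $\sum_i m_i a_i = \sum_j n_j c_j = 0$ together with the $\R$-linear independence of each basis forces $m = 0$ and $n = 0$.

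It then remains to compute $\operatorname{rank}(\ker \phi)$. Writing $W_A = \Span A$ and $W_C = \Span C$, the subgroup $A + C \subseteq \Z^3$ has $\R$-span $W_A + W_C$, so by Proposition~\ref{prop:sbgrp} it is free of rank $\dim(W_A + W_C) = k + \ell - m$, using the standard vector space identity $\dim(W_A + W_C) = \dim W_A + \dim W_C - \dim(W_A \cap W_C)$. The short exact sequence $0 \to \ker \phi \to \Z^{k+\ell} \to A + C \to 0$ splits because $A + C$ is free abelian, so $\Z^{k+\ell} \cong \ker \phi \oplus (A+C)$, and a rank count gives $\operatorname{rank}(\ker \phi) = (k+\ell) - (k+\ell - m) = m$. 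Hence $A \cap C \cong \ker \phi \cong \Z^m$.

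The one step that requires care is the identification $\ker \phi \cong A \cap C$, which hinges on the $\R$-linear independence supplied by Proposition~\ref{prop:sbgrp}; once that is in hand, the rank count is purely formal. An alternative route I considered --- constructing a rank-$m$ sublattice of $A \cap C$ as a finite-index sub-object of $\Z^3 \cap (W_A \cap W_C)$ via the rationality of that intersection --- also works, but is more cumbersome than the kernel-based approach sketched here.
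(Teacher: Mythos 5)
Your proof is correct, and it takes a genuinely different and cleaner route than the paper's. You realize $A \cap C$ as the kernel of a single $\Z$-linear map $\phi : \Z^{k+\ell} \to \Z^3$ and then extract the rank from the split short exact sequence $0 \to \ker\phi \to \Z^{k+\ell} \to A+C \to 0$, using Proposition~\ref{prop:sbgrp} applied to $A+C$ to identify $\operatorname{rank}(A+C)$ with $\dim(W_A + W_C) = k + \ell - m$. This turns the whole statement into a single rank count and avoids any case analysis. The paper instead proves the result by producing $m$ explicitly constructed $\R$-linearly independent elements of $A\cap C$: it introduces dual (reciprocal-lattice) bases, shows the relevant change-of-basis coefficients are rational, scales to get integer vectors in the intersection, and handles separately the case $k = \ell = 2$, $m = 1$ (two planes meeting in a line), where $\Span A \not\subseteq \Span C$. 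Your approach buys brevity and conceptual clarity; the paper's is more elementary and explicitly constructive, which is perhaps why the authors chose it, but at the cost of a longer case-by-case argument. One small point worth stating explicitly in your write-up: a basis of $A+C$ is $\R$-linearly independent by Proposition~\ref{prop:sbgrp} \emph{and} it spans $\Span(A+C) = W_A + W_C$ (since the basis generates $A+C$ over $\Z$), and both facts together are what give $\operatorname{rank}(A+C) = \dim(W_A + W_C)$; you invoke this correctly but compress the justification.
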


\begin{proof}
By Proposition~\ref{prop:sbgrp}, $A \cap C \cong \Z^{m'}$, and $m' = \operatorname{dim} \operatorname{span} A \cap C$, which implies $m' \leq m$ since
$\operatorname{span} A \cap C \subset \operatorname{span} A \cap \operatorname{span} C$ for any subsets $A$ and $C$ of a vector space. Moreover, $m \leq k$ because $\operatorname{span} A \cap \operatorname{span} C \subset \operatorname{span} A$. To summarize, $m' \leq m \leq k$. We need to show $m' = m$. 

If $m = 0$, then clearly $A \cap C = \{ 0 \}$ and the claim holds. We now assume $m > 0$. Let $\{ a_1, \dots, a_k \}$ be a basis for $A$ (as a group); by Proposition~\ref{prop:sbgrp} this is a linearly independent set in $\R^3$, so we can extend it to a vector space basis, $\{ a_1, a_2, a_3 \}$, where we choose all the $a_i$ to have integer components. Let $\{ b_1, b_2, b_3 \}$ be a basis for the corresponding reciprocal lattice, with normalization $a_i \cdot b_j = \delta_{i j}$, so that the vectors $b_i$ have rational components. Similarly, let let $\{ c_1, \dots, c_\ell \}$ be a basis for $C$, and extend to $\{ c_1, c_2, c_3 \}$, a basis for $\R^3$. Finally we let $\{ d_1, d_2, d_3 \}$ be a basis for the corresponding reciprocal lattice.

By making a table of possibilities for $k$, $\ell$ and $m$, we see that $\operatorname{span} A \subset \operatorname{span} C$ (and thus $m = k$), except in the case $k = \ell = 2$ and $m = 1$ (two planes intersecting along a line). We first assume $\operatorname{span} A \subset \operatorname{span} C$ and then treat the latter case separately. For each of the $a_i$ basis vectors we have $a_i = \sum_{j=1}^3 \alpha_{i j} c_j$ for $\alpha_{i j} \in \R$. Taking the dot product with $d_j$ we have $\alpha_{i j} = a_i \cdot d_j$, which implies $\alpha_{i j} \in \Q$. Therefore there exist $N_i \in \N$ such that $N_i \alpha_{i j} \in \Z$ for $j = 1,2,3$, which implies that $N_i a_i$ is an integer linear combination of the $c_i$'s, and thus $N_i a_i \in C$. Therefore $N_i a_i \in A \cap C$ and $\{ N_1 a_i, \dots, N_k a_k \}$ is a basis for a $\Z^k$ subgroup of $A \cap C$. For $A \cap C$ to have a $\Z^k$ subgroup, we must have $m' \geq k$ (by Theorem 1.6 of Ref.~\onlinecite{Hungerford_1974}). But $m' \leq k$, so $m' = k = m$.

Now we consider the case $k = \ell = 2$ and $m = 1$. We know $m' \leq 1$ and we will show $m' = 1$ by proving $m' \geq 1$. Choose a nonzero vector $v \in \Span A \cap \Span C$, then
\begin{equation}
v = \alpha_1 a_1 + \alpha_2 a_2 = \gamma_1 c_1 + \gamma_2 c_2 \text{,} \label{eqn:v-eqn}
\end{equation}
where the coefficients are real numbers.  If $\alpha_1 \neq 0$, we multiply $v$ by a scalar to choose $\alpha_1 = 1$, otherwise we choose $\alpha_2 = 1$. With this choice, our goal is to show that $\alpha_i, \gamma_i \in \Q$. From this it follows that we can multiply $v$ by $N \in \N$ such that $N \alpha_i, N \gamma_i \in \Z$. Then $N v \in A \cap C$ and $N v$ generates a subgroup of $A \cap C$ isomorphic to $\Z$, implying $m' \geq 1$.

Extending $a_1, a_2$ to a basis as above, we have $c_i = \sum_{j = 1}^3 q_{i j} a_j$, which implies $q_{i j} = c_i \cdot b_j$, and thus $q_{i j} \in \Q$. We equate coefficients of $a_i$ on each side of Eq.~\ref{eqn:v-eqn} to obtain the three equations
\begin{eqnarray}
\alpha_1 &=& \gamma_1 q_{1 1} + \gamma_2 q_{2 1} \label{ag1} \\
\alpha_2 &=& \gamma_1 q_{1 2} + \gamma_2 q_{2 2} \label{ag2} \\
0 &=& \gamma_1 q_{1 3} + \gamma_2 q_{2 3} \text{.} \label{ag3}
\end{eqnarray}

At least one of $q_{13}$ and $q_{23}$ must be non-zero, since otherwise both $c_1$ and $c_2$ lie in $\operatorname{span} A$, which contradicts $m = 1$. First suppose $q_{13} \neq 0$ and $q_{23} = 0$. Then $\gamma_1 = 0$ by Eq.~\ref{ag3}. If we chose $\alpha_1 = 1$, we have $\gamma_2 \in \Q$ by Eq.~\ref{ag1}, and Eq.~\ref{ag2} implies $\alpha_2 \in \Q$. On the other hand, if we chose $\alpha_2 = 1$, Eq.~\ref{ag2} gives $\gamma_2 \in \Q$. In both cases, all the $\alpha_i$ and $\gamma_i$ are rational.  The same conclusion follows by essentially the same arguments when $q_{13} = 0$ and $q_{23} \neq 0$.

Now suppose both $q_{13}$ and $q_{23}$ are non-zero, then Eq.~\ref{ag3} implies both $\gamma_i$ are non-zero, and the ratio $\gamma_1 / \gamma_2 \in \Q$. Suppose we chose $\alpha_1 = 1$, then Eq.~\ref{ag1} can be written $1 = \gamma_1(q_{11} + (\gamma_2 / \gamma_1) q_{21} )$. This implies $\gamma_1 \in \Q$, and thus also $\gamma_2 \in \Q$, and Eq.~\ref{ag2} then implies $\alpha_2 \in \Q$. Suppose instead $\alpha_1 = 0$ and $\alpha_2 = 1$, then similarly Eq.~\ref{ag2} implies $\gamma_1, \gamma_2 \in \Q$. We have shown in both cases that all the coefficients $\alpha_i$ and $\gamma_i$ are rational.
\end{proof}

Now suppose we have a fusion theory $(\T, \eta, S)$, and let $t_1, t_2, t_3$ be generators for $\T$. Let $T: \R^3 \to \R^3$ be the linear isomorphism defined by $T \eta(t_i) = e_i$, where $\{ e_1, e_2, e_3 \}$ is the standard basis for $\R^3$. Then viewing $\Z^3$ as a subset of $\R^3$, we have an isomorphism $T \circ \eta : \T \to \Z^3$. This observation implies the following corollaries, which are just straightforward translations of the preceding propositions into the langauge of fusion theories:

\begin{corollary}
Given a fusion theory, let $G \subset \T$ be a subgroup with $G \cong \Z^k$ for some $k \leq 3$. Supposing $k > 0$, if $\{ g_1, \dots, g_k \}$ is a basis for $G$, then $\{ \eta(g_1), \dots, \eta(g_k) \}$ is a linearly independent set in $\R^3$. \label{cor:dim}
\end{corollary}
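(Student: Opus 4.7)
The plan is to transport the statement directly to the setting of subgroups of $\Z^3$ using the isomorphism $T \circ \eta : \T \to \Z^3$ constructed in the paragraph preceding the corollary, and then invoke Proposition~\ref{prop:sbgrp}. Since the corollary is explicitly advertised as a translation of the preceding proposition into the language of fusion theories, the proof should be short.

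In more detail: first I would observe that $T \circ \eta$ is a group isomorphism onto $\Z^3$, because by the definition of a fusion theory $\eta$ sends a generating set $\{t_1,t_2,t_3\}$ of $\T$ to a vector space basis $\{\eta(t_1),\eta(t_2),\eta(t_3)\}$ of $\R^3$, and $T$ then maps this to the standard basis $\{e_1,e_2,e_3\}$ of $\R^3$ (and hence of $\Z^3$). Under this isomorphism the subgroup $G \subset \T$ is sent to a subgroup $G' = (T \circ \eta)(G) \subset \Z^3$ with $G' \cong G \cong \Z^k$, and the basis $\{g_1,\dots,g_k\}$ of $G$ is sent to the basis $\{(T\circ\eta)(g_1),\dots,(T\circ\eta)(g_k)\}$ of $G'$. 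Proposition~\ref{prop:sbgrp} then immediately gives that this image set is linearly independent in $\R^3$.

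Finally, I would conclude by noting that $T : \R^3 \to \R^3$ is a linear isomorphism, so $T^{-1}$ preserves linear independence; applying $T^{-1}$ to the set $\{(T\circ\eta)(g_1),\dots,(T\circ\eta)(g_k)\}$ yields $\{\eta(g_1),\dots,\eta(g_k)\}$, which is therefore also linearly independent in $\R^3$. Since every step is either a restatement of the hypothesis or a direct application of a previously established fact, there is no real obstacle; the only thing to be careful about is to point out explicitly that condition~5 in the definition of a fusion theory is what guarantees $T$ exists and is a linear isomorphism, so that linear (in)dependence of a set is an invariant of $T$.
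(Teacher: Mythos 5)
Your proof is correct and takes exactly the approach the paper intends: the paper itself offers no explicit proof, stating only that the corollary is a ``straightforward translation'' of Proposition~\ref{prop:sbgrp} via the isomorphism $T \circ \eta : \T \to \Z^3$, and you have simply filled in the routine details (that $T\circ\eta$ is a group isomorphism, that it carries a basis of $G$ to a basis of its image in $\Z^3$, and that the linear isomorphism $T^{-1}$ preserves linear independence).
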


\begin{corollary}
Given a fusion theory, let $G$ and $H$ be subgroups of $\T$ with $G \cong \Z^k$ and $H \cong \Z^\ell$, assuming $k \leq \ell \leq 3$. Then $G \cap H \cong \Z^m$, where
\begin{equation}
m = \operatorname{dim} \Big( \Span \eta(G) \cap \Span \eta(H) \Big) \text{.}
\end{equation} \label{cor:int-dim}
\end{corollary}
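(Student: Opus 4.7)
The plan is to reduce Corollary~\ref{cor:int-dim} to the preceding proposition on subgroups of $\Z^3$ by transporting the problem across the isomorphism $T \circ \eta : \T \to \Z^3$ introduced in the discussion just before the corollary. Since this corollary is essentially a repackaging of that proposition, no new ideas are needed; the work is bookkeeping about preserving dimensions and intersections under a linear isomorphism.

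Concretely, I would first fix generators $t_1, t_2, t_3$ of $\T$ and let $T : \R^3 \to \R^3$ be the linear isomorphism sending $\eta(t_i)$ to the standard basis vector $e_i$; this is well-defined because condition (5) in the definition of a fusion theory guarantees that $\{\eta(t_1), \eta(t_2), \eta(t_3)\}$ is a vector space basis. The composition $\varphi \equiv T \circ \eta : \T \to \Z^3$ is then a group isomorphism (injectivity follows from injectivity of $\eta$, and surjectivity onto $\Z^3$ follows because generators map to generators). Setting $A = \varphi(G) \subset \Z^3$ and $C = \varphi(H) \subset \Z^3$, we have $A \cong G \cong \Z^k$ and $C \cong H \cong \Z^\ell$, so the preceding proposition applies and yields $A \cap C \cong \Z^{m'}$ with $m' = \dim(\Span A \cap \Span C)$.

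Next I would translate both sides back through $\varphi$. Since $\varphi$ is a group isomorphism, $\varphi(G \cap H) = \varphi(G) \cap \varphi(H) = A \cap C$, hence $G \cap H \cong \Z^{m'}$. It remains to identify $m' = m$, where $m = \dim(\Span \eta(G) \cap \Span \eta(H))$. Because $T$ is a linear isomorphism of $\R^3$, it satisfies $T(\Span \eta(G)) = \Span T(\eta(G)) = \Span A$ and likewise $T(\Span \eta(H)) = \Span C$, and it preserves intersections of subspaces:
\begin{equation}
T\bigl(\Span \eta(G) \cap \Span \eta(H)\bigr) = T(\Span \eta(G)) \cap T(\Span \eta(H)) = \Span A \cap \Span C.
\end{equation}
Since linear isomorphisms preserve dimension, $m = m'$, completing the proof.

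There is no real obstacle here; the only point requiring any care is to verify that the subspace identities above hold for a linear isomorphism of $\R^3$, which is elementary. All the genuine content, including the subtle case $k = \ell = 2$, $m = 1$ of planes meeting along a line, has already been dispatched in Proposition on subgroups of $\Z^3$ and is imported unchanged.
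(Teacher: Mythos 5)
Your proposal is correct and takes essentially the same approach as the paper: the paper also invokes the isomorphism $T \circ \eta : \T \to \Z^3$ and declares the corollary a "straightforward translation" of the preceding proposition, while you have simply written out the routine bookkeeping (that an injective group homomorphism preserves intersections and a linear isomorphism preserves spans, intersections of subspaces, and dimensions) that the paper leaves implicit.
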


It immediately follows that if $p, p' \in S$ are two planons of the same orientation, then $\T_p \cap \T_{p'} \cong \Z^2$.

\begin{defn}
Given a fusion theory, for each orientation $a \in A$, $S_a \subset S$ is the submodule generated by $a$-oriented planons.
\end{defn}

\begin{proposition}
Given a fusion theory and an orientation $a \in A$, every non-zero element of $S_a$ is an $a$-oriented planon. \label{prop:a-oriented}
\end{proposition}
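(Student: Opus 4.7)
The plan is to unpack what it means for $x \in S_a$ to be non-zero, then identify a large enough subgroup of $\T$ inside $\T_x$, and finally use the no-fully-mobile-excitations assumption together with Corollary~\ref{cor:dim} to pin down $\T_x$.

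First, I would write $x$ as a finite $R$-linear combination of $a$-oriented planons. Since $R = \Z_n \T$, this means $x = \sum_{j=1}^N c_j g_j p_j$ for some $c_j \in \Z_n$, $g_j \in \T$, and $a$-oriented planons $p_j$. Because $\T$ is abelian, the stabilizer of a translated element is unchanged, $\T_{g_j p_j} = \T_{p_j}$. So each term $g_j p_j$ has stabilizer $\T_{p_j} \cong \Z^2$ with $\Span \eta(\T_{p_j}) = a$.

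Next, set $\T^{\ast} = \bigcap_{j=1}^{N} \T_{p_j}$. Any $h \in \T^{\ast}$ fixes each $g_j p_j$ and hence fixes $x$, so $\T^{\ast} \subset \T_x$. To control $\T^{\ast}$, I would argue by induction on $N$: the base case is immediate, and given the inductive hypothesis $\T^{(k)} \equiv \bigcap_{j=1}^{k} \T_{p_j} \cong \Z^2$ with $\Span \eta(\T^{(k)}) = a$, Corollary~\ref{cor:int-dim} applied to $\T^{(k)}$ and $\T_{p_{k+1}}$ gives $\T^{(k+1)} \cong \Z^{m}$ with
\[
m = \dim\bigl(\Span \eta(\T^{(k)}) \cap \Span \eta(\T_{p_{k+1}})\bigr) = \dim(a \cap a) = 2.
\]
Then Corollary~\ref{cor:dim} forces $\Span \eta(\T^{(k+1)})$ to be $2$-dimensional, and since $\T^{(k+1)} \subset \T^{(k)}$ the span is contained in $a$, hence equals $a$. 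This yields $\T^{\ast} \cong \Z^2$ and $\Span \eta(\T^{\ast}) = a$.

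Finally, since $\T^{\ast} \subset \T_x$ and $\T^{\ast} \cong \Z^2$, we have $\T_x \cong \Z^k$ for some $k \in \{2,3\}$ (it cannot be smaller by the inclusion, using that subgroups of $\Z^3$ of rank at least that of a given subgroup contain it). By the fusion-theory axiom forbidding non-trivial fully mobile excitations, $\T_x \cong \Z^3$ would force $x = 0$, contradicting $x \neq 0$; so $\T_x \cong \Z^2$. Then $\T^{\ast} \subset \T_x$ with both rank $2$, together with Corollary~\ref{cor:dim} (which says the span of a rank-$2$ subgroup of $\T$ under $\eta$ is a $2$-dimensional subspace), gives $\Span \eta(\T^{\ast}) \subset \Span \eta(\T_x)$ with both $2$-dimensional, hence $\Span \eta(\T_x) = a$. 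Therefore $x$ is an $a$-oriented planon.

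The only step with any real content is the inductive argument that a finite intersection of the $\T_{p_j}$ is still a rank-$2$ subgroup spanning $a$; everything else is a direct application of the corollaries and the no-fully-mobile assumption. The mild subtlety is to be careful that ``rank $2$'' is preserved at each step of the intersection, which is exactly what Corollaries~\ref{cor:dim} and~\ref{cor:int-dim} were set up to deliver.
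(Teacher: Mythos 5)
Your proof is correct and takes essentially the same route as the paper's: intersect the stabilizers $\T_{p_j}$, invoke Corollary~\ref{cor:int-dim} to get a rank-$2$ subgroup of $\T_x$ spanning $a$, and use the no-fully-mobile axiom to conclude $\T_x \cong \Z^2$. You spell out the induction over terms more explicitly than the paper (which just cites the corollary once), but the underlying argument is the same.
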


\begin{proof}
A general element $x \in S_a$ can be written $x = \sum_{i = 1}^m r_i p_i$, where $r_i \in R$ and each $p_i$ is an $a$-oriented planon. We assume $x \neq 0$. Let $\tilde{\T}_x = \cap_{i = 1}^k \T_{p_i}$, then $\tilde{\T}_x \cong \Z^2$ by Corollary~\ref{cor:int-dim}, and $\tilde{\T}_x \subset \T_x$. Since $\T_x$ has a subgroup isomorphic to $\Z^2$, either $\T_x \cong \Z^2$ or $\T_x \cong \Z^3$. But the latter case would violate the assumption that there are no non-trivial fully mobile excitations, so $\T_x \cong \Z^2$. We have $a = \spn \eta(\tilde{\T}_x) = \spn \T_x$, so $x$ is an $a$-oriented planon.
\end{proof}

\begin{proposition}
Given a fusion theory and an orientation $a \in A$, we have $\T_{S_a} \cong \Z^2$.
\end{proposition}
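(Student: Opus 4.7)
The plan is to reduce the claim to an intersection of stabilizers of finitely many planons and then apply Corollary~\ref{cor:int-dim}.

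First, I would use the fact that $R = \Z_n \T$ is Noetherian (it is a quotient of the Laurent polynomial ring $\Z_n[t_1^{\pm}, t_2^{\pm}, t_3^{\pm}]$, which is Noetherian by the Hilbert basis theorem plus the fact that localization preserves Noetherianness), and that $S$ is finitely generated over $R$. Submodules of finitely generated modules over a Noetherian ring are themselves finitely generated, so $S_a$ is finitely generated. Since $a \in A$ (so $S_a \neq 0$), Proposition~\ref{prop:a-oriented} lets us pick finitely many nonzero generators $p_1, \dots, p_k \in S_a$, each of which is an $a$-oriented planon, i.e., $\T_{p_i} \cong \Z^2$ and $\Span \eta(\T_{p_i}) = a$.

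Next I would identify $\T_{S_a}$ with $\bigcap_{i=1}^k \T_{p_i}$. The inclusion $\T_{S_a} \subset \bigcap_i \T_{p_i}$ is immediate since each $p_i \in S_a$. For the reverse inclusion, any $x \in S_a$ can be written $x = \sum_i r_i p_i$ for some $r_i \in R$. Because $\T$ is abelian, $R$ is commutative, so for $g \in \bigcap_i \T_{p_i}$ we have $g x = \sum_i r_i (g p_i) = \sum_i r_i p_i = x$. Hence $g \in \T_{S_a}$, and the two subgroups agree.

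Finally I would compute the intersection by induction on $k$ using Corollary~\ref{cor:int-dim}. For any two subgroups $G, H \subset \T$ with $\Span \eta(G) = \Span \eta(H) = a$, the corollary gives $G \cap H \cong \Z^m$ with
\[
m = \operatorname{dim}\bigl( \Span \eta(G) \cap \Span \eta(H) \bigr) = \operatorname{dim} a = 2.
\]
Moreover the intersection $G \cap H$ again has $\Span \eta(G \cap H) = a$: Corollary~\ref{cor:dim} implies this span is 2-dimensional, and it is contained in $a$, so equal to $a$. Iterating this fact, $\bigcap_{i=1}^k \T_{p_i} \cong \Z^2$, which completes the proof.

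The only mildly delicate point — what I would flag as the ``main obstacle,'' though it is not severe — is making sure the intersection is maintained at dimension $2$ at every step of the induction, which requires tracking that the span of the accumulated intersection remains equal to $a$ and not some proper subspace. This is handled cleanly by the remark above, which follows from Corollaries~\ref{cor:dim} and~\ref{cor:int-dim} together with the fact that all the $\T_{p_i}$ have $\eta$-span exactly $a$.
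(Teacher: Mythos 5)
Your proof is correct and takes essentially the same route as the paper's: finitely generate $S_a$, observe the generators are $a$-oriented planons by Proposition~\ref{prop:a-oriented}, identify $\T_{S_a}$ with the intersection of the generators' stabilizers, and conclude via Corollary~\ref{cor:int-dim}. You are a bit more explicit than the paper in two places — checking that $\T_{S_a}$ really equals the full intersection of stabilizers of generators, and tracking that the span of the accumulated intersection stays equal to $a$ during the induction — but these are points the paper implicitly relies on, so there is no genuine difference in approach.
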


\begin{proof}
The module $S_a$ is finitely generated, so let $p_1, \dots, p_m$ be generators, each of which is an $a$-oriented planon by Proposition~\ref{prop:a-oriented}. Then $\T_{S_a} = \T_{p_1} \cap \cdots \cap \T_{p_m} \cong \Z^2$, where the isomorphism follows because the $p_i$ are planons of the same orientation. 
\end{proof}

\begin{defn}
Given a fusion theory, the submodule of $S$ generated by all planons is denoted $S^{(1)} = \sum_{a \in A} S_a$.
\end{defn}

\begin{proposition}
Given a fusion theory, $S^{(1)} = \bigoplus_{a \in A} S_a$. \label{prop:planon-direct-sum}
\end{proposition}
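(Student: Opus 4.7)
The plan is to prove the sum is direct by induction on the (finite) number of orientations $k = |A|$. The base case $k = 1$ is trivial, so the work is in the inductive step. Enumerate the orientations $A = \{a_1, \dots, a_k\}$; assume as inductive hypothesis that $\sum_{i=1}^{k-1} S_{a_i} = \bigoplus_{i=1}^{k-1} S_{a_i}$. It then suffices to show
\[
S_{a_k} \cap \Big( \sum_{i=1}^{k-1} S_{a_i} \Big) = 0,
\]
since this is equivalent to the summands being independent.

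Suppose toward contradiction that some $x \neq 0$ lies in this intersection. By Proposition~\ref{prop:a-oriented}, $x$ is an $a_k$-oriented planon, so $\T_x \cong \Z^2$ and $\Span \eta(\T_x) = a_k$. Using the inductive hypothesis, decompose $x = x_1 + \cdots + x_{k-1}$ uniquely with $x_i \in S_{a_i}$. For any $g \in \T_x$, applying $g$ gives $gx_1 + \cdots + gx_{k-1} = gx = x = x_1 + \cdots + x_{k-1}$; since each $gx_i$ lies in $S_{a_i}$, uniqueness of the direct sum decomposition forces $gx_i = x_i$ for every $i$. Thus $\T_x \subset \T_{x_i}$ for each $i$.

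Since $x \neq 0$, at least one summand $x_i \neq 0$; fix such an $i < k$. Applying Proposition~\ref{prop:a-oriented} again, $x_i$ is an $a_i$-oriented planon, so $\Span \eta(\T_{x_i}) = a_i$. Combined with $\T_x \subset \T_{x_i}$, this gives the inclusion of vector subspaces $a_k = \Span \eta(\T_x) \subset \Span \eta(\T_{x_i}) = a_i$. But $a_k$ and $a_i$ are both two-dimensional vector subspaces of $\R^3$, and they are distinct (since the orientations in $A$ are distinct by definition), so $a_k \not\subset a_i$. This contradiction forces $x = 0$, completing the inductive step.

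The main (and essentially only) obstacle is making precise the passage from ``$\T_x$ stabilizes each $x_i$'' to a geometric contradiction; this requires both the inductive uniqueness of the decomposition and the fact (from the orientation definition) that a planon's stabilizer spans precisely its orientation, together with the observation that two distinct $2$-dimensional subspaces of $\R^3$ cannot contain one another. The assumption of no nontrivial fully mobile excitations, baked into Proposition~\ref{prop:a-oriented}, is what ensures every nonzero element of $S_{a_i}$ is genuinely an $a_i$-planon and not merely mobile in some larger subspace.
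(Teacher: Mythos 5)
Your proof is correct, and it takes a genuinely different route from the paper's. The paper works directly with a purported relation $\sum_{i=1}^m p_i = 0$ among nonzero planons $p_i \in S_{a_i}$ of distinct orientations, and performs a descent: it chooses $g \in \T$ fixing $p_m$ but moving every other $p_i$, applies $(1-g)$ to annihilate $p_m$ while keeping the remaining terms nonzero (each $(1-g)p_i$ stays in $S_{a_i}$ and so is still an $a_i$-planon by Proposition~\ref{prop:a-oriented}), and iterates until a single nonzero planon is forced to equal zero. You instead induct and extract the contradiction from the stabilizer of $x$: uniqueness of the decomposition in $\bigoplus_{i<k} S_{a_i}$ (the inductive hypothesis) forces $\T_x \subset \T_{x_i}$, whence $a_k = \Span\eta(\T_x) \subset \Span\eta(\T_{x_i}) = a_i$, and since both are two-dimensional subspaces of $\R^3$ this gives $a_k = a_i$, contradicting distinctness. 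Both arguments rest on Proposition~\ref{prop:a-oriented} and Corollary~\ref{cor:dim}, but yours replaces the paper's explicit $(1-g)$-cancellation with a cleaner geometric clash of stabilizer subgroups, and arguably makes the role of the two-dimensionality of orientations more transparent.

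One small logical caveat: you phrase the induction as being on $k = |A|$ and "enumerate the orientations $A = \{a_1,\dots,a_k\}$", which presupposes that $A$ is finite. In the paper's development, finiteness of $A$ is proved \emph{after} this proposition, and its proof in the appendix actually invokes the direct-sum decomposition, so assuming $|A| < \infty$ here would be circular. This is easily repaired: since a direct sum is characterized by the vanishing of all finite dependence relations, you should instead induct on the number of distinct orientations appearing with nonzero coefficient in a purported relation $\sum_a x_a = 0$ (equivalently, on the size of an arbitrary finite subset $B \subset A$ for which you show $\sum_{a\in B} S_a$ is direct). The rest of your argument goes through verbatim and no finiteness of $A$ is needed.
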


\begin{proof}
It is enough to show that if $\sum_{i=1}^m p_i = 0$, where $p_i \in S_{a_i}$ and all the orientations $a_i$ are distinct, then all the $p_i$ are zero.

Suppose $\sum_{i=1}^m p_i = 0$ where the $p_i$ are as above and each $p_i \neq 0$. Pick $g \in \T$ such that $g p_m = p_m$ and $g p_i \neq p_i$ for all $i < m$. Such a $g$ always exists; to see this, observe that each $a(i)$ for $i < m$ intersects $a(m)$ along a line within $a(m)$ that passes through the origin. To choose $g$, we need only go sufficiently far from the origin to find a lattice point in $\eta(\T_{p_m})$ away from these lines of intersection.

Clearly $(1-g) p_m = 0$. Moreover $(1-g)p_i$ for $i < m$ is non-zero and thus an $a$-oriented planon; if it were zero, $p_i$ would be fully mobile, contradicting Proposition~\ref{prop:a-oriented}. Therefore, acting with $(1-g)$ on both sides of the expression $\sum_{i=1}^m p_i = 0$ gives a new expression $\sum_{i=1}^{m-1} p'_i = 0$, where each $p'_i \in S_{a_i}$ is non-zero. 

We repeat this procedure until we obtain an expression $p''_1 = 0$, where $p''_1 \in S_{a_1}$ is non-zero, a contradiction.
\end{proof}

\begin{corollary}
For any fusion theory, the set of orientations $A$ is finite. 
\end{corollary}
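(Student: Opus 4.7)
The plan is to deduce finiteness of $A$ from the direct sum decomposition $S^{(1)} = \bigoplus_{a \in A} S_a$ established in Proposition~\ref{prop:planon-direct-sum}, combined with the fact that $S^{(1)}$ is finitely generated as an $R$-module. So the argument proceeds in three short steps.

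First, I would verify that $R = \Z_n \T$ is Noetherian. Since $\T \cong \Z^3$, the group ring $\Z_n \T$ is isomorphic to the Laurent polynomial ring $\Z_n[x^{\pm}, y^{\pm}, z^{\pm}]$, which is Noetherian by the Hilbert basis theorem (applied to $\Z_n[x,y,z]$) together with the fact that a localization of a Noetherian ring is Noetherian. Second, since $S$ is assumed to be finitely generated as an $R$-module by the definition of a fusion theory, and since $R$ is Noetherian, every submodule of $S$ is finitely generated; in particular $S^{(1)} \subset S$ is finitely generated. Let $m_1, \dots, m_N$ be a finite generating set for $S^{(1)}$.

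Third, I would invoke the direct sum decomposition. Write each generator according to Proposition~\ref{prop:planon-direct-sum} as a finite sum $m_i = \sum_{a \in A} m_{i,a}$ with $m_{i,a} \in S_a$ and only finitely many $m_{i,a}$ nonzero. Let $A_0 \subset A$ be the finite set of orientations $a$ such that $m_{i,a} \neq 0$ for at least one $i$. Then every element of $S^{(1)}$ lies in $\bigoplus_{a \in A_0} S_a$, since the $R$-action preserves each direct summand (because $S_a$ is an $R$-submodule). On the other hand, for each $a \in A$ there exists by definition a nonzero $a$-oriented planon $p \in S_a \subset S^{(1)}$, and under the direct sum decomposition such a $p$ has a nonzero component only in the $S_a$ summand. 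This forces $a \in A_0$, so $A = A_0$ and $A$ is finite.

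There is no serious obstacle: the whole argument is a standard Noetherian-plus-direct-sum packaging, and every ingredient (Noetherianity of $R$, finite generation of $S$, and the direct sum structure of $S^{(1)}$) has already been established. The only small point to be careful about is the direction of the last implication — one must note that $A$ is by definition the set of orientations that actually occur, so each $S_a$ with $a \in A$ is genuinely nonzero, which is what prevents an infinite direct sum from being finitely generated.
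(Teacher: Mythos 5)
Your proposal is correct and follows essentially the same route as the paper's proof: both deduce finite generation of $S^{(1)}$ from Noetherianity of $R$, then use the direct sum $S^{(1)} = \bigoplus_{a \in A} S_a$ to argue that every orientation in $A$ must already appear among the finitely many generators. The paper phrases the last step as a direct contradiction (``if $A$ is infinite, some $S_a$ is missed''), while you make the contrapositive explicit by noting the $R$-action preserves each summand; this is a cosmetic difference only.
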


\begin{proof}
Because $R$ is Noetherian and $S$ is finitely generated, $S^{(1)} \subset S$ is finitely generated. Let $x_1, \dots, x_m \in S^{(1)}$ be generators. By Proposition~\ref{prop:planon-direct-sum}, each $x_i$ can be expressed uniquely as $x_i = p^i_{a_{i1}} + \cdots + p^i_{a_{i k_i}}$, where each $p^i_{a_{ij}} \in S_{a_{ij}}$ is non-zero, and for fixed $i$, the $a_{i j}$ are distinct orientations. If $A$ is infinite, there is some $a \in A$ such that no element of $S_a$ appears in any of the expressions for the $x_i$. Non-zero elements of this $S_a$ cannot be written as an $R$-linear combination of the generators, a contradiction.
\end{proof}

\begin{defn}
A $p$-theory is a fusion theory $(\T, S, \eta)$ together with, for each $a \in A$, a bilinear form $\langle \cdot , \cdot \rangle_a : S_a \times S \to \Z_n$ and a unit vector $\hat{n}_a \in \R^3$ normal to $a$. The following conditions are satisfied:
\begin{enumerate}
\item  If $p, p' \in S_a$, then $\langle p, p' \rangle_a = \langle p', p \rangle_a$.
\item For any $g \in \T$, $p \in S_a$ and $x \in S$, we have $\langle p, x \rangle_a = \langle gp, gx \rangle_a$.
\item For any $p \in S_a$ and any $x \in S$, let $g \in \T$ be such that $g^i p = p$ implies $i = 0$. Then there exists $m \in \N$ such that $|i| \geq m$ implies $\langle g^i p, x \rangle_a = 0$.
\end{enumerate}
\end{defn}

The data of a $p$-theory is the 5-tuple $P = (\T,S,\eta, \{ \langle \cdot, \cdot \rangle_a \}, \{ \hat{n}_a \})$. 

Physically, one expects that a planon has vanishing statistics with any excitation mobile along a direction transverse to the planon's orientation, because such an excitation can ``escape'' detection by the planon. We show that this physical expectation is implied by the structure of a $p$-theory:

\begin{proposition}
Given a $p$-theory, choose a planon $p \in S_a$ and $x \in S$ such that there is a non-trivial $g \in \T$ with $g x = x$ and $\eta(g) \notin a(p)$.  Then $\langle p , x \rangle = 0$.  \label{prop:planon-perp-mobile}
\end{proposition}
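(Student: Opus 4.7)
The plan is to combine translation invariance, the hypothesis $gx = x$, and the locality property. The element $g$ will be shown to be transverse to the plane of mobility of $p$, so locality forces the pairing to vanish at large transverse displacements; translation invariance then propagates this vanishing back to $p$ itself.

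First, I would verify that $g$ is transverse to $p$ in the sense required by the locality property, i.e.\ that $g^i p = p$ implies $i = 0$. Suppose $g^i p = p$ for some $i \neq 0$. Then $g^i \in \T_p$, so $\eta(g^i) = i\,\eta(g) \in \spn \eta(\T_p) = a(p) = a$. Since $a$ is a vector subspace of $\R^3$ and $i \neq 0$, this forces $\eta(g) \in a$, contradicting the hypothesis $\eta(g) \notin a(p)$. Hence the transversality condition required by the locality axiom holds for this $g$.

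Next, I would use translation invariance together with the fact that $gx = x$ (and hence $g^i x = x$ for every $i \in \Z$) to obtain
\begin{equation}
    \langle p, x \rangle_a = \langle g^i p, g^i x \rangle_a = \langle g^i p, x \rangle_a \qquad \text{for all } i \in \Z.
\end{equation}
By the locality axiom applied to $p$, $x$, and $g$, there exists $m \in \N$ such that $|i| \geq m$ implies $\langle g^i p, x \rangle_a = 0$. Choosing any such $i$ and combining with the equality above yields $\langle p, x \rangle_a = 0$, as desired.

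There is no real obstacle here; the proof is essentially a one-line application of the locality axiom once transversality of $g$ has been checked. The only subtlety worth spelling out is the verification of transversality, which uses that $\eta(\T_p)$ spans the orientation $a(p)$ and that the orientation is a genuine linear subspace (so that $i\,\eta(g) \in a$ with $i \neq 0$ forces $\eta(g) \in a$).
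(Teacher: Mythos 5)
Your proof is correct and follows essentially the same route as the paper: invoke the locality axiom to make $\langle g^i p, x\rangle_a$ vanish for large $|i|$, then transport back to $\langle p, x\rangle_a$ using translation invariance and $g^i x = x$. The only difference is that you explicitly verify the transversality hypothesis of the locality axiom (that $g^i p = p$ forces $i=0$), which the paper leaves implicit; that check is correct and is a worthwhile clarification.
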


\begin{proof}
By locality of mutual statistics, we know there exists $\ell \in \N$ such that $\langle g^{-\ell} p, x \rangle = 0$.  But since $g^{\ell} x = x$, we have
\begin{equation}
0 = \langle g^{-\ell} p, x \rangle = \langle g^{\ell} g^{-\ell} p, g^{\ell} x \rangle = \langle p, x \rangle \text{.}
\end{equation}
\end{proof}

Below, we let $Q_a$ be the module of $a$-oriented plane charges defined in Sec.~\ref{section:definitions}. The following results tell us about the mobility of plane charges. 

\begin{proposition}
In any $p$-theory, given an orientation $a \in A$, $\T_{S_a} \subset \T_{Q_a}$. Moreover in a $p$-modular theory, $\T_{S_a} = \T_{Q_a}$. \label{prop:pc-mobility1}
\end{proposition}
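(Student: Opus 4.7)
The plan is to prove the two inclusions separately, the first being immediate from translation invariance and the second requiring $p$-modularity in an essential way.

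For $\T_{S_a} \subset \T_{Q_a}$, I would fix $g \in \T_{S_a}$ and show that $gx - x \in K_a$ for all $x \in S$, which is equivalent to $g$ acting trivially on the quotient $Q_a = S/K_a$. This follows from a one-line computation: for any $p \in S_a$, translation invariance gives $\langle p, gx\rangle_a = \langle g^{-1}p, x\rangle_a$, and since $g^{-1}p = p$ by hypothesis, bilinearity yields $\langle p, gx - x\rangle_a = 0$. This holds for all $p \in S_a$, so $gx - x \in K_a$ as required.

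For the reverse inclusion $\T_{Q_a} \subset \T_{S_a}$ in a $p$-modular theory, I would take $g \in \T_{Q_a}$ and an arbitrary planon $p \in S_a$, form $y = gp - p$, and aim to show $y = 0$. Since $S_a$ is a submodule we have $y \in S_a$, and since $g$ fixes $Q_a$ we have $y \in K_a$. Assume for contradiction that $y \neq 0$. By Proposition~\ref{prop:a-oriented}, $y$ is itself an $a$-oriented planon, so $\spn \eta(\T_y) = a$. By $p$-modularity, there exist an orientation $b \in A$ and a planon $r \in S_b$ with $\langle r, y\rangle_b \neq 0$. I would then split into two cases. When $b = a$, the defining property of $K_a$ forces $\langle r, y\rangle_a = 0$, a direct contradiction.

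The main obstacle is the case $b \neq a$, where one must rule out a non-trivial pairing between a $b$-planon and an $a$-planon. Here the key ingredient will be Proposition~\ref{prop:planon-perp-mobile}: it suffices to exhibit a nontrivial $g' \in \T_y$ with $\eta(g') \notin b$. Since $a$ and $b$ are distinct two-dimensional subspaces of $\R^3$, their intersection has dimension at most $1$, while by Corollary~\ref{cor:dim} the free abelian group $\T_y \cong \Z^2$ has basis mapping under $\eta$ to a linearly independent pair spanning the two-dimensional subspace $a$. A rank-two subgroup of a lattice cannot be contained in a one-dimensional subspace, so at least one basis element of $\T_y$ maps outside $a \cap b$, furnishing the required $g'$. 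Proposition~\ref{prop:planon-perp-mobile} then gives $\langle r, y\rangle_b = 0$, contradicting the choice of $r$. Both cases yielding contradictions forces $y = 0$, so $gp = p$ for every $p \in S_a$, establishing $g \in \T_{S_a}$ and completing the proof.
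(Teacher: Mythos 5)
Your proof is correct and follows essentially the same route as the paper: the forward inclusion via translation invariance of $\langle\cdot,\cdot\rangle_a$, and the reverse inclusion by showing that $gp$ and $p$ cannot be distinguished by any planon (using $y=gp-p\in K_a$ for $a$-planons and Proposition~\ref{prop:planon-perp-mobile} for other orientations), which by $p$-modularity forces $gp=p$. The only difference is cosmetic — you phrase it as detectability of the difference $y$ rather than distinguishability of $gp$ from $p$, and you spell out the geometric step (a rank-two stabilizer with span $a$ cannot map into the line $a\cap b$) that the paper leaves implicit when invoking Proposition~\ref{prop:planon-perp-mobile}.
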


\begin{proof}
Take $g \in {\cal T}_{S_a}$, and $q \in Q_a$.  We can write $q = \pi_a(x)$ for some $x \in S$, and $gq = \pi_a(g x)$.  For any $p \in S_a$, we have
\begin{equation}
\langle p, g x \rangle = \langle g^{-1} p, x \rangle = \langle p, x \rangle \text{,}
\end{equation}
since $g^{-1} p = p$.  Therefore $x \sim_a g x$ which implies $g q = q$, so $g \in {\cal T}_{Q_a}$.

Now suppose we have a $p$-modular theory. Take $g \in {\cal T}_{Q_a}$, then $g q = q$ for all $q \in Q_a$, and therefore $g x \sim_a x$ for all $x \in S$.  Then for all $p \in S_a$ and all $x \in S$ we have
\begin{equation}
\langle g^{-1} p, x \rangle = \langle p, g x \rangle = \langle p, x \rangle \text{.}
\end{equation}
We restrict $x \in S_a$ and write $x = p'$.  Then this becomes the statement that for all $p', p \in S_a$,
\begin{equation}
\langle p', g^{-1} p \rangle = \langle p', p \rangle \text{.}
\end{equation}
By Proposition~\ref{prop:planon-perp-mobile}, all planons of orientations different from $a$ have vanishing mutual statistics with both $p$ and $g^{-1}p$. Therefore $p$-modularity implies $g^{-1} p = p$ for all $p \in S_a$,  so $g \in {\cal T}_{S_a}$.
\end{proof}

In a $p$-modular theory it immediately follows that $\T_{Q_a} \cong \Z^2$. However, we also want to know about the mobility of individual plane charges. The following result does not rely on $p$-modularity and implies that $\T_{Q_a} \cong \Z^2$ in general:

\begin{proposition}
Given a $p$-theory, let $q$ be any nonzero element of $Q_a$. Then $\Span \eta(\T_q) = a$ and ${\cal T}_q \cong \Z^2$. In addition, ${\cal T}_{Q_a} \cong \Z^2$.  \label{prop:q-mobility}
\end{proposition}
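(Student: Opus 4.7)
The plan is to analyze $\T_q$ for a fixed nonzero $q \in Q_a$, and then deduce the statement for $\T_{Q_a}$ by sandwiching. First I would establish the easy inclusion $\T_{S_a} \subseteq \T_q$. This is immediate from translation invariance of the bilinear form: writing $q = \pi_a(x)$, for any $g \in \T_{S_a}$ and any $p \in S_a$ we have $\langle p, gx\rangle_a = \langle g^{-1}p, x\rangle_a = \langle p, x\rangle_a$, so $gx \sim_a x$ and therefore $gq = q$. Combined with $\T_{S_a} \cong \Z^2$, the basic classification of subgroups of $\Z^3$ leaves only the two cases $\T_q \cong \Z^2$ or $\T_q \cong \Z^3$.

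The heart of the argument, and the step I expect to be the main obstacle, is ruling out $\T_q \cong \Z^3$ under the assumption $q \neq 0$. I would proceed by contradiction. If $\T_q \cong \Z^3$, then $\T_q$ has finite index in $\T$, so for any chosen $g \in \T$ there exists $m \geq 1$ with $g^m \in \T_q$, whence $g^{jm} x \sim_a x$ for all $j \in \Z$. I would pick $g$ with $\eta(g) \notin a$; Corollary~\ref{cor:dim} then guarantees that $g^i \in \T_p$ forces $i = 0$ for every $p \in S_a$, because otherwise $i\eta(g) = \eta(g^i)$ would lie in the two-dimensional subspace $a = \Span \eta(\T_p)$. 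Since $q \neq 0$, choose $p \in S_a$ with $\langle p, x\rangle_a \neq 0$. Translation invariance gives
\begin{equation}
\langle p, x\rangle_a = \langle p, g^{jm} x\rangle_a = \langle g^{-jm} p, x\rangle_a
\end{equation}
for every $j$, and the locality axiom of a $p$-theory forces the right-hand side to vanish once $|j|$ is large enough. This contradiction rules out $\T_q \cong \Z^3$, leaving $\T_q \cong \Z^2$. The delicate part here is being careful that the chosen $g$ is genuinely transverse to the mobility plane of every $p \in S_a$ so that locality can be invoked; this reduces to the elementary lattice-geometry observation above.

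Once $\T_q \cong \Z^2$ is known, the orientation statement is quick: Corollary~\ref{cor:dim} gives $\dim \Span \eta(\T_q) = 2$, and since $\T_{S_a} \subseteq \T_q$ with $\Span \eta(\T_{S_a}) = a$ already two-dimensional, we get $\Span \eta(\T_q) = a$. For the final assertion on $\T_{Q_a}$, I would use Proposition~\ref{prop:pc-mobility1} (the half that does not require $p$-modularity) to obtain $\T_{S_a} \subseteq \T_{Q_a}$, and then pick any nonzero $q \in Q_a$ to get $\T_{Q_a} \subseteq \T_q$. The sandwich $\T_{S_a} \subseteq \T_{Q_a} \subseteq \T_q$, with both ends isomorphic to $\Z^2$, forces $\T_{Q_a} \cong \Z^2$, completing the proof.
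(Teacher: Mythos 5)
Your proof is correct and uses essentially the same argument as the paper: both hinge on the locality-plus-translation-invariance contradiction obtained by picking $g$ transverse to $a$ and observing that $\langle g^{-jm} p, x \rangle_a$ cannot be a nonzero constant for all $j$. The only cosmetic difference is that you route this through ruling out $\T_q \cong \Z^3$ (using finite index to find $g^m \in \T_q$), whereas the paper directly shows $\eta(\T_q) \subset a$ by applying the same contradiction to any transverse $g$ in $\T_q$; the conclusions $\Span\eta(\T_q) = a$, $\T_q \cong \Z^2$, and the sandwich for $\T_{Q_a}$ then follow identically.
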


\begin{proof}
Take $g \in {\cal T}$ such that $\eta(g) \notin a$.  We have $q = \pi_a(x)$ and $g q = \pi_a(g x)$ for some $x \in S$.  Now assume $g q = q$, then for all planons $p \in S_a$,
\begin{equation}
\langle g^{-1} p, x \rangle = \langle p, g x \rangle = \langle p, x \rangle \text{.}
\end{equation}
Since this holds for all planons, it holds putting $p \to g^{-1} p$, and we obtain $\langle g^{-2} p , x \rangle  = \langle p, x \rangle$.  We can keep iterating this to obtain
\begin{equation}
\langle g^{-k} p, x \rangle = \langle p, x \rangle \label{eqn:gk-contd}
\end{equation}
for all $k \in \N$.  Since $\pi_a(x) \neq 0$, there is some planon $p$ for which $\langle p , x \rangle \neq 0$, and Eq.~(\ref{eqn:gk-contd}) contradicts locality of statistics. Therefore $g \notin \T_q$, and we have shown $\eta(\T_q) \subset a$.

By Proposition~\ref{prop:pc-mobility1}, ${\cal T}_{S_a} \subset {\cal T}_{Q_a}$, and ${\cal T}_{Q_a} \subset {\cal T}_q$.  Therefore ${\cal T}_{S_a} \cong \Z^2 \subset {\cal T}_q$. The span of this $\Z^2$-isomorphic subgroup is a plane in $\R^3$, so it follows that $\Span \eta(\T_q) = a$, and thus $\T_q \cong \Z^2$. Because $\T_{Q_a}$ is finitely generated, it follows immediately that $\T_{Q_a} \cong \Z^2$.
\end{proof}

Recall from Sec.~\ref{section:definitions} that $Q = \bigoplus_{a \in A} Q_a$ and $\pi : S \to Q$ is given by $\pi = \sum_{a \in A} i_a \circ \pi_a$, where $i_a : Q_a \hookrightarrow Q$ is the natural inclusion. The constraint module is defined as the quotient $C = Q / \pi(S)$. At the risk of being overly pedantic but to avoid possible confusion, we say an $R$-module is finite if its underlying set is finite. (Confusion can arise because a finitely generated $R$-module need not be finite.)

\begin{theorem}
Given a $p$-theory, the constraint module $C = Q / \pi(S)$ is finite. \label{thm:finite-constraints}
\end{theorem}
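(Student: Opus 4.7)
My strategy is to reduce finiteness of $C$ to finiteness of certain quotients of the individual plane-charge modules $Q_a$, one per orientation. For each $a$, let $K_{\bar a} := \bigcap_{b\neq a} K_b$ and $L_a := \pi_a(K_{\bar a}) \subset Q_a$. For any $x \in K_{\bar a}$ one has $\pi_b(x)=0$ for all $b\neq a$, so $\pi(x)=i_a(\pi_a(x))$, and hence $i_a(L_a) \subset \pi(S)$. Since the $i_a(Q_a)$ are direct summands of $Q$, this gives $\bigoplus_{a\in A} L_a \subset \pi(S)$, and therefore
\begin{equation}
C \;\text{is a quotient of}\; \bigoplus_{a\in A}(Q_a/L_a).
\end{equation}
Because $A$ is finite, it is enough to show each $Q_a/L_a$ is finite.

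Next I would construct an element $\alpha_a \in R$ with $\alpha_a Q_a \subset L_a$. For each $b \neq a$ the orientations $a$ and $b$ are distinct two-dimensional subspaces of $\R^3$, so $\T_b$ is not contained in the saturation $\tilde\T_a := \{g \in \T : g^k \in \T_a\text{ for some }k\geq 1\}$ (otherwise $\mathrm{span}\,\eta(\T_b)=b$ would lie in $\mathrm{span}\,\eta(\tilde\T_a)=a$). Pick $g_{ab} \in \T_b \setminus \tilde\T_a$, replacing it by its inverse if convenient, and set
\begin{equation}
\alpha_a := \prod_{b \in A,\,b\neq a}(1 - g_{ab}),
\end{equation}
with $\alpha_a := 1$ if $|A|\leq 1$. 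Since $g_{ab}\in \T_b = \T_{Q_b}$ (Proposition~\ref{prop:pc-mobility1}), the factor $(1-g_{ab})$ annihilates $Q_b$; hence $\alpha_a$ annihilates $Q_b$ for every $b\neq a$, so $\alpha_a x \in K_{\bar a}$ for every $x\in S$, and $\alpha_a Q_a = \pi_a(\alpha_a S) \subset L_a$. Thus $Q_a/L_a$ is a quotient of $Q_a/\alpha_a Q_a$.

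It remains to show $Q_a/\alpha_a Q_a$ is finite. By Proposition~\ref{prop:q-mobility}, $\T_a$ acts trivially on $Q_a$, so $Q_a$ is finitely generated over $R_a := \Z_n[\T/\T_a]$. The quotient $\T/\T_a$ has rank one (it is $\Z^3$ modulo a rank-two subgroup), so $\T/\T_a \cong \Z \oplus F$ with $F$ a finite abelian group, and $R_a \cong \Z_n[F][t^{\pm}]$. The choice $g_{ab}\notin\tilde\T_a$ ensures that the image $\bar g_{ab}\in\T/\T_a$ has nonzero $\Z$-component $m_{ab}\neq 0$, which I arrange to be positive; then each factor of $\bar\alpha_a$ has the form $1 - t^{m_{ab}} f_{ab}$ with unit leading and constant $t$-coefficients in $\Z_n[F]$, and the same holds for their product. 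Hence $R_a/\bar\alpha_a R_a$ is a free $\Z_n[F]$-module of finite rank, so finite; and $Q_a/\alpha_a Q_a = Q_a/\bar\alpha_a Q_a$ is a finitely generated module over this finite ring, hence finite.

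The main obstacle is the last step: showing $(1-g)$ generates a cofinite ideal in $R_a$. The subtle point is that $\T_a = \T_{Q_a}$ need not be saturated in $\T$, so $\T/\T_a$ can carry torsion; this forces one to choose $g_{ab}$ outside the saturation $\tilde\T_a$, not merely outside $\T_a$, in order to guarantee that $\bar g_{ab}$ has nontrivial image in the rank-one free part of $\T/\T_a$, which is precisely what makes $1-\bar g_{ab}$ behave like a polynomial of positive degree with unit extremal coefficients.
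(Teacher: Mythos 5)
Your proof is correct, and its overall strategy coincides with the paper's: build $\alpha_a = \prod_{b\neq a}(1-g_{ab})$ with $g_{ab}\in\T_{Q_b}$ transverse to $a$, observe that $\alpha_a$ kills every $Q_b$ with $b\neq a$ so that $\alpha_a Q_a \subset \pi(S)$, and then reduce finiteness of $C$ to finiteness of each $Q_a/\alpha_a Q_a$. Your intermediate module $L_a = \pi_a(K_{\bar a})$ is a harmless repackaging of the paper's Corollary~\ref{cor:alQ-in-S}, and choosing $g_{ab}$ outside the saturation $\tilde\T_a$ is equivalent to the paper's condition $\eta(g_{ab})\notin a$.

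Where you genuinely diverge is in the proof that $Q_a/\alpha_a Q_a$ is finite (the content of the paper's Lemma~\ref{lem:qaq-finite-ab}). The paper picks a splitting $\T = \T_\parallel\times\T_\perp$ with $\T_\parallel\cong\Z^2$ coplanar with $a$, shows the $\T_\parallel$-orbits of a finite generating set of $Q_a$ are finite, and then runs an explicit division-with-remainder argument using the polynomial expansion $\alpha_a=\sum_{j=0}^k\alpha_j(x,y)z^j$ with $\alpha_0=1$ and $\alpha_k$ a monomial. You instead pass to $R_a = \Z_n[\T/\T_a]$, note $\T/\T_a\cong\Z\oplus F$ with $F$ finite so $R_a\cong\Z_n[F][t^\pm]$, observe that $\bar\alpha_a$ has unit extremal coefficients because group elements of $F$ are units in $\Z_n[F]$, and conclude $R_a/(\bar\alpha_a)$ is a finite ring, so the finitely generated module $Q_a/\alpha_a Q_a$ over it is finite. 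This is cleaner: it avoids choosing a complementary $\T_\parallel$ (which need not equal $\T_{Q_a}$), isolates the single ring-theoretic fact actually being used, and handles the torsion in $\T/\T_a$ uniformly rather than implicitly through the choice of coordinates. The paper's version is more elementary and self-contained, sticking entirely to manipulations in $\Z_n\T$. Both are correct; the paper buys concreteness, you buy brevity and structural clarity. One minor notational slip: you write $g_{ab}\in\T_b$ and then invoke Proposition~\ref{prop:pc-mobility1} to identify $\T_b$ with $\T_{Q_b}$, but that identity holds only under $p$-modularity, which Theorem~\ref{thm:finite-constraints} does not assume; what you actually need, and evidently mean, is simply $g_{ab}\in\T_{Q_b}$, which is available unconditionally via Proposition~\ref{prop:q-mobility}.
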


We first obtain some intermediate results before proving the theorem.

\begin{proposition}
For each $a \in A$ there exists $\alpha_a \in R$ such that $\alpha_a Q_b = 0$ for $b \neq a$ and $\alpha_a q_a \neq 0$ if $q_a \in Q_a$ is non-zero.
\end{proposition}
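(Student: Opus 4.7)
The plan is to construct $\alpha_a$ as an explicit product of elements of the form $(1-g_b)$, one factor for each orientation $b \in A \setminus \{a\}$. Recall that $A$ is finite, and that for each $b$ we have $\T_b = \T_{Q_b} \cong \Z^2$ with $\Span \eta(\T_b) = b$. For $b \neq a$, the planes $a$ and $b$ are distinct two-dimensional subspaces of $\R^3$, so their intersection $a \cap b$ is at most one-dimensional. Because $\T_b$ spans $b$ (in the sense that $\Span \eta(\T_b) = b$), it contains elements whose $\eta$-image does not lie in the line $a \cap b$; equivalently, there exists $g_b \in \T_b$ with $\eta(g_b) \notin a$. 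Fix such a choice of $g_b$ for each $b \neq a$, and set
\begin{equation}
  \alpha_a \;\equiv\; \prod_{b \in A \setminus \{a\}} (1 - g_b) \;\in\; R.
\end{equation}
(If $A = \{a\}$ the product is empty and $\alpha_a = 1$; the claim is then trivial.)

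Next I verify $\alpha_a Q_b = 0$ for every $b \neq a$. Since $g_b \in \T_b = \T_{Q_b}$, the element $g_b$ acts as the identity on $Q_b$, so $(1-g_b) q = 0$ for all $q \in Q_b$. Because $R$ is commutative, the factor $(1 - g_b)$ can be placed rightmost in $\alpha_a$, and the product annihilates $Q_b$.

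It remains to show that for any non-zero $q_a \in Q_a$, we have $\alpha_a q_a \neq 0$. I argue by induction on the number of factors already applied. Let $q^{(0)} = q_a$, and recursively define $q^{(k+1)} = (1 - g_{b_{k+1}}) q^{(k)}$, where $b_1,\dots, b_{|A|-1}$ is any ordering of $A \setminus \{a\}$. Each $q^{(k)}$ lies in $Q_a$ because $Q_a$ is an $R$-submodule of $Q$ (here I use that the $Q_b$-summands in $Q$ are independent). Suppose inductively that $q^{(k)} \in Q_a$ is non-zero. Since $\eta(g_{b_{k+1}}) \notin a$, Proposition~\ref{prop:q-mobility} applied to $q^{(k)}$ tells us $g_{b_{k+1}} \notin \T_{q^{(k)}}$, hence $g_{b_{k+1}} q^{(k)} \neq q^{(k)}$, so $q^{(k+1)} \neq 0$. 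After all factors are applied we obtain $\alpha_a q_a = q^{(|A|-1)} \neq 0$, as required.

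The main conceptual point, and the only non-trivial step, is the inductive use of Proposition~\ref{prop:q-mobility}: it guarantees that any non-zero element of $Q_a$ has stabilizer group whose span is exactly $a$, so any $g \in \T$ with $\eta(g) \notin a$ acts non-trivially on \emph{every} non-zero element of $Q_a$. Without this strong (element-wise) mobility property one could only conclude that $\alpha_a$ does not annihilate all of $Q_a$, not that it is injective on $Q_a$. Everything else (choosing the $g_b$, the empty-product edge case, commutativity of $R$) is routine.
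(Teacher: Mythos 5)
Your proof is correct and follows essentially the same approach as the paper: choose, for each $b\neq a$, an element $g_b\in\T_{Q_b}$ with $\eta(g_b)\notin a$, set $\alpha_a=\prod_{b\neq a}(1-g_b)$, use commutativity for the annihilation of each $Q_b$, and use Proposition~\ref{prop:q-mobility} for injectivity on $Q_a$. You merely spell out more explicitly the existence of $g_b$ and the inductive step that the paper dispatches with ``This obviously satisfies the desired properties.''
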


\begin{proof}
If $|A| = 1$, then take $\alpha_a = 1$.  Now assume $|A| > 1$.  Fix $a \in A$, and take $b \in A$ with $b \neq a$.  Choose $g_{ab} \in {\cal T}_{Q_b}$ so that $\eta(g_{ab}) \notin a$ (this is always possible).  Then observe that $(1 - g_{ab})$ annihilates $Q_b$ and, using Proposition~\ref{prop:q-mobility}, satisfies $(1-g_{ab})q_a \neq 0$ for any non-zero $q_a \in Q_a$.  We make such a choice for each $b \neq a$ and define
\begin{equation}
\alpha_a = \prod_{b\neq a} (1 - g_{a b} ) \text{.}
\end{equation}
This obviously satisfies the desired properties.
\end{proof}

\begin{corollary}
We have $i_a(\alpha_a Q_a) \subset \pi(S)$. \label{cor:alQ-in-S}
\end{corollary}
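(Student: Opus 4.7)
The plan is very short because the corollary follows almost immediately from the defining properties of $\alpha_a$ and the construction of $\pi$. The key observation is that $\alpha_a$ kills every direct summand of $Q$ except $Q_a$, so acting by $\alpha_a$ on any element of $\pi(S)$ projects it onto its $a$-component.

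First I would take an arbitrary element $q_a \in Q_a$. Since $\pi_a : S \to Q_a$ is surjective (it is defined as the quotient map $S \to S/K_a$), there exists $x \in S$ with $\pi_a(x) = q_a$. Consider the corresponding element
\begin{equation}
\pi(x) = \sum_{b \in A} i_b(\pi_b(x)) \in \pi(S) \subset Q \text{.}
\end{equation}
Because $\pi(S)$ is a submodule of $Q$, we have $\alpha_a \pi(x) \in \pi(S)$.

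Next I would compute $\alpha_a \pi(x)$ using the previous proposition. Since each inclusion $i_b$ is an $R$-module homomorphism and $\alpha_a$ annihilates $Q_b$ for $b \neq a$,
\begin{equation}
\alpha_a \pi(x) = \sum_{b \in A} i_b(\alpha_a \pi_b(x)) = i_a(\alpha_a \pi_a(x)) = i_a(\alpha_a q_a) \text{.}
\end{equation}
Combining with the previous display gives $i_a(\alpha_a q_a) \in \pi(S)$, which is the desired inclusion. There is no real obstacle here; the statement is essentially a bookkeeping consequence of the construction of $\alpha_a$, and the only thing one needs to be careful about is that $\pi_a$ is surjective onto $Q_a$ so that every plane charge in $Q_a$ does arise as $\pi_a(x)$ for some superselection sector $x$.
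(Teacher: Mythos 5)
Your proof is correct and follows essentially the same route as the paper's: pick $x \in S$ with $\pi_a(x) = q_a$, observe that $\alpha_a$ annihilates every $Q_b$ with $b \neq a$, and conclude that the element $i_a(\alpha_a q_a)$ is the image under $\pi$ of $\alpha_a x \in S$. The only cosmetic difference is that you act by $\alpha_a$ on $\pi(x)$ and appeal to $\pi(S)$ being a submodule, while the paper writes the same element as $\pi(\alpha_a x)$ directly; since $\pi$ is an $R$-module homomorphism these are identical.
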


\begin{proof}
Take $q_a \in Q_a$, then there exists $x \in S$ such that $q_a = \pi_a(x)$.  Moreover $\alpha_a q_a = \pi_a (\alpha_a x)$.  In addition, for $b \neq a$, $\pi_b (\alpha_a x) = \alpha_a \pi_b(x) = 0$ because $\alpha_a$ annihilates $Q_b$. Now $\pi = \sum_{c \in A} i_c \circ \pi_c$, so $\pi(\alpha_a x) = i_a \circ \pi_a(\alpha_a x) = i_a ( \alpha_a q_a)$.
\end{proof}

As an aside (we do not use the following corollary below), we can say more in a $p$-modular theory:

\begin{corollary}
In a $p$-modular theory, for any $x \in S$ with $\pi_a(x) \neq 0$,  $\alpha_a x$ is a non-trivial $a$-planon. \label{cor:alpha_a-planon}
\end{corollary}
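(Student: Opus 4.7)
The plan is to reduce the claim to Proposition~\ref{prop:q-mobility} by using $p$-modularity (injectivity of $\pi$) to transfer stabilizer information from $S$ to the plane charge module $Q$.

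First I would verify that $\alpha_a x$ is a non-zero superselection sector. By the defining properties of $\alpha_a$, we have $\pi_a(\alpha_a x) = \alpha_a \pi_a(x) \neq 0$, because $\pi_a(x)$ is a non-zero element of $Q_a$ and $\alpha_a$ does not annihilate non-zero elements of $Q_a$. Also $\pi_b(\alpha_a x) = \alpha_a \pi_b(x) = 0$ for $b \neq a$. Hence $\pi(\alpha_a x) = i_a(\alpha_a \pi_a(x)) \neq 0$, and $p$-modularity gives $\alpha_a x \neq 0$.

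Next I would show that the stabilizer of $\alpha_a x$ in $\T$ coincides with that of $q := \alpha_a \pi_a(x) \in Q_a$. Since $\pi$ is injective, $g \cdot \alpha_a x = \alpha_a x$ if and only if $\pi(g \alpha_a x) = \pi(\alpha_a x)$, i.e. $i_a(g q) = i_a(q)$; since $i_a$ is injective, this is equivalent to $g q = q$. Thus $\T_{\alpha_a x} = \T_q$.

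Finally, because $q$ is a non-zero element of $Q_a$, Proposition~\ref{prop:q-mobility} yields $\T_q \cong \Z^2$ and $\Span \eta(\T_q) = a$. Combining with the previous step, $\T_{\alpha_a x} \cong \Z^2$ and $\Span \eta(\T_{\alpha_a x}) = a$, so $\alpha_a x$ is an $a$-oriented planon. The argument is essentially bookkeeping, and I do not anticipate a real obstacle; the only point that requires care is confirming that $g q_a = q_a$ in $Q_a$ is equivalent to $g i_a(q_a) = i_a(q_a)$ in $Q$, which is immediate from injectivity of the inclusion $i_a$.
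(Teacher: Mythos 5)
Your proof is correct and takes essentially the same route as the paper's: both observe that $\pi(\alpha_a x) = i_a(\alpha_a \pi_a(x))$ is a non-zero element of $i_a(Q_a)$, invoke Proposition~\ref{prop:q-mobility} for the mobility of plane charges, and use injectivity of $\pi$ (i.e.\ $p$-modularity) to transfer the stabilizer back to $\alpha_a x \in S$. Your write-up is slightly more explicit about verifying the orientation $\Span\eta(\T_{\alpha_a x}) = a$ and about the role of $i_a$, but the argument is the same.
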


\begin{proof}
Because $\pi(\alpha_a x) \in i_a( Q_a )$ and $\pi(\alpha_a x) = \alpha_a \pi(x) \neq 0$, we have ${\cal T}_{\pi(\alpha_a x)} \cong \Z^2$ by Proposition~\ref{prop:q-mobility}.  Because $\pi$ is injective, ${\cal T}_{\alpha_a x} = {\cal T}_{\pi(\alpha_a x)}$.
\end{proof}

\begin{lemma}
The quotient $Q_a / \alpha_a Q_a$ is finite.  \label{lem:qaq-finite-ab}
\end{lemma}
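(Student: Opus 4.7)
The plan is to reduce the statement to showing that $Q_a/(1-g)Q_a$ is finite for any single translation $g \in \T$ with $\eta(g) \notin a$, and then to establish the latter using the structure of $Q_a$ as a module over $R' \equiv \Z_n \T/\T_{Q_a}$.

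First, I would carry out the reduction. Recall $\alpha_a = \prod_{b \neq a}(1-g_{ab})$ with each $g_{ab}$ transverse to $a$ by construction. I will use the fact that if $Q_a/rQ_a$ and $Q_a/sQ_a$ are both finite for $r,s \in R$, then so is $Q_a/(rs)Q_a$. This follows from the short exact sequence
\begin{equation*}
0 \to rQ_a/rsQ_a \to Q_a/rsQ_a \to Q_a/rQ_a \to 0,
\end{equation*}
combined with the observation that $rQ_a/rsQ_a = rQ_a/s(rQ_a)$ is a quotient of $Q_a/sQ_a$ via the surjection $Q_a \twoheadrightarrow rQ_a$ sending $q \mapsto rq$. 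Iterating over the factors of $\alpha_a$ reduces the lemma to the single-factor case.

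Second, fix $g \in \T$ with $\eta(g) \notin a$. By Proposition~\ref{prop:q-mobility}, $\T_{Q_a} \cong \Z^2$ with $\Span \eta(\T_{Q_a}) = a$, so $Q_a$ is finitely generated as an $R'$-module (the same generators as over $R$ work, since $\T_{Q_a}$ acts trivially). Since $a$ is a vector subspace, $\eta(g) \notin a$ forces $k\eta(g) \notin a$ for every $k \neq 0$, hence $g^k \notin \T_{Q_a}$ for $k \neq 0$; equivalently, $\T_{Q_a} \cap \langle g \rangle = 0$ and the image $\bar g \in \T/\T_{Q_a}$ has infinite order. Consequently $\T' \equiv \T_{Q_a} \oplus \langle g \rangle \cong \Z^3$ is a rank-three subgroup of $\T \cong \Z^3$, and therefore has finite index.

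Third, I would identify the quotient ring
\begin{equation*}
R'/(1-\bar g)R' \cong \Z_n\bigl[(\T/\T_{Q_a})/\langle \bar g \rangle\bigr] \cong \Z_n[\T/\T'],
\end{equation*}
which is finite since $\T/\T'$ is a finite abelian group and $\Z_n$ is finite. Tensoring the finitely generated $R'$-module $Q_a$ with this finite quotient ring realizes $Q_a/(1-g)Q_a$ as a finitely generated module over a finite ring, which is therefore finite. Combined with the reduction, this gives the lemma. The main obstacle is simply the bookkeeping in the second step, particularly establishing $\T_{Q_a} \cap \langle g \rangle = 0$ from the vector-subspace structure of $a$; once this is in hand, the ring-theoretic computation is routine.
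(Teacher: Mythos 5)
Your proof is correct, but it takes a genuinely different route from the paper's. The paper argues directly with the full product $\alpha_a=\prod_{b\neq a}(1-g_{ab})$: it splits $\T=\T_\parallel\times\T_\perp$, enlarges a generating set so that $Q_a$ is generated as an abelian group by $z^{\ell}q_i$, and then performs an explicit division-algorithm-style reduction of the $z$-degree modulo $\alpha_a Q_a$, using that $\alpha_a$ has constant term $1$ and a monomial (unit) top coefficient in $z$; boundedness of the degrees then gives finitely many classes. You instead factor the problem: a dévissage via the exact sequence $0\to rQ_a/rsQ_a\to Q_a/rsQ_a\to Q_a/rQ_a\to 0$ (with $rQ_a/rsQ_a$ a quotient of $Q_a/sQ_a$) reduces to a single transverse factor $1-g$, and then you identify $Q_a/(1-g)Q_a\cong Q_a\otimes_{R'}\Z_n[\T/\T']$ with $\T'=\T_{Q_a}\oplus\langle g\rangle$ of finite index in $\T$, so the quotient is a finitely generated module over a finite ring. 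The key inputs are the same in both arguments (Proposition~\ref{prop:q-mobility} giving $\T_{Q_a}\cong\Z^2$ with $\eta(\T_{Q_a})\subset a$, and transversality of the $g_{ab}$), but your version trades the paper's explicit polynomial bookkeeping, and its reliance on the unit end-coefficients of $\alpha_a$, for standard commutative algebra ($M/IM\cong M\otimes_R R/I$, $\ker(\Z_n[H]\to\Z_n[H/\langle\bar g\rangle])=(1-\bar g)$, finite-index sublattices of $\Z^3$); it also yields the slightly stronger intermediate statement that $Q_a/(1-g)Q_a$ is finite for any single transverse translation $g$. Two cosmetic points you may want to make explicit: the degenerate cases ($|A|=1$, where $\alpha_a=1$, and $Q_a=0$) are trivial, and the containment $\eta(\T_{Q_a})\subset a$ follows from $\T_{Q_a}\subset\T_q$ for any nonzero $q\in Q_a$ together with $\Span\eta(\T_q)=a$.
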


\begin{proof}
$Q_a$ is finitely generated as an $R$-module, so let $\tilde{q}_i$ be $R$-module generators for $Q_a$, with $i = 1,\dots,N$.  

The translation group ${\cal T}$ can be decomposed as a product ${\cal T} = {\cal T}_\parallel \times {\cal T}_\perp$, where ${\cal T}_\parallel \cong \Z^2$ is all translations whose image under $\eta$ lies in $a$, and ${\cal T}_\perp \cong \Z$ is generated by some translation not coplanar with $a$.  (Note that ${\cal T}_\perp$ does not necessarily consist of translations normal to the $a$-plane.)  We use polynomial notation and denote  generators of ${\cal T}_\parallel$ by $x$ and $y$, and denote the generator of ${\cal T}_\perp$ by $z$.

For each generator $\tilde{q_i}$, the set ${\cal T}_\parallel q_i = \{ g q_i | g \in {\cal T}_\parallel \}$ is finite because ${\cal T}_\parallel / {\cal T}_{\tilde{q_i}}$ is finite and ${\cal T}_{\tilde{q}_i}$ acts trivially on $q_i$.  Using this, we define a new and possibly larger finite set of generators $\cup_{i = 1}^N {\cal T}_\parallel \tilde{q}_i \equiv \{ q_1, \dots, q_M \}$.  Clearly ${\cal T}_\parallel$ acts on $\{ q_1, \dots, q_M \}$ by permutation, and therefore $Q_a$ as an abelian group is generated by the set $\{ z^\ell q_i | \ell \in \Z, i \in \{1, \dots, M\} \}$.  It follows that $Q_a / \alpha_a Q_a$ as an abelian group is generated by
$\Gamma = \{ z^\ell q_i + \alpha_a Q_a | \ell \in \Z, i \in \{1, \dots, M\} \}$.  

We will show that $\Gamma$ is a finite set, which implies that the group it generates is finite because every element is of finite order.  To do this we will show that there exists $k \in \N$ such that each element of $\Gamma$ can be written in the form
\begin{equation}
z^\ell q_i + \alpha_a Q_a = \sum_{j = 1}^M p_j(z) q_j + \alpha_a Q_a \text{,}  \label{eqn:desired-form}
\end{equation}
where each $p_j(z)$ is a polynomial of the form $p_j(z) = c_0 + c_1 z + \cdots + c_k z^k$ with $c_i \in \Z_n$.  There are only finitely many expressions of this form, so $\Gamma$ is finite.  

First we express $\alpha_a = \prod_{b \neq a} (1- g_{ab})$ in polynomial notation.  Each $g_{ab}$ is a monomial $g_{ab} = x^{n_x} y^{n_y} z^{n_z}$, with $n_z \neq 0$ since $g_{ab}$ is non-coplanar with $a$.  We can always take $n_z > 0$ (if $n_z < 0$, then instead of choosing $g_{ab}$, we choose $g^{-1}_{ab}$).  Therefore, for $k$ the maximum positive degree of $\alpha_a$ with respect to $z$, we have
\begin{equation}
\alpha_a = \sum_{j = 0}^k \alpha_j(x,y) z^j \text{,}
\end{equation}
where $\alpha_0 = 1$ and $\alpha_k(x,y)$ is a monomial.  

Now we consider the element of $\Gamma$ with coset representative $z^\ell q_i$.  If $0 \leq \ell \leq k$, then $z^\ell q_i$ is already in the desired form.  Suppose that $\ell < 0$, then we are free to change the coset representative by
\begin{equation}
z^\ell q_i \to z^\ell q_i  - \alpha_a z^{\ell} q_i \text{.}
\end{equation}
This expression can be written in the form $\sum_{j = 1}^M p_j(z) q_j$, where each $p_j$ has degree running from $\ell+1$ through $\ell+k$.  If now $\ell+1 = 0$, we are done.  If not, we can proceed iteratively, subtracting off terms of the form $\alpha_a z^{\ell+1} q_j$ to cancel the degree $\ell+1$ terms in the $p_j$'s.  Proceeding in this way, we can keep increasing the degree of terms appearing in the $p_j$'s until all degrees lie in the range zero through $k$.

If instead $\ell > k$, we can proceed in essentially the same manner to decrease the degree.  Here the fact that $\alpha_k$ is a monomial is important.  
\end{proof}

Now we can proceed to prove the theorem:

\begin{proof}[Proof of Theorem \ref{thm:finite-constraints}]
Since $i_a (\alpha_a Q_a) \subset \pi(S)$, we also have $\bigoplus_{a \in A} \alpha_a Q_a \subset \pi(S)$.  Then, by the third isomorphism theorem,
\begin{equation}
C = \frac{Q}{\pi(S)} \cong  \frac{Q / \bigoplus_a \alpha_a Q_a}{ \pi(S) / \bigoplus_a \alpha_a Q_a } \text{.}
\end{equation}
But
\begin{equation}
\frac{Q}{\bigoplus_a \alpha_a Q_a} \cong \bigoplus_a \frac{Q_a}{\alpha_a Q_a} \text{,}
\end{equation}
which is finite.  Therefore $C = Q / \pi(S)$ is a quotient of a finite group and is thus also finite.  
\end{proof}

\begin{theorem}
In a $p$-modular $p$-theory, the weighted QSS $\mathcal W_i = S / S^{(i)}$ are finite. \label{thm:finite-WQSS}
\end{theorem}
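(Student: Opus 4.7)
The plan is to reduce the claim to showing $\mathcal W_1 = S/S^{(1)}$ is finite, and then to bound $\mathcal W_1$ using the already-established finiteness of the constraint module $C$ and of the quotients $Q_a/\alpha_a Q_a$.

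First, observe that the filtration $S^{(1)} \subset S^{(2)} \subset \cdots \subset S^{(k)} = S$ induces canonical surjections $\mathcal W_1 = S/S^{(1)} \twoheadrightarrow S/S^{(i)} = \mathcal W_i$ by the third isomorphism theorem. Thus each $\mathcal W_i$ is a quotient of $\mathcal W_1$, and it suffices to prove $\mathcal W_1$ is finite.

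Next, identify $S$ with $\pi(S) \subset Q$ using $p$-modularity. By Proposition~\ref{prop:planon-direct-sum}, $S^{(1)} = \bigoplus_{a\in A} S_a$, and the restriction $\pi_a|_{S_a}$ gives an isomorphism $S_a \cong Q_a \cap \pi(S) = \ker\omega|_{Q_a}$ (as discussed in Sec.~\ref{section:definitions}). Consequently, under the identification $S \cong \pi(S)$, we have $S^{(1)} = \bigoplus_{a\in A} \ker\omega|_{Q_a}$ as a submodule of $Q = \bigoplus_{a\in A} Q_a$.

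The key step is to show the quotient $Q/S^{(1)}$ is finite. By Corollary~\ref{cor:alQ-in-S}, $i_a(\alpha_a Q_a) \subset \pi(S)$, which means $\omega$ vanishes on $i_a(\alpha_a Q_a)$, i.e.\ $\alpha_a Q_a \subset \ker\omega|_{Q_a}$. Hence there is a surjection $Q_a/\alpha_a Q_a \twoheadrightarrow Q_a/\ker\omega|_{Q_a}$, and the former is finite by Lemma~\ref{lem:qaq-finite-ab}, so $Q_a/\ker\omega|_{Q_a}$ is finite. Summing over the finite set $A$ yields finiteness of
\begin{equation}
Q/S^{(1)} \;\cong\; \bigoplus_{a\in A} Q_a/\ker\omega|_{Q_a}.
\end{equation}

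Finally, from the inclusions $S^{(1)} \subset \pi(S) \subset Q$, $\mathcal W_1 = \pi(S)/S^{(1)}$ is a submodule of $Q/S^{(1)}$ (equivalently, it sits in a short exact sequence $0 \to \mathcal W_1 \to Q/S^{(1)} \to Q/\pi(S) = C \to 0$, where $C$ is finite by Theorem~\ref{thm:finite-constraints}). Either way, $\mathcal W_1$ is a subgroup of the finite abelian group $Q/S^{(1)}$, hence finite, completing the proof. The only substantive step is the containment $\alpha_a Q_a \subset \ker\omega|_{Q_a}$; everything else is assembly of previously proved results in the appendix.
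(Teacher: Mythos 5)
Your proposal is correct and follows essentially the same route as the paper: you show $Q/\pi(S^{(1)}) \cong \bigoplus_{a\in A} Q_a/\ker\omega|_{Q_a}$ is finite via the containment $\alpha_a Q_a \subset \ker\omega|_{Q_a}$ (the paper phrases this as $\alpha_a Q_a \subset \pi_a(S_a)$, the same submodule, in Lemma~\ref{lem:qasa-finite}) together with Lemma~\ref{lem:qaq-finite-ab}, then embed $\mathcal W_1$ in that finite quotient and obtain the remaining $\mathcal W_i$ as quotients of $\mathcal W_1$. The only cosmetic difference is that you invoke the identification $S_a \cong Q_a \cap \pi(S) = \ker\omega|_{Q_a}$ from Sec.~\ref{section:definitions} up front, whereas the paper re-derives it inside the lemma via the observation that elements of $i_a(Q_a)\cap\pi(S)$ are $a$-planons.
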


\begin{lemma}
In a $p$-modular $p$-theory, $Q_a / \pi_a(S_a)$ is finite. \label{lem:qasa-finite}
\end{lemma}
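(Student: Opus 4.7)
My plan is to deduce the lemma from Lemma~\ref{lem:qaq-finite-ab}, which already establishes that $Q_a / \alpha_a Q_a$ is finite without using $p$-modularity. The key reduction is the inclusion $\alpha_a Q_a \subset \pi_a(S_a)$ inside $Q_a$. Granting it, the natural surjection realizes $Q_a / \pi_a(S_a)$ as a quotient of the finite module $Q_a / \alpha_a Q_a$, and hence $Q_a / \pi_a(S_a)$ is itself finite.

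To establish the inclusion, I would pick an arbitrary $q_a \in Q_a$ and lift it to some $x \in S$ with $\pi_a(x) = q_a$, which is possible because $\pi_a : S \to Q_a$ is surjective by construction as the quotient $S / K_a$. $R$-linearity of $\pi_a$ then gives $\alpha_a q_a = \pi_a(\alpha_a x)$, so it suffices to show that $\pi_a(\alpha_a x) \in \pi_a(S_a)$. If $\pi_a(x) = 0$ the statement is immediate, since $\alpha_a q_a = 0 \in \pi_a(S_a)$. Otherwise $\pi_a(x) \neq 0$, and this is precisely where $p$-modularity enters: Corollary~\ref{cor:alpha_a-planon} guarantees that $\alpha_a x$ is a non-trivial $a$-oriented planon, so $\alpha_a x \in S_a$ by the definition of $S_a$, and $\pi_a(\alpha_a x) \in \pi_a(S_a)$ follows.

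I do not expect a serious obstacle. The heavy lifting is packaged into Lemma~\ref{lem:qaq-finite-ab} (finiteness of $Q_a / \alpha_a Q_a$, whose proof rested on the construction of $\alpha_a$ as a transverse-translation product and a Laurent-polynomial degree argument) and Corollary~\ref{cor:alpha_a-planon} (the $p$-modularity-based promotion of $\alpha_a x$ to a planon, which in turn uses Proposition~\ref{prop:q-mobility} and injectivity of $\pi$). The remaining step is to assemble these two ingredients, which is routine; the only delicacy is bookkeeping to keep track of whether each element lives in $S$, in $S_a$, or in $Q_a$, and that $\pi_a$ is being applied only to elements of $S$.
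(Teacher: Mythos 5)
Your proof is correct and takes essentially the same approach as the paper: both establish the inclusion $\alpha_a Q_a \subset \pi_a(S_a)$ (the only place $p$-modularity enters) and then realize $Q_a/\pi_a(S_a)$ as a quotient of the finite module $Q_a/\alpha_a Q_a$ from Lemma~\ref{lem:qaq-finite-ab}. The only cosmetic difference is that the paper obtains the inclusion via Corollary~\ref{cor:alQ-in-S} together with the remark that elements of $i_a(Q_a)\cap\pi(S)$ are $a$-planons, whereas you invoke Corollary~\ref{cor:alpha_a-planon} directly; both rest on the same underlying facts.
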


\begin{proof}
First we show $\alpha_a Q_a \subset \pi_a(S_a)$. By Corollary~\ref{cor:alQ-in-S}, $i_a(\alpha_a Q_a) \subset \pi(S)$. In a $p$-modular $p$-theory, elements of $i_a(Q_a) \cap \pi(S)$ are $a$-planons, so $i_a(\alpha_a Q_a) \subset i_a(Q_a) \cap \pi(S) \subset \pi(S_a) = i_a \circ \pi_a (S_a)$. The claim follows.

Then we have
\begin{equation}
\frac{Q_a}{\pi_a(S_a)} \cong \frac{ Q_a / \alpha_a Q_a}{\pi_a(S_a) / \alpha_a Q_a} \text{,}
\end{equation}
and $Q_a / \pi_a(S_a)$ is a quotient of a finite group by Lemma~\ref{lem:qaq-finite-ab}.
\end{proof}

\begin{proof}[Proof of Theorem \ref{thm:finite-WQSS}]
We have
\begin{equation}
\frac{Q}{\pi(S^{(1)})} \cong \bigoplus_{a \in A} \frac{Q_a}{\pi_a(S_a)} \text{,}
\end{equation}
so $Q / \pi(S^{(1)})$ is finite by Lemma~\ref{lem:qasa-finite}. Therefore
\begin{equation}
\mathcal W_1 \cong \frac{\pi(S)}{\pi(S^{(1)})} \subset \frac{Q}{\pi(S^{(1)})}
\end{equation}
is finite. Finally, the remaining weighted QSS are finite because
\begin{equation}
\mathcal W_i \cong \frac{\pi(S)}{\pi(S^{(i)})} \cong 
\frac{\pi(S) / \pi(S^{(1)})}{\pi(S^{(i)})/\pi(S^{(1)})} \text{.}
\end{equation}
That is, we can express each $\mathcal W_i$ as a quotient of a finite group.
\end{proof}

\begin{proposition}
A $p$-modular $p$-theory is essentially 1-planar if and only if $C = 0$.
\end{proposition}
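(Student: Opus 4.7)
The plan is to prove the two implications separately, using a preliminary observation that simplifies the plane-charge bookkeeping: a nonzero $b$-planon $p_b \in S_b$ satisfies $\pi_a(p_b) = 0$ for every orientation $a \neq b$. I would derive this by combining Proposition~\ref{prop:a-oriented} (which says $p_b$ is a genuine $b$-planon with $\Span \eta(\T_{p_b}) = b$) with Proposition~\ref{prop:planon-perp-mobile}: since $b \not\subset a$ as distinct 2D subspaces of $\R^3$, some nontrivial $g \in \T_{p_b}$ has $\eta(g) \notin a$, which forces $\langle q, p_b \rangle_a = 0$ for all $q \in S_a$, i.e.\ $\pi_a(p_b) = 0$.

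For the forward direction, suppose $S = S^{(1)}$. Using the direct-sum decomposition $S^{(1)} = \bigoplus_{a \in A} S_a$ from Proposition~\ref{prop:planon-direct-sum}, any $x \in S$ splits as $x = \sum_a p_a$ with $p_a \in S_a$, and the preliminary observation gives $\pi_b(x) = \pi_b(p_b)$ for every $b$. Since $\pi_b \colon S \to Q_b$ is surjective by definition of $Q_b$, this shows $\pi_b(S_b) = Q_b$ for every $b$, so $\pi(S) = \sum_a i_a \circ \pi_a(S_a) = \bigoplus_a i_a(Q_a) = Q$ and hence $C = Q/\pi(S) = 0$.

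For the reverse direction, suppose $C = 0$, so $\pi$ is surjective and, by $p$-modularity, bijective. For any $x \in S$, I would write $\pi(x) = \sum_a i_a(q_a)$ with $q_a \in Q_a$ and choose preimages $y_a \in S$ with $\pi(y_a) = i_a(q_a)$. The core step is verifying $y_a \in S_a$: injectivity of $\pi$ gives $\T_{y_a} = \T_{\pi(y_a)} = \T_{i_a(q_a)} = \T_{q_a}$, and Proposition~\ref{prop:q-mobility} then says this is either trivial (if $q_a = 0$, forcing $y_a = 0$) or isomorphic to $\Z^2$ with span equal to $a$, making $y_a$ an $a$-oriented planon. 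Then $\pi(x) = \pi(\sum_a y_a)$ combined with injectivity of $\pi$ yields $x = \sum_a y_a \in S^{(1)}$.

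The only nontrivial step is the preliminary observation, and even that is a short invocation of Propositions~\ref{prop:a-oriented} and~\ref{prop:planon-perp-mobile} together with the elementary fact that two distinct 2-planes in $\R^3$ cannot be nested. Once that lemma is in hand, both implications reduce to diagram chases: the forward direction unpacks the direct sum $S^{(1)} = \bigoplus_a S_a$ and the tautological surjectivity of $\pi_b$ onto $Q_b$, while the reverse direction exploits the bijectivity of $\pi$ afforded by $p$-modularity together with $C = 0$.
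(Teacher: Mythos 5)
Your proof is correct and follows essentially the same route as the paper's: both directions hinge on the same ingredients, namely the direct-sum decomposition $S^{(1)} = \bigoplus_a S_a$, the vanishing of cross-orientation plane charges of planons (which you isolate as a preliminary lemma, and the paper uses implicitly), Proposition~\ref{prop:q-mobility} to recognize nonzero plane charges as $a$-planons, and injectivity of $\pi$ from $p$-modularity. Your write-up is slightly more explicit than the paper's and sidesteps an apparent misstatement in the paper's $S = S^{(1)} \Rightarrow C = 0$ direction, which begins with ``Because $\pi$ is surjective'' when what is actually available at that point is surjectivity of $\pi_a$ onto $Q_a$.
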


\begin{proof}
First suppose $C = 0$. Then $Q = \pi(S)$, and thus $\pi$ gives an isomorphism $S \cong Q = \bigoplus_{a \in A} Q_a$. Non-zero elements of $i_a(Q_a)$ are $a$-oriented planons, so $S_a \cong Q_a$, and we have $S = S^{(1)}$.

Now suppose $S = S^{(1)}$. Take $q \in i_a(Q_a)$. Because $\pi$ is surjective, we have $q = \pi(x)$ for some $x \in S$. We have $x = \sum_{b \in A} p_b$ for $p_a \in S_a$. For $q$ to lie in $i_a(Q_a)$, we must have $p_b = 0$ for $b \neq a$, so $q = \pi(p_a) = i_a \circ \pi_a(p_a)$. This implies that $\pi_a$ is surjective, so $Q_a / \pi_a(S_a) = 0$. Then
\begin{equation}
C = \frac{Q}{\pi(S)} \cong \bigoplus_{a \in A} \frac{Q_a}{\pi_a(S_a)} = 0 \text{.}
\end{equation}

\end{proof}

Finally we give a definition of isomorphism of $p$-theories:

\begin{defn}
An isomorphism between two $p$-theories $P_1$ and $P_2$, with their respective data also labeled by ``1'' and ``2,'' is a pair of maps $\alpha_\T : \T_1 \to \T_2$ and $\alpha_S : S_1 \to S_2$ satisfying the following properties:
\begin{enumerate}
    \item The map $\alpha_\T$ is a group isomorphism satisfying $\eta_2 \circ \alpha_\T = \eta_1$. We use the same symbol to denote the induced ring isomorphism $\alpha_\T : R_1 \to R_2$, where $R_i = \Z_n \T_i$ for $i=1,2$.    
    \item The map $\alpha_S$ is a group isomorphism and also satisfies $\alpha_S (r x) = \alpha_\T(r) \alpha_S(x)$ for any $g \in \T_1$ and $r \in R_1$.
    \item The above properties give a bijection between the sets of planon orientations, which we use to identify $A \equiv A_1 = A_2$. Then, for all $a \in A$, we require $\hat{n}^1_a = \hat{n}^2_a$, and $\langle p, x \rangle^1_a = \langle \alpha_S(p), \alpha_S(x) \rangle^2_a$ for all planons $p \in S_1$ and all $x \in S_1$.
\end{enumerate}    
We write $P_1 \cong P_2$ when two $p$-theories are isomorphic.
\end{defn}

\section {A sufficient condition for foliated RG}
\label{appendix:foliated-RG}

\renewcommand{\forall}{\text{ for all }}

Throughout this Appendix, $R = \Z_n \T$.  Recall that the subgroup $\T^{(m)} \subset \T$ consists of all $g \in \T$ for which there exists $h \in \T$ such that $g = h^m$. If $\mathcal G \subset \mathcal T$ is a subgroup, let $R_{\mathcal G} = \Z_n \T / \mathcal G$. This is viewed as an $R$-module where the $R$-action is given by the surjective ring homomorphism induced by the quotient map $\T \to \T / \mathcal G$. For an element $\alpha \in R_{\G}$, we write $\alpha = \sum_{g \in \T/\G} (\alpha)_g g$. The \emph{constant part} of $\alpha$ is $(\alpha)_1$, and $\alpha$ is called \emph{constant} if $(\alpha)_g = 0$ for all $g \neq 1$. Given an integer $m > 1$, we define a $\Z_n$-module map $D_m : \Z_n[t^\pm] \to \Z_n[t^\pm]$ by $D_m(t^k) = t^{m k}$. Note that $D_m$ is a ring homomorphism, but is not a $\Z_n[t^{\pm}]$-module map using the standard action of $\Z_n[t^{\pm}]$ on itself for both domain and codomain. Given a $p$-modular $p$-theory with set of orientations $A$, we recall that for each $a \in A$ we define $R_a \equiv R_{\T_a}$. Moreover, as discussed at the end of Sec.~\ref{section:definitions}, the injective $R$-module homomorphisms $\Phi_a : \operatorname{ker} \omega|_{Q_a} \to Q^*_a$ encode the statistical data of the $p$-theory.

\begin{lemma}
    \label{lemma:compatibility-with-scaling}
    Let $\Delta \in \Z_n[t^\pm]$
    be a polynomial of the form $\Delta = d_0 + d_1 t + \dots + d_\delta t^\delta$,
    where $d_0$ and $d_\delta$ are units in $\Z_n$,
    and suppose $m > 1$ is an integer such that $\Delta | 1 - t^m$. 
    Then 
   $D_m(\Delta) = \Delta( 1 + \Theta)$ where $\Theta = \sum_{i \in \Z} e_i t^i$ satisfies $e_{m j} = 0$ for all $j \in \Z$.
\end{lemma}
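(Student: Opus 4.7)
The plan is to establish the identity $D_m(\Delta) = \Delta(1+\Theta)$ with $e_{mj}=0$ in two stages: first produce $\Theta \in \Z_n[t^\pm]$, then verify the coefficient vanishing via a projection argument.

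For existence of $\Theta$, I would use the hypothesis $\Delta \mid 1 - t^m$, which forces $t^m \equiv 1$ in the quotient $\Z_n[t^\pm]/(\Delta)$ and hence $t^{mi} \equiv 1$ for all $i \in \Z$. Consequently
\begin{equation*}
D_m(\Delta) = \sum_i d_i t^{mi} \equiv \sum_i d_i = \Delta(1) \pmod \Delta,
\end{equation*}
so divisibility of $D_m(\Delta) - \Delta$ by $\Delta$ reduces to the vanishing $\Delta(1) = 0$ in $\Z_n$. This is to be deduced from evaluating the relation $1 - t^m = \Delta V$ at $t=1$, together with the structural constraints under which the lemma is applied (in particular those ensuring the complementary factor $V(1)$ cannot absorb all of $\Delta(1)$ as a zero-divisor). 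Once $\Theta$ exists it is uniquely determined, because $d_0$ and $d_\delta$ being units in $\Z_n$ make $\Delta$ a non-zero-divisor in $\Z_n[t^\pm]$.

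To verify $e_{mj} = 0$, I would set $S = \Z_n[t^{\pm m}] \subset R = \Z_n[t^\pm]$, with $R$ free over $S$ on the basis $\{1, t, \dots, t^{m-1}\}$. The $S$-linear projection $\pi : R \to S$ onto the $t^0$-summand sends $\sum_i a_i t^i$ to $\sum_j a_{mj} t^{mj}$, so $e_{mj} = 0$ for all $j \in \Z$ is precisely the statement $\pi(\Theta) = 0$. Writing $1 - t^m = \Delta V$ and multiplying the defining identity by $V$ gives
\begin{equation*}
V \cdot D_m(\Delta) = (1 - t^m)(1 + \Theta).
\end{equation*}
Both $D_m(\Delta)$ and $1 - t^m$ already lie in $S$, so applying $\pi$ and using $S$-linearity yields $\pi(V) \cdot D_m(\Delta) = (1 - t^m)(1 + \pi(\Theta))$. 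Since $1-s$ is a non-zero-divisor in $S \cong \Z_n[s^\pm]$ (with $s = t^m$), the condition $\pi(\Theta) = 0$ is equivalent to the polynomial identity $\pi(V) \cdot \Delta(s) = 1 - s$.

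The main obstacle, and the technical heart of the proof, is establishing this last identity. The approach is to decompose $V = \sum_{k=0}^{m-1} V_k(t^m) t^k$ in the $S$-basis, expand the relation $\Delta V = 1 - t^m$ into the resulting system of $m$ $S$-linear equations in the $V_k$ (where the contributions from cross-terms $t^{k+l}$ with $k+l \geq m$ get absorbed into $S$ via $t^m$), and solve explicitly for $V_0 = \pi(V)$ in a form exhibiting the factorization $\pi(V) \cdot \Delta(s) = 1 - s$. The unit-coefficient hypothesis at both ends of $\Delta$ should render the extremal equations of this system uniquely solvable and allow one to propagate the solution through the intermediate ones; carrying this through cleanly in the stated generality, rather than case-by-case for the particular $\Delta$'s appearing in Section~\ref{section:examples}, is where the real work lies.
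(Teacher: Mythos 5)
The lemma statement you are working from contains a typo that your proposal does not catch, and the difficulties you hit are symptoms of it. The hypothesis should read $\Delta \mid 1 - t^{m-1}$, not $\Delta \mid 1 - t^{m}$: the paper's proof opens with ``Let $\Gamma$ be such that $\Gamma\Delta = 1 - t^{m-1}$,'' and Proposition~\ref{prop:rg-2} supplies exactly $\Delta^a_i \mid 1 - t^{m-1}$ (from $\T^{(m-1)}$ acting trivially on $C$). With the hypothesis as literally stated the lemma is \emph{false}: over $\Z_2$ with $m=3$ and $\Delta = 1+t+t^2$ one has $\Delta \mid 1-t^3$ and both extremal coefficients are units, yet $t^3 \equiv 1 \pmod{\Delta}$ gives $D_3(\Delta) = 1+t^3+t^6 \equiv 3 \equiv 1 \pmod\Delta$, so $\Delta \nmid D_3(\Delta)$. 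This is exactly your Stage 1 obstruction: $\Delta(1)=0$ does not follow from $\Delta\mid 1-t^m$, and no appeal to ``structural constraints under which the lemma is applied'' can rescue it because the statement itself is wrong. With the corrected hypothesis, $t^{m-1}\equiv 1$ gives $t^m \equiv t \pmod\Delta$, so $D_m(\Delta)\equiv\Delta\equiv 0 \pmod\Delta$ and existence of $\Theta$ is immediate, with no condition on $\Delta(1)$ needed.

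Stage 2 has an independent problem. The identity $\pi(V)\cdot\Delta(s)=1-s$, which you flag as the ``technical heart'' and leave as a plan, is not just unproved but generically false: since the extremal coefficients of $\Delta$ are units, $\Delta(s)\mid 1-s$ in $\Z_n[s^\pm]$ forces the degree spread $\delta\le 1$, whereas the lemma must hold for all admissible $\Delta$ (e.g.\ $\Delta = 1+t+t^2$ with $m=4$ satisfies the corrected hypothesis and gives $\Theta = -t+t^3-t^5+t^6$, so the conclusion holds, but $1+s+s^2 \nmid 1-s$). Moreover, once the hypothesis is corrected, there is in general no $V$ with $\Delta V = 1-t^m$ at all (only $\Delta\Gamma = 1-t^{m-1}$, and $1-t^{m-1}\notin\Z_n[t^{\pm m}]$ for $m>2$), so the $S$-linear projection scheme has nothing to project. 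The paper's argument is substantially more elementary and sidesteps all of this: write $D_m(\Delta)-\Delta = -\sum_{i=1}^{\delta} d_i t^i\bigl(1-t^{i(m-1)}\bigr)$, factor each $1-t^{i(m-1)}$ through $1-t^{m-1}=\Gamma\Delta$, pull out $\Delta$ to obtain $\Theta = -\Gamma\sum_{i=1}^{\delta} d_i t^i\bigl(1+t^{m-1}+\cdots+t^{(i-1)(m-1)}\bigr)$ explicitly, and then observe that every monomial in $\Theta$ has the form $t^{jm}\,t^{\eta+k}$ with $1\le\eta\le\delta$ and $0\le k\le\gamma = m-1-\delta$, hence residue in $\{1,\dots,m-1\}$ mod $m$; the vanishing $e_{mj}=0$ is read off directly.
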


\begin{proof}
    Let $\Gamma$ be the polynomial such that $\Gamma \Delta = 1 - t^{m-1}$. Because $d_0$ and $d_\delta$ are units, $\Gamma$ can only contain terms of non-negative degree, and the degree $\gamma = \operatorname{deg} \Gamma$ satisfies $\gamma + \delta = m-1$.
    Then we can express 
    \begin{align}
        D_m(\Delta)
        &=
        d_0 + d_1 t^m + \dots + d_\delta t^{\delta m}
        \nonumber
        \\
        &=
        d_0 + d_1 t + \dots + d_\delta t^{\delta}
        -
        d_1 t (1 - t^{m-1})
        -
        \dots 
        - d_\delta t^{\delta} (1 - t^{\delta(m-1)})
        \nonumber
        \\
        &=
        \Delta
        \left(
        1 
        -
        \Gamma\left(
        d_1 t
        + 
        d_2 t^2 (1 + t^{m-1})
        +
        \dots
        +
        d_\delta t^\delta (1 + t^{m-1}
        +
        \dots t^{(\delta - 1)(m -1)}
        )
        \right)
        \right)
        \nonumber
        \\
        &\equiv \Delta (1 + \Theta).
    \end{align}
    Notice that $\Theta$ is a sum of terms of the form
    $-t^{\zeta m}
    t^\eta \Gamma$,
    where $1 \leq \eta \leq \delta$; but since $\gamma + \delta = m - 1$, $\Theta $  is a linear combination of monomials the form $t^{\zeta m} t^\eta$ with $1 \leq \eta \leq m - 1$.
\end{proof}

\begin{proposition}
    \label{prop:rg-2}
    Let $P$ be a $p$-modular $p$-theory. Let $m > 1$ be an integer such that $\T^{(m-1)}$ acts trivially on the constraint module $C$; such an integer always exists because $C$ is finite. Suppose that for each $a \in A$:
    \begin{enumerate}
        \item $\T / \T_a \cong \Z$. (Note this implies $R_a \cong \Z_n[t^\pm]$; we choose an isomorphism.)
        \item $Q_a$ is a free $R_a$-module.
        \item There is an $R_a$-module basis $q^a_1, \dots, q^a_{\ell_a}$ for $Q_a$ together with a basis $\Delta^a_1 q^a_1, \dots, \Delta^a_{\ell_a} q^a_{\ell_a}$ for $\ker \omega|_{Q_a}$, where $\Delta^a_i \in \Z_n[t^{\pm}]$ and either $\Delta^a_i$ or $\overline{\Delta^a_i}$ is of the form $d^a_{i0} + d^a_{i1} t^1 + \dots + d^a_{i \delta_i} t^{\delta_i}$ where $d^a_{i0}$ and $d^a_{i\delta_i}$ are units in $\Z_n$, and such that $\Phi_a(\Delta^a_i q^a_i)(q^a_j)$ is constant for all $i, j$.
    \end{enumerate}
     Then $P \to_m P' \cong P^{(m)} \ominus P_2$, where $P_2$ essentially $1$-planar, and where $P^{(m)}$ was defined in Sec.~\ref{section:RG}.
\end{proposition}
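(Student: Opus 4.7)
The plan is, for each orientation $a \in A$, to construct an $R^{(m)}_a$-direct sum decomposition $Q_a = Q^{(m),a} \oplus Q^{P_2,a}$ of the coarsened plane charges, where $Q^{(m),a} := R^{(m)}_a\langle q^a_1, \ldots, q^a_{\ell_a}\rangle$ realizes the plane charges of $P^{(m)}$, the summand $Q^{P_2,a}$ is contained in $\ker\omega|_{Q_a}$ (so $P_2$ will be essentially 1-planar), and the coarsened bilinear form $\Phi^{(m)}_a$ vanishes cross-wise between the two summands. Given such a decomposition on each orientation, the desired isomorphism $P' \cong P^{(m)} \ominus P_2$ follows by assembling data across $a \in A$ and reading off plane charges, kernels, and pairings on each summand.

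First, I derive a divisibility condition: since $\T^{(m-1)}$ acts trivially on $C$, every $(t^{m-1}-1)q$ lies in $\ker\omega|_{Q_a}$, and evaluation at $q = q^a_i$ together with the $R_a$-basis $\{\Delta^a_i q^a_i\}$ of $\ker\omega|_{Q_a}$ forces $\Delta^a_i \mid 1 - t^{m-1}$ in $R_a$. (I read this as the intended hypothesis of Lemma~\ref{lemma:compatibility-with-scaling}, since the stated $\Delta \mid 1 - t^m$ conflicts with the proof's use of $\Gamma\Delta = 1 - t^{m-1}$.) The lemma then yields $D_m(\Delta^a_i) = \Delta^a_i(1 + \Theta^a_i)$ with $\Theta^a_i$ supported at degrees not divisible by $m$. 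Combining this with $t^m \equiv t \pmod{\Delta^a_i}$ gives the key identity $\ker\omega|_{Q^{(m),a}} = R^{(m)}_a \langle D_m(\Delta^a_i) q^a_i\rangle$, matching the kernel structure of $P^{(m)}$. Next, since $\Phi^{(m)}_a$ arises from $\Phi_a$ by subsampling at $t^{mk}$, the constancy hypothesis $\Phi_a(\Delta^a_i q^a_i)(q^a_j) = c_{ij}$ combined with the support condition on $\Theta^a_i$ yields $\Phi^{(m)}_a(D_m(\Delta^a_i) q^a_i)(q^a_j) = c_{ij}$ constant in $R^{(m)}_a$, reproducing the $P^{(m)}$ statistics. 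Moreover, a short $p$-modularity argument — any nontrivial null vector of $C = (c_{ij})$ would produce a nonzero element of $\ker\omega|_{Q_a}$ undetected by anything in $Q_a$ — forces $C$ to be invertible over $\Z_n$.

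The main obstacle is constructing $Q^{P_2,a}$ satisfying all three required properties simultaneously. I will take corrected generators $\tilde q^a_{i,k_0} = t^{k_0}\Delta^a_i q^a_i + \sum_j g_{i,j,k_0}(s)\, q^a_j$ for $k_0 = 1, \ldots, m-1$ and $i = 1, \ldots, \ell_a$, with corrections $g_{i,j,k_0}(s) \in D_m(\Delta^a_j)\, R^{(m)}_a$ chosen so that each $\tilde q^a_{i,k_0}$ automatically lies in $\ker\omega|_{Q_a}$. Because the corrections lie in $Q^{(m),a}$, the images of $\tilde q^a_{i,k_0}$ in $Q_a/Q^{(m),a}$ are unchanged; a direct calculation (wrap-around in the basis $\{t^k q^a_i\}$ is controlled by $D_m(\Delta^a_i)$ via Lemma~\ref{lemma:compatibility-with-scaling}) shows these images form a free $R^{(m)}_a$-basis, establishing complementarity. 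The cross-pairing condition $\Phi^{(m)}_a(D_m(\Delta^a_{i'}) q^a_{i'})(\tilde q^a_{i,k_0}) = 0$ reduces to a linear system in the $g_{i,j,k_0}$'s whose coefficient matrix is $C$ modulated by $\overline{D_m(\Delta^a_j)}$; invertibility of $C$ combined with the explicit subsampling form of the residual uncorrected pairings (governed by $\Theta^a_i$) ensures solvability. Assembling across orientations yields the essentially 1-planar theory $P_2$ (with bilinear form inherited from $\Phi^{(m)}_a|_{Q^{P_2,a}\times Q^{P_2,a}}$), and the isomorphism $P' \cong P^{(m)} \ominus P_2$ follows by unpacking. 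The hardest part will be simultaneously solving the cross-pairing system and the divisibility constraints on the $g_{i,j,k_0}$; this is precisely where the invertibility of $C$ and the structure of $\Theta^a_i$ from Lemma~\ref{lemma:compatibility-with-scaling} are jointly essential.
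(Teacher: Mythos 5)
Your overall strategy — decomposing $Q_a$ as a direct sum of a copy of the $P^{(m)}$ plane charges and an essentially 1-planar complement living inside $\ker\omega|_{Q_a}$, with vanishing cross-pairing — matches the paper's proof. You correctly identify the divisibility $\Delta^a_i \mid 1 - t^{m-1}$ (and you are right that the hypothesis of Lemma~\ref{lemma:compatibility-with-scaling} should read $\Delta \mid 1 - t^{m-1}$; the $1-t^m$ in the statement is a typo). Your observation that $t^m \equiv t \pmod{\Delta^a_i}$ implies $D_m(p) \equiv p \pmod{\Delta^a_i}$, hence $\ker\omega|_{D_m(Q_a)} = R^{(m)}_a\langle D_m(\Delta^a_i)q^a_i\rangle$, is a genuinely nice alternative to the paper's more global argument via $\omega = \omega\circ D_m$; both work.

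However, there is a central misconception in your construction. The \emph{entire correction mechanism} — the terms $g_{i,j,k_0}$, the divisibility constraints on them, the linear system ``modulated by $\overline{D_m(\Delta^a_j)}$'', and the appeal to invertibility of the statistics matrix to solve it — is unnecessary, because the uncorrected generators $t^{k_0}\Delta^a_i q^a_i$ already satisfy both properties you are trying to enforce. They lie in $\ker\omega|_{Q_a}$ automatically, since $\ker\omega|_{Q_a}$ is an $R_a$-submodule and is therefore closed under multiplication by $t^{k_0}$. And the cross-pairing with $D_m(Q_a)$ already vanishes: by the constancy hypothesis and translation invariance, $\langle t^{k}\Delta^a_i q^a_i, t^{\ell}q^a_j\rangle_a = \delta_{k\ell}F_{ij}$, so for $k = mj+k_0$ with $k_0\in\{1,\dots,m-1\}$ and $\ell \in m\Z$ one always has $k\neq\ell$; the reverse direction (detectors in $D_m(Q_a)\cap\ker\omega$ not seeing charges in the complement) follows from this by the symmetry of the planon bilinear form. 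The ``hardest part'' you describe simply does not exist — the constancy hypothesis does all the work, and what you should be spending effort on is elsewhere, namely establishing the direct sum. You claim this follows from ``a direct calculation'' showing the images of $\tilde q^a_{i,k_0}$ in $Q_a / D_m(Q_a)$ form a free $R^{(m)}_a$-basis, but you neither give this calculation nor explain why the wrap-around does not obstruct it. The paper instead proves linear independence via a double-containment argument that relies essentially on $p$-modularity (a nonzero element of the intersection would be a planon undetected by any $a$-planon), which is the substantive step. So: right decomposition, right auxiliary lemma, but energy misdirected into solving a trivial system while the genuine linear-independence argument is left unaddressed.
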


Note that assumption \#1 can always be made to hold by suitable coarsening. 

\begin{proof}
    Anticipating coarsening the $p$-theory to the translation symmetry $\T^{(m)} \subset \T$, we denote $\T' = \T^{(m)}$ and $R' = \Z_n \T' \subset R$. Fix an orientation $a \in A$. We choose a basis for $Q_a$ and $\operatorname{ker} \omega|_{Q_a}$ as in assumption \#3. For each basis element $q^a_i \in Q_a$, there exists $g^a_i \in \T$ such that $g^a_i q^a_i = t q^a_i$. Using the given basis, we define a $\Z_n$-module map $D^a_m : Q_a \to Q_a$ by $D^a_m(t^i q^a_j) = D_m(t^i) q^a_j = t^{m i} q^a_j$. This can be extended to a $\Z_n$-module map $D_m : Q \to Q$ by defining $D_m = \oplus_{a \in A} D^a_m$, where in a slight abuse of notation we use the same symbol as for the map $D_m : \Z_n[t^{\pm}] \to \Z_n[t^{\pm}]$ defined above. While $D_m$ is neither an $R$-module map nor an $R'$-module map, we do have that $D_m(Q)$ is an $R'$-submodule of $Q$.

    We have $(1- t^{m-1})q^a_i \in \operatorname{ker} \omega|_{Q_a}$, because
    \begin{equation}
    \omega[ (1-t^{m-1}) q^a_i] = \omega[ (1-(g^a_i)^{m-1}) q^a_i] = (1 - (g^a_i)^{m-1} ) \omega(q^a_i) = 0 \text{,}
    \end{equation}
    since $\T^{(m-1)}$ acts trivially on $C$. Therefore $(1-t^{m-1}) q^a_i = \Gamma \Delta^a_i q^a_i$ for some $\Gamma \in R_a$, and $\Delta^a_i | 1 - t^{m-1}$. This implies that also $\bar{\Delta}^a_i | 1 - t^{m-1}$.
    
    Now consider the set
    \begin{align}
        \mathbf b_a = \{ t^{m\Z} q^a_i \} \cup \{ t^{\{1,\dots,m-1\} + m \Z} \Delta^a_i q^a_i \}
        \equiv
        \mathbf b_a^{(m)} \cup \mathbf c_a.
    \end{align}
    We will prove that $\mathbf b_a$ is a $\Z_n$-basis for $Q_a$; first we need to show $Q_a = \spn_{\Z_n} \mathbf b_a$, where $\spn_{\Z_n}$ is the span with $\Z_n$ coefficients. We have $D_m(\Delta^a_{i} q_i) \in \spn_{\Z_n} \mathbf b_a^{(m)}$.   We can apply Lemma~\ref{lemma:compatibility-with-scaling} to either $\Delta^a_i$ or $\overline{\Delta^a_i}$, obtaining $D_m (\Delta^a_i) = (1 + \Theta^a_i) \Delta^a_i$ for a polynomial $\Theta^a_i$ whose coefficients of $t^{lm}$ vanish for all $l \in \Z$. Therefore, $\Theta^a_i \Delta^a_i q^a_i \in \spn_{\Z_n} \mathbf c_a$, and so $\Delta^a_i q^a_i = (D_m(\Delta^a_i) - \Theta^a_i \Delta^a_i) q^a_i \in \spn \mathbf b_a$. Given an expression for $\Delta^a_i q^a_i$ as a $\Z_n$-linear combination of elements of $\mathbf b_a$, we can apply $t^{jm}$ to both sides for any $j \in \Z$ to obtain $t^{j m} \Delta^a_i q^a_i \in \spn_{\Z_n} \mathbf b_a$, because $\mathbf b_a$ is closed under multiplication by $t^{j m}$.  It follows that $t^k \Delta^i_a q^i_a \in \spn_{\Z_n} \mathbf b_a$ for all $k\in \Z$, and therefore $\ker \omega|_{Q_a} \subset \spn_{\Z_n} \mathbf b$. Using $\Delta^a_i | 1- t^{m-1}$, we have $t^k (1 - t^{m-1}) q^a_i \in \spn_{\Z_n} \mathbf b_a$ for all $k \in \Z$, from which it can be shown that $t^k q^a_i \in \spn_{\Z_n} \mathbf b_a$ for all $k \in \Z$, and thus $Q_a = \spn_{\Z_n} \mathbf b_a$.

    Next we have to show that $\mathbf b_a$ is $\Z_n$-linearly independent. First of all $\mathbf b^{(m)}_a$ is linearly independent, because a non-trivial linear dependence would contract the assumption that $q^a_1, \dots, q^q_{\ell_a}$ is an $R_a$-basis. We note that in fact $\mathbf b^{(m)}_a$ is a $\Z_n$-basis for $D^a_m(Q_a)$. Similarly, $\mathbf c_a$ is linearly independent because $\Delta^a_1 q^a_1, \dots, \Delta^a_{\ell_a} q^a_{\ell_a}$ is an $R_a$-basis. Therefore, if we can show
    $\spn_{\Z_n} \mathbf b_a^{(m)} \cap \spn_{\Z_n} \mathbf c_a = \{0 \}$, this implies $\mathbf b_a$ is linearly independent.  Observe that Assumption \#3 implies
    $[ \Phi_a(t^{m k + \eta} \Delta^a_i q^a_i )(t^{m \ell} q^a_j ) ]_1 = 0$ for all $k, \ell \in \Z$ and $\eta \in \{1, \dots, m-1\}$. It follows that 
    \begin{equation}
    p \in \spn_{\Z_n} \mathbf c_a \text{ and } q \in D^a_m(Q_a) \implies \langle p, q \rangle_a = 0 \text{.} \label{eqn:statzero}
    \end{equation}
   Now suppose that $p \in \spn_{\Z_n} \mathbf b_a^{(m)} \cap \spn_{\Z_n} \mathbf c_a$ and $p \neq 0$. Because $p \in \spn_{\Z_n} \mathbf c_a \subset \operatorname{ker} \omega|_{Q_a}$, it is an $a$-planon, and $p$-modularity implies there exists $p' \in \operatorname{ker} \omega|_{Q_a}$ with $\langle p, p' \rangle_a \neq 0$. Because $Q_a = \spn_{\Z_n} \mathbf b_a$, we can write $p'$ as a $\Z_n$-linear combination of elements of $\mathbf b_a$. That is, we can write $p' = p'_b + p'_c$, where $p'_b \in \spn_{\Z_n} \mathbf b^{(m)}_a = D^a_m(Q_a)$ and $p'_c \in \spn_{\Z_n} \mathbf c_a$. By Eq.~\ref{eqn:statzero}, we have $\langle p, p'_b \rangle_a = 0$, so $\langle p, p'_c \rangle_a = \langle p, p' \rangle_a \neq 0$. But since $p \in D^a_m(Q_a)$, Eq.~\ref{eqn:statzero} implies $\langle p'_c, p \rangle_a = 0$, a contradiction.

    We have thus obtained a direct sum decomposition of $\Z_n$-modules $Q_a = D^a_m(Q_a) \oplus \tilde{Q}_a$, where 
    $\tilde{Q}_a \equiv \spn_{\Z_n} \mathbf c_a \subset \operatorname{ker} \omega|_{Q_a}$. In fact, both direct summands are $R'$-submodules of $Q_a$, so this is a decomposition of $R'$-modules. For any $q \in D^a_m(Q)$ and $\tilde{q} \in \tilde{Q}_a$, we have $\langle \tilde{q}, q \rangle_a = 0$. Moreover if also $q \in \operatorname{ker} \omega|_{Q_a}$, then $\langle q, \tilde{q} \rangle_a = 0$. We have the $R'$-module decomposition $Q = Q_1 \oplus Q_2$, where $Q_1 = \oplus_{a \in A} D^a_m(Q_a) = D_m(Q)$ and $Q_2 = \oplus_{a \in A} \tilde{Q}_a$. Since $Q_2 \subset \operatorname{ker} \omega$, using $S = Q \cap \operatorname{ker} \omega$, we also have the direct sum decomposition of $R'$-modules $S = S_1 \oplus S_2$, where $S_2 = Q_2$ and $S_1 = Q_1 \cap \operatorname{ker}\omega = \operatorname{ker} \omega|_{Q_1}$. There is no mutual statistics between planons in $S_1$ and excitations in $S_2$, and vice versa, so we have a decomposition of $p$-theories $P \to_m P' = P_1 \ominus P_2$, where $P_2$ is essentially 1-planar because $S_2 = Q_2$.

    It remains to show that $P_1 \cong P^{(m)}$. The first step is to show that we have an $R'$-module isomorphism $S_1 \cong S^{(m)}$, where $S^{(m)}$ was defined in Sec.~\ref{section:RG}. We have $\omega = \omega \circ D_m$, because the two maps agree on $\Z_n$-module generators of $Q$, namely
    \begin{equation}
    \omega(D_m(t^i q^a_j)) = \omega(t^{m i} q^a_j ) = \omega( t^{(m-1)i} t^i q^a_j )
    = \omega( (g^a_j)^{(m-1)i} t^i q^a_j) = (g^a_j)^{(m-1)i} \omega( t^i q^a_j) = \omega( t^i q^a_j) \text{,}    
    \end{equation}
    where we used the fact that $\T^{(m-1)}$ acts trivially on $C$. This implies that $D_m(\operatorname{ker} \omega) = \operatorname{ker} \omega|_{D_m(Q)}$. Therefore $S_1 = D_m(\operatorname{ker} \omega) = D_m(S)$. Since $D_m : Q \to Q$ is clearly injective, we thus have an isomorphism of $\Z_n$-modules $D_m|_S : S \to S^1$. As a $\Z_n$-module, $S^{(m)} = S$, so we also have the $\Z_n$-module isomorphism $D_m|_S : S^{(m)} \to S^1$. It can be checked that this is in fact an $R'$-module isomorphism. Finally, we need to show that $\langle p , q \rangle_a = \langle D_m(p), D_m(q) \rangle_a$ for any $p \in \operatorname{ker} \omega|_{Q_a}$ and any $q \in Q_a$. This can be checked by expanding $p$ and $q$ as $\Z_n$-linear combinations of basis elements, and using the fact $\langle t^k \Delta^a_i q^a_i , t^\ell q^a_j \rangle_a = \delta_{k,\ell}\langle  \Delta^a_i q^a_i ,  q^a_j \rangle_a$ for all $k, \ell \in \Z$, which follows from Assumption \#3 and translation invariance.
    \end{proof}

The proposition below gives a sufficient condition for a $p$-theory to admit a foliated RG where layers of toric code are exfoliated. We define a $R$-module homomorphism $\Psi^a_\Delta : \Delta R_a \oplus \bar{\Delta} R_a \to R^*_{aa}$, where $\Delta \in R_a$, $R^*_{a a} = \overline{\operatorname{Hom}}_R (R_{aa}, R_a)$, and the map is defined by
\begin{align}
            (\Delta, 0) &\mapsto (0, 1)^* \text{.}
            \\
            (0, \bar \Delta) &\mapsto (1, 0)^* \text{.}
\end{align}

\begin{proposition}
    \label{prop:foliated-rg}
    Let $P$ be a $p$-modular $p$-theory, such that, for each $a \in A$,
    \begin{enumerate}
        \item $\T / \T_a \cong \Z$. (This implies $R_a \cong \Z_n[t^\pm]$; we choose an isomorphism.)
        \item $Q_a$ is a free $R_a$-module, $Q_a \cong R_a^{2 r_a}$ for $r_a \in \N$. We choose an isomorphism and identify $Q_a = R^{2 r_a}_a$.
        \item For $i=1,\dots,r_a$, there exist $\Delta_{a i} \in R_a$ whose highest and lowest degree terms have coefficients that are units in $\Z_n$, such that $\operatorname{ker} \omega|_{Q_a} = \oplus_{i=1}^{r_a} ( \Delta_{a i} R_a \oplus \bar{\Delta}_{a i} R_a )$, and
      $\Phi_a = \oplus_{i=1}^{r_a} \Psi^a_{\Delta_{a i}}$.
    \end{enumerate}
    Then, letting $m$ be an integer such that $\T^{(m-1)}$ acts trivially on $C$, up to isomorphism
    \begin{equation}
    P \to_m \cong P^{(m)} \ominus P_\ell \text{,}
    \end{equation}
    where
    \begin{equation}
    P_\ell = \bigominus_{a \in A}  P_{\textup{$a$-oriented $\Z_n$ toric code layers}}^{\ominus r_a (m-1)}.
    \end{equation}
    That is, $P_\ell$ is a stack of $\Z_n$ toric code layers, with $r_a(m-1)$ $a$-oriented layers per unit cell of $\T^{(m)}$ translation symmetry.
\end{proposition}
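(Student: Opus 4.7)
The plan is to deduce the result from Proposition~\ref{prop:rg-2}, and then identify the resulting essentially $1$-planar piece as the claimed toric-code stack. Write the chosen $R_a$-basis of $Q_a \cong R_a^{2r_a}$ as $q^a_{\mathfrak e_1}, q^a_{\mathfrak m_1}, \ldots, q^a_{\mathfrak e_{r_a}}, q^a_{\mathfrak m_{r_a}}$. Unpacking the definition of $\Psi^a_\Delta$, hypothesis (3) translates into the mutual statistics
\begin{equation}
\langle t^k \Delta_{ai} q^a_{\mathfrak e_i},\, t^\ell q^a_{\mathfrak m_j}\rangle_a = \delta_{ij}\delta_{k\ell},\qquad \langle t^k \bar\Delta_{ai} q^a_{\mathfrak m_i},\, t^\ell q^a_{\mathfrak e_j}\rangle_a = \delta_{ij}\delta_{k\ell},
\end{equation}
with all other pairings of basis planons against basis plane charges vanishing. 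In particular, $\Phi_a(\Delta q)(q') \in R_a$ is constant for every such pairing, which is exactly the statistical condition required by Proposition~\ref{prop:rg-2}. Hypotheses (1) and (2) of that proposition are immediate from (1) and (2) here; for the polynomial shape requirement on $\Delta^a_j$, one uses that if $\Delta_{ai}$ has unit leading and trailing coefficients then a suitable power-of-$t$ rescaling puts it in the required form without altering the submodule it generates, and similarly for $\bar\Delta_{ai}$.

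First I would apply Proposition~\ref{prop:rg-2} to obtain $P \to_m P' \cong P^{(m)} \ominus P_2$ with $P_2$ essentially $1$-planar. From the proof of that proposition, $P_2$ has superselection module $\bigoplus_{a \in A} \tilde Q_a$ with $\tilde Q_a = \operatorname{span}_{\Z_n}\mathbf c_a$, where
\begin{equation}
\mathbf c_a = \bigcup_{i=1}^{r_a}\bigcup_{\eta=1}^{m-1}\big\{\,t^{m\Z+\eta}\Delta_{ai} q^a_{\mathfrak e_i},\ t^{m\Z+\eta}\bar\Delta_{ai} q^a_{\mathfrak m_i}\,\big\}.
\end{equation}
For each triple $(a,i,\eta)$, I would define planons $e^{a,i,\eta} \equiv t^\eta \Delta_{ai} q^a_{\mathfrak e_i}$ and $m^{a,i,\eta} \equiv t^\eta \bar\Delta_{ai} q^a_{\mathfrak m_i}$, and consider the $R'$-submodule $L_{a,i,\eta} \subset \tilde Q_a$ they generate. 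Under the $\T^{(m)}$-action (with $t^m$ acting freely as the transverse translation), $L_{a,i,\eta}$ is to be identified as an $R'$-module with the superselection module of a single $a$-oriented $\Z_n$ toric code layer per $\T^{(m)}$-unit cell; the Kronecker-delta statistics above then restrict on $L_{a,i,\eta}$ to the canonical $e$-$m$ toric code pairing, and vanish between distinct blocks. Summing gives $P_2 \cong \bigominus_{a,i,\eta} L_{a,i,\eta} \cong \bigominus_{a\in A} P_{\text{$a$-layers}}^{\ominus r_a(m-1)}$, as required.

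The main obstacle will be verifying that each $L_{a,i,\eta}$ is free of rank two as an $R'_a$-module, so that it really models a single toric code layer per unit cell rather than a quotient. The key input is that the leading and trailing coefficients of $\Delta_{ai}$ are units, which---as used in Lemma~\ref{lemma:compatibility-with-scaling} and the linear independence argument in Proposition~\ref{prop:rg-2}---ensures that multiplication by $\Delta_{ai}$ induces an injection from $R'_a$ onto the $\Z_n$-span of $\{t^{m\Z+\eta}\Delta_{ai} q^a_{\mathfrak e_i}\}$, and the same for the $\mathfrak m$-sector. Confirming the direct sum decomposition over $(a,i,\eta)$ then follows from the fact that $\mathbf c_a$ is already a $\Z_n$-basis for $\tilde Q_a$ (established in the proof of Proposition~\ref{prop:rg-2}) together with the block-diagonal Kronecker-delta structure of the statistics. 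With these verifications in place, the identification $P_2 \cong P_\ell$ closes out the argument.
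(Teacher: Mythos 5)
Your reduction to Proposition~\ref{prop:rg-2} is correct and mirrors the paper's first step, including the observation that a power-of-$t$ change of basis brings each $\Delta_{ai}$ into the polynomial form required there. The gap is in the final identification of the exfoliated factor $P_2$ with decoupled toric code layers.

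You propose the basis $e^{a,i,\eta} = t^\eta\Delta_{ai}q^a_{\fe i}$, $m^{a,i,\eta} = t^\eta\bar{\Delta}_{ai}q^a_{\fm i}$ for $\tilde Q_a$, group these into blocks $L_{a,i,\eta}$, and claim the statistics vanish between distinct blocks. This is false. Writing $\Delta_{ai} = \sum_s d_s t^s$, the hypothesized form of $\Phi_a$ together with translation invariance gives
\[
\bigl\langle\, t^{mk}\,m^{a,i,\eta},\ t^{m\ell}\,e^{a,j,\eta'}\bigr\rangle_a
= \delta_{ij}\, d_{\,m(k-\ell)+\eta-\eta'}\,,
\]
so for $\eta \neq \eta'$ the mutual statistics between your blocks is a coefficient of $\Delta_{ai}$, which is generically nonzero: already for $\Delta = 1-t$ one has $d_1 = -1$, coupling block $\eta$ to block $\eta+1$ within the same unit cell. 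Hence the $L_{a,i,\eta}$ as you define them do not combine by $\ominus$, and $P_2$ has not been exhibited as a stack of decoupled layers. (Even within a single block, the pairing comes out as $d_0\delta_{k\ell}$ rather than $\delta_{k\ell}$, but since $d_0$ is a unit this is cosmetic; the inter-block coupling is the real obstruction.)

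The paper's proof avoids this by choosing the electric basis differently (taking $r_a=1$ for simplicity): it keeps $p_{\fm i} = t^i\bar{\Delta}_a q_{a\fm}$ as you do, but defines $p_{\fe i}$ to be the component of $t^i q_{a\fe}$ in $Q^\ell_{a\fe}$ under the direct-sum decomposition $Q_{a\fe} = D_m(Q_{a\fe}) \oplus Q^\ell_{a\fe}$, i.e.\ $t^i q_{a\fe} = p_{\fe i} + r'_i q_{a\fe}$ with $r'_i \in R'_a$. Because $p_{\fe i}$ differs from $t^i q_{a\fe}$ by an element of $D_m(Q_{a\fe}) = R'_a\, q_{a\fe}$, the pairing $\langle t^{mk}p_{\fm i}, t^{m\ell}p_{\fe j}\rangle_a$ receives only the $\delta_{mk+i,\,m\ell+j}$ contribution from $\langle\bar{\Delta}_a q_{a\fm}, q_{a\fe}\rangle_a$ plus terms indexed by multiples of $m$, which vanish since $1\leq i \leq m-1$; the result is exactly $\delta_{ij}\delta_{k\ell}$, with no stray $\Delta$-coefficients. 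To close your argument you would need to replace your $e^{a,i,\eta}$ with this projected electric basis (or dually project the magnetic side) and then check, as the paper does, that the resulting set is an $R'_a$-basis for $\tilde Q_a$.
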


\begin{proof}
    Throughout the proof, we view $S$ as a submodule of $Q$. We first establish an isomorphism between $P$ and a $p$-theory $P'$ for which the conditions of Proposition~\ref{prop:rg-2} hold. Let $Q'_a = R^{2 r_a}_a$, consider $R$-module isomorphisms $\alpha_{ai} : R_{aa} \to R_{aa}$ defined by $\alpha_{ai}(x,y) = (x, t^{k_{ai}} y)$ for $k_{ai}\in \Z$, and define $\alpha_a : Q_a \to Q'_a$ by $\alpha_a = \oplus_{i=1}^{r_a} \alpha_{ai}$. We then have the $R$-module isomorphism $\alpha : Q \to Q' = \oplus_{a \in A} Q'_a $ given by $\alpha = \oplus_{a \in A} \alpha_a$. While $Q$ and $Q'$ are identical, we use different symbols because they represent the plane charge modules of two different $p$-theories. We denote the standard $R_a$-basis elements of $Q_a$ by $q_{a e 1}, q_{a m 1}, \dots, q_{a e r_a}, q_{a m r_a}$, and similarly write $q'_{a e 1}, \dots$ for the basis elements of $Q'_a$. Then $\alpha_a(q_{a e i}) = q'_{a e i}$ and $\alpha_a(q_{a m i}) = t^{k_{ai}} q'_{a e i}$. Assumption \#3 implies  
    \begin{equation}
    \langle t^k \Delta_{a i}  q_{a e i} , t^\ell  q_{a m j} \rangle_a = 
    \langle t^k \bar{\Delta}_{a i}  q_{a m i} , t^\ell  q_{a e j} \rangle_a 
    = \delta_{i j} \delta_{k \ell} \text{.}
    \end{equation}
    In order for $\alpha$ to give an isomorphism of $p$-theories, it must preserve statistics, so we require
    \begin{equation}
    \langle t^k \Delta_{a i} q'_{a e i} , t^{\ell+ k_{a j}}  q'_{a m j} \rangle'_a = 
    \langle t^{k + k_{a i}} \bar{\Delta}_{a i}  q'_{a m i} , t^\ell  q'_{a e j} \rangle'_a 
    = \delta_{i j} \delta_{k \ell} \text{.}
    \end{equation}
    This holds if we choose $\Phi'_a : Q'_a \to (Q'_a)^*$ to be $\Phi'_a = \oplus_{i =1}^{r_a} \Psi_{\Delta'_{ai}}$, where $\Delta'_{ai} = \bar{t}^{k_{ai}} \Delta_{ai}$. 
    We get an isomorphism of $p$-theories by setting $S' = \alpha(S)$. Note that $S'_a = \alpha(S_a) = \oplus_{i=1}^{r_a} ( \Delta_{a i} R_a \oplus \bar{\Delta}_{a i} R_a ) = \oplus_{i=1}^{r_a} ( \Delta'_{a i} R_a \oplus \bar{\Delta}'_{a i} R_a )$, so the choice of $\Phi'_a$ makes sense.
      
    The integers $k_{ai}$ can be chosen so that $\Delta'_{a i}$ has terms of non-negative degree, with a non-zero constant term. Since we are only interested in $P$ up to isomorphism, we can assume that $\Delta_{ai}$ has this form without loss of generality. 
    The conditions of Proposition~\ref{prop:rg-2} are thus satisfied, and up to isomorphism we have
    $P \to_m P^{(m)} \ominus P_\ell$, where $P_\ell$ is essentially 1-planar, so $P_\ell = \oplus_{a \in A} P_{\ell, a}$, where $P_{\ell, a}$ is a $p$-theory whose only non-trivial excitations are the $a$-oriented planons of $P_\ell$. 

    We let $\T' = \T^{(m)}$ and $R' = \Z_n \T'$. By the proof of Proposition~\ref{prop:rg-2}, we have a decomposition of $R'$-modules $Q_a = D_m(Q_a) \oplus Q^\ell_a$, where $Q^\ell_a = S_{\ell, a}$ is the module of superselection sectors of $P_{\ell,a}$; note that $Q^\ell_a$ is called $\tilde{Q}_a$ in the proof of Proposition~\ref{prop:rg-2}. Denoting $\T'_a = \T' \cap \T_a$ and $R'_a = \Z_n \T' / \T'_a$, because $\T'_a$ acts trivially on $Q_a$ and its submodules above, we can view these as $R'_a$-modules. There is an injective group homomorphism mapping $\T' / \T'_a \to \T / \T_a$ induced by the inclusion $\T' \hookrightarrow \T$; this in turn induces an injective ring homomorphism $R'_a \hookrightarrow R_a$, so we can view any $R_a$-module as an $R'_a$-module. Identifying $R_a = \Z_n[t^\pm]$, this leads us to identify $R'_a$ with the subring of polynomials of the form $\sum_{i \in \Z} c_i t^{i m}$. 
    
    To complete the proof, we need to show that $P_{\ell,a}$ is a stack of $r_a (m-1)$ copies of toric code layers. For simplicity, we take $r_a = 1$; the case of $r_a > 1$ is a trivial generalization. We denote the two standard $R_a$-basis elements for $Q_a = R_a^2$ as $q_{a \fe}$ and $q_{a \fm}$, where the labels $\fe$ and $\fm$ are chosen to refer to the electric and magnetic sectors of toric code layer excitations. Examining the proof of Proposition~\ref{prop:rg-2}, the $R'$-module decomposition $Q_a = D_m(Q_a) \oplus Q^\ell_a$ breaks up into separate decompositions for each basis label. That is, if we define $Q_{a\fe}$ ($Q_{a\fm}$) to be the $R_a$-module generated by $q_{a\fe}$ ($q_{a\fm}$), then we have $Q_{a \fe} = D_m(Q_{a \fe}) \oplus Q^\ell_{a \fe}$, where $Q^\ell_{a \fe}$ is the $R'_a$-module with basis $\{ t^i \Delta_a q_{a\fe} | i =1,\dots,m-1 \}$. The analogous statement holds for $Q_{a\fm}$, where $Q^\ell_{a \fm}$ is the $R'_a$-module with basis $\{ t^i \bar{\Delta}_a q_{a\fe} | i =1,\dots,m-1 \}$. Moreover we have $D_m(Q_a) = D_m(Q_{a\fe}) \oplus D_m(Q_{a \fm})$ and $Q^\ell_a = Q^\ell_{a \fe} \oplus Q^\ell_{a \fm}$.
    
    Now, for $i=1,\dots,m-1$, we have $t^i q_{a\fe} = p_{\fe i} + r'_i q_{a \fe}$, where $r'_i \in R'_a$, and  $p_{\fe i} \in Q^\ell_{a \fe}$ and $r'_i q_{a \fe} \in D_m(Q_{a \fe})$ are unique by the direct sum decomposition of $Q_{a \fe}$. Since $t^i q_{a\fe} \notin D_m(Q_{a\fe})$, we have $p_{\fe i} \neq 0$. Defining $p_{\fm i} = t^i \bar{\Delta}_a q_{a \fm}$ for $i=1,\dots,m-1$, we let
    \begin{equation}
    \mathbf p = \mathbf p_\fe \cup \mathbf p_\fm = \{ p_{\fe i} | i = 1,\dots,m-1 \}
    \cup \{ p_{\fm i} | i = 1,\dots,m-1 \} \text{.}
    \end{equation}
    We claim that $\mathbf p$ is an $R'_a$-module basis for $Q^\ell_a$.  It is clear that $\mathbf p$ is a $R'_a$-linearly independent; we need to show that $\spn_{R'_a} \mathbf p = Q^\ell_a$. It is enough to show that $t^i \Delta_a q_{a\fe} \in \spn_{R'_a} \mathbf p_\fe$. Each term in $t^i \Delta_a q_{a\fe}$ is of the form $d_k t^{i + k} q_{a\fe}$ for $k = 0, \dots, \operatorname{deg} \Delta_a$ and $d_k \in \Z_n$, which is either an element of $D_m(Q_{a\fe})$, or can be written in the form $s'_k ( p_{\fe j(k)} + r'_{j(k)} q_{a \fe})$ for some $s'_k \in R'_a$ and some $j(k) \in\{1,\dots,m-1\}$. We thus obtain an expression of the form $t^i \Delta_a q_{a\fe} = \sum_{k} s'_k p_{\fe j(k)} + q_{(m)}$ for some $ q_{(m)} \in D_m(Q_{a\fe})$. By the direct sum decomposition  $Q_{a\fe} = D_m(Q_{a\fe}) \oplus Q^\ell_{a\fe}$, we must have $q_{(m)} = 0$.
    
    Finally, it is straightforward to compute
    \begin{equation}
    \langle t^{mk} p_{\fm i}, t^{m\ell} p_{\fe j} \rangle_a = 
    \langle t^{i+mk} \bar{\Delta}_a  q_{a \fm}, t^{j + m\ell} q_{a \fe} \rangle_a
    = \delta_{i j} \delta_{k \ell} \text{,}
    \end{equation}
    and
    \begin{equation}
    \langle t^{mk} p_{\fe i},  t^{m \ell} p_{\fe j} \rangle_a = 
    \langle t^{mk} p_{\fm i}, t^{m\ell} p_{\fm j} \rangle_a = 0 \text{.}
    \end{equation}
    Therefore $P_{\ell,a}$ is a stack of $m-1$ copies of toric code layers as claimed.
\end{proof}

\section {Checkerboard equivalences}
\label{appendix:checkerboard}

Here we fix $n$ to be odd, and show that the $\Z_n$ checkerboard model decomposes as a stack of three $\Z_n$ anisotropic modes.
First, we show equivalence of $p$-theories, 
\begin{align}
    P_\text{Ch} \cong P_\text{An}^x \ominus P_\text{An}^y \ominus P_\text{An}^z,
\end{align}
then we 
show circuit equivalence. 
Of course, the latter implies the former. 
We also demonstrate the $p$-theory equivalence because it is (arguably) simpler, and because its structure is mirrored in part of the circuit equivalence. 

Let us recall the $p$-theories of the two models. 
In both cases, the translational symmetry group associated with the theory is $\mathcal T = \langle x,y,z\rangle$; the embedding of the lattice into real space is $\eta : x^a y^b z^c \mapsto a \hat x + b \hat y + c \hat z$; and the normal vectors asssociated to each plane of mobility are $\hat n_x = \hat x, \hat n_y = \hat y, \hat n_z = \hat z$. The constraint maps and planon mutual statistics are given by:

\textit{$\mu$-Anisotropic:}
\begin{align}
     P_\text{An}^\mu
    =
    (
    &\omega: \Z_n[t^\pm]\{ \mathbf q_\nu^s, \mathbf q_\rho^s \}
    \to \Z_n\{ \mathbf r_\nu^e, \mathbf r_\rho^m \},
    \\
    & \hspace{59pt}
    \Delta \mathbf q_{\mu'}^s
    \mapsto \begin{cases}
        \mathbf r_\nu^e & \text{$\mu' = \nu$ and $s = e$}
        \\
        - \mathbf r_\nu^e & \text{$\mu' = \rho$ and $s = e$}
        \\
        \mathbf r_\rho^m & \text{$s = m$}
    \end{cases},
    \nonumber \\ &
    \Phi_{\mu'} : \Z_n[t^\pm]\{ \Delta \mathbf q_{\nu}^s, \Delta \mathbf q_{\rho}^s\}
    \to \Z_n[t^\pm]\{ \mathbf q_{\nu}^s, \mathbf q_\rho^s \}^*,
    \nonumber \\ & \hspace{86pt}
    \Delta \mathbf q_{\mu'}^e \mapsto \mathbf q_{\mu'}^{m*},
    \nonumber \\ & \hspace{86pt}
    \bar \Delta \mathbf q_{\mu'}^m \mapsto \mathbf q_{\mu'}^{e*}
    )
    \nonumber
\end{align}

\textit {Checkerboard:}
\begin{align}
    \label{eq:check-data}
    P_\text{Ch} = (
        &\omega : \Z_n[t^\pm]\{ \mathbf q_{\mu\epsilon}^s \}
        \to \Z_n \{ \mathbf r_{\mu}^s \},
        \\ & \hspace{57pt}
        \mathbf q_{\mu\epsilon}^s
        \mapsto \begin{cases}
            \mathbf r_\nu^s - \mathbf r_\rho^s & \epsilon = 0
            \\
            \mathbf r_\mu^s & \epsilon = 1
        \end{cases},
        \nonumber \\ &
        \Phi: \Z_n[t^\pm]\{ \Delta \mathbf q_{\mu\epsilon}^s \} \to \Z_n[t^\pm]\{ \mathbf q_{\mu\epsilon}^s \}^*,
        \nonumber \\ & \hspace{57pt}
        \Delta \mathbf q_{\mu 0}^e \mapsto \mathbf q_{\mu 1}^{m*},
        \nonumber \\ & \hspace{57pt}
        \bar \Delta \mathbf q_{\mu 1}^m \mapsto \mathbf q_{\mu  0}^{e*},
        \nonumber \\ & \hspace{57pt}
        \bar \Delta \mathbf q_{\mu 1}^e \mapsto 
        - \mathbf q_{\mu 0}^{m*},
        \nonumber \\ & \hspace{57pt}
        \Delta \mathbf q_{\mu 0}^m \mapsto 
        - \mathbf q_{\mu 1}^{e*}
        )
        \nonumber
\end{align}

Here $\Delta = 1 - t$. To specify the $R$-module structure in each of the above spaces, it suffices to write that $\mu^a \nu^b \rho^c \mathbf q_{\mu I} = t^a \mathbf q_{\mu I}$ and that all translations act trivially on $\mathbf r_J$, where $I$ and $J$ contain any other appropriate indices. Note that we present the checkerboard data having already coarse-grained down to the cubic sublattice of the fcc lattice.

Let $\{ \mathbf q_{\mu\epsilon}^s \}$ and $\{ \mathbf r_{\mu}^s \}$ be checkerboard plane charge and constraint module generators, respectively, as above. 
Noting that since $n$ is odd, $2 \in \Z_n^\times$,
consider the change of bases to $\{ \tilde \q_{\nu}^{s\mu}, \tilde \q_{\rho}^{s\mu}\}$ and $\{\tilde {\mathbf r}^{s\mu}\}$ defined by 
\begin{align}
    \tilde \q_\mu^{e\nu} = - \q_{\mu 1}^e + \q_{\mu 0}^e,
    \\
    \tilde \q_{\mu}^{m \nu} = \frac 12 ( \q_{\mu0}^m - t {\q_{\mu 1}}^m),
    \\
    \tilde \q_\mu^{e\rho} = \q_{\mu0}^e + \q_{\mu 1}^e,
    \\
    \tilde \q_{\mu}^{m\rho} = \frac 12 (\q_{\mu 1}^m + t \q_{\mu0}^m),
\end{align}
and 
\begin{align}
    \tilde {\mathbf r}^{s\mu}
    = 2^{-\delta_{sm}} (\mathbf r^s_\nu + \mathbf r^s_\rho - \mathbf r^s_\mu).
\end{align}
With $\omega$ the checkerboard constraint map as given above, one computes
\begin{align}
    \omega : \hspace{2pt} &\tilde \q_{\nu}^{e\mu},
    \tilde \q_\rho^{e\mu}
    \mapsto \tilde {\mathbf r}^{e\mu},
    \\
    &\tilde \q_\nu^{m\mu}, - \tilde \q_\rho^{m\mu} \mapsto \tilde {\mathbf r}^{m\mu}.
\end{align}
This precisely concides with the constraint map of three anisotropic models, indexed by the upper $\mu$. Meanwhile, regarding remote detection, we compute
\begin{align}
    \llangle \Delta \tilde \q_\nu^{e\mu}, \tilde \q_\nu^{s \mu'} \rrangle 
    =
    \llangle \Delta \tilde \q_\rho^{e\mu}, \tilde \q_\rho^{s \mu'} \rrangle 
    =
    \delta_{sm} \delta_{\mu\mu'},
\\
    \langle \bar \Delta \tilde \q_\nu^{m\mu}, \tilde \q_{\nu}^{s\mu'} \rangle
    = 
    \langle \bar \Delta \tilde \q_{\rho}^{m\mu},
    \tilde \q_{\rho}^{s\mu'}
    \rangle =
    \delta_{se} \delta_{\mu\mu'}.
\end{align}
This shows that planon remote detection also decomposes into that of anisotropic models, and establishes the desired equivalence of theories.

Now we show the circuit equivalence.
The (coarse-grained) checkerboard excitation map is given by
\begin{align}
    \epsilon
    =
    \left(
    \begin{tabular}{ C C  C C | C C C C | C  C C C | C C  C C }
            1&1&1&1& -1&-{yz}&-{zx}&-{xy}& &&&& &&&
            \\
            \overline {yz} & 1 & \bar y & \bar z &
            -1 & -1 & - x & -x & &&&& &&&
            \\
            \overline {zx} & \bar x & 1 & \bar z & -1 & -y & -1 & -y &
            &&&& &&&
            \\
            \overline {xy} & \bar x & \bar y & 1 & -1 & -z & -z & -1 & &&&& &&&
            \\
            \hline
            &&&& &&&&  1&1&1&1& 1&{yz}&{zx}&{xy} 
            \\
            &&&& &&&&  \overline {yz} & 1 & \bar y & \bar z &
            1 & 1 &  x & x 
            \\
            &&&& &&&&  \overline {zx} & \bar x & 1 & \bar z & 1 & y & 1 & y 
            \\
            &&&& &&&&  \overline {xy} & \bar x & \bar y & 1 & 1 & z & z & 1  
    \end{tabular}
    \right).
\end{align}
If we take the stabilizer change of basis (which mirrors the above plane charge change of basis)
\begin{align}
    u =
    \left(
    \begin{tabular}{C C C C | C C C C }
        1 &&&& &&&
        \\
        -\tfrac 12 &  - \tfrac 12 & \tfrac 12 & \tfrac 12 &&&&
        \\
        - \tfrac 12 & \tfrac 12 & - \tfrac 12 & \tfrac 12 &&&&
        \\
        - \tfrac 12 & \tfrac 12 & \tfrac 12 & - \tfrac 12 &&&&
        \\
        &&&& 2 &&& \\
        &&&& -1 & - \overline {yz} & \overline {zx}  &  \overline {xy} \\
        &&&& -1 & \overline {yz} & - \overline {zx}  &  \overline {xy} \\
        &&&& -1 & \overline {yz} & \overline {zx}  &  - \overline {xy} \\
    \end{tabular}
    \right),
\end{align}
and the elementary symplectic transformation
\begin{align}
    v = 
    \left(
    \begin{tabular}{C C C C | C C C C | C C C C | C C C C}
        2 &&&& &&&& &&&&  &&&
        \\
        2 & 1 &&& && \bar f_z & \bar f_y & &&&& &&&
        \\
        2 && 1 && & \bar f_z && \bar f_x & &&&& &&&
        \\
        2 &&& 1 & & \bar f_y & \bar f_x &&  &&&&  &&&
        \\
        \hline
        2 &  1  &  1 & 1 &  1 & \bar g_x & \bar g_y & \bar g_z & &&&&  &&&
        \\
        2 \overline{yz} &&&& & 2\overline{yz} &&& &&&& &&&
        \\
        2 \overline{zx} &&&& && 2\overline{zx} && &&&& &&&
        \\
        2 \overline{xy} &&&& &&& 2\overline{xy} & &&&& &&&
        \\
        \hline
        &&&& &&&& \tfrac 12 & \tfrac 12 g_x & \tfrac 12 g_y & \tfrac 12 g_z & -1 & - \tfrac 12 & - \tfrac 12 & - \tfrac 12 
        \\
        &&&& &&&& & 1 &&& -1 &&&
        \\
        &&&& &&&& && 1 && -1 &&&
        \\
        &&&& &&&& &&& 1 & -1 &&&
        \\
        \hline
        &&&& &&&& && - \tfrac 12 \overline {yz} f_z &  - \tfrac 12 \overline {yz} f_y & \overline {yz} & \tfrac 12 \overline{yz} && \\
        &&&& &&&& & - \tfrac 12 \overline {zx} f_z &&  - \tfrac 12 \overline {zx} f_x & \overline {zx} && \tfrac 12 \overline{zx} & \\
        &&&& &&&& & - \tfrac 12 \overline {xy} f_y &  - \tfrac 12 \overline {xy} f_x && \overline {xy} &&& \tfrac 12 \overline{xy} \\
    \end{tabular}
    \right),
\end{align}
where $f_\mu = 1 + \mu$, $g_\mu = \nu + \rho$, then
we find
\begin{align}
    u \epsilon v 
    =
    \left(
    \begin{tabular}{C C C C | C C C C | C C C C | C C C C}
        0 &&&& -1 &&&& &&&&  &&&
        \\
        & -\bar d_x &&& & \bar d_{yz} &&& &&&& &&&
        \\
        && -\bar d_y && && \bar d_{zx} && &&&& &&&
        \\
        &&& -\bar d_z & &&& \bar d_{xy} & &&&& &&&
        \\
        &&&& &&&& 1 &&&& 0 &&&
        \\
        &&&& &&&& & - d_{yz} &&&  &- d_x &&
        \\
        &&&& &&&& && - d_{zx} &&  && - d_y &
        \\
        &&&& &&&& &&& - d_{xy} &  &&& - d_z 
    \end{tabular}
    \right),
\end{align}
which is the excitation map of three decoupled $\Z_n$ anisotropic models plus two ancilla qudits per (coarse-grained) unit cell.

\bibliography{bibliography}

\end{document}